\title{\vspace{-.7in}Principal Inertia Components and Applications}
\author{Flavio P. Calmon, Ali Makhdoumi, Muriel M\'edard, \\
Mayank Varia, Mark Christiansen, Ken R. Duffy
\thanks{F. P. Calmon is with the IBM T.J. Watson Research Center, Yorktown Heights, NY. Email: \href{mailto:fdcalmon@us.ibm.com}{{fdcalmon@us.ibm.com}}. A. Makhdoumi and M. M\'edard are with the Massachusetts Institute of Technology. Email: \{\href{mailto:makhdoum@mit.edu}{{makhdoum}}, \href{mailto:medard@mit.edu}{{medard}}\}@mit.edu. M. Varia is with Boston University. Email: \href{mailto:varia@bu.edu}{{varia@bu.edu}}. M. Christiansen is with the Automobile Association, Ireland. Email:  \href{mailto:markchristiansen4224@gmail.com}{markchristiansen4224@gmail.com}. K. R. Duffy is with the Hamilton Institute at Maynooth University. Email: \href{mailto:ken.duffy@nuim.ie}{ken.duffy@nuim.ie}. This paper was presented in part at the 51st Annual Allerton Conference  on Communication, Control, and Computing (2013), the 2014 IEEE Info. Theory Workshop, and the 2015 International Symposium on Info. Theory.
}}
\date{}
\newtheorem{thm}{Theorem}
\newtheorem{lem}{Lemma}
\newtheorem{prop}{Proposition}
\newtheorem{conj}{Conjecture}
\newtheorem{cor}{Corollary}
\newtheorem*{rep@theorem}{\rep@title}
\newcommand{\newreptheorem}[2]{%
\newenvironment{rep#1}[1]{%
 \def\rep@title{#2 \ref{##1}}%
 \begin{rep@theorem}}%
 {\end{rep@theorem}}}
\theoremstyle{definition}
\newtheorem{defn}{Definition} 
\newtheorem{remark}{Remark} 
\newtheorem{example}{Example} 
\newcommand{\etal}{\textit{et al.}~}
\newcommand{\calX}{\mathcal{X}}
\newcommand{\calA}{\mathcal{A}}
\newcommand{\calB}{\mathcal{B}}
\newcommand{\calY}{\mathcal{Y}}
\newcommand{\calJ}{\mathcal{J}}
\newcommand{\calC}{\mathcal{C}}
\newcommand{\calS}{\mathcal{S}}
\newcommand{\calU}{\mathcal{U}}
\newcommand{\calI}{\mathcal{I}}
\newcommand{\calF}{\mathcal{F}}
\newcommand{\calH}{\mathcal{H}}
\newcommand{\calP}{\mathcal{P}}
\newcommand{\calL}{\mathcal{L}}
\newcommand{\bA}{\mathbf{A}}
\newcommand{\bD}{\mathbf{D}}
\newcommand{\bB}{\mathbf{B}}
\newcommand{\bI}{\mathbf{I}}
\newcommand{\bx}{\mathbf{x}}
\newcommand{\bz}{\mathbf{z}}
\newcommand{\bX}{\mathbf{X}}
\renewcommand{\tilde}{\widetilde}
\newcommand{\Adv}{\mathsf{Adv}}
\newcommand{\Xh}{\hat{X}}
\DeclareMathOperator*{\argmin}{\arg\!\min}
\DeclareMathOperator*{\argmax}{\arg\!\max}
\newcommand{\Reals}{\mathbb{R}}
\newcommand{\normX}[2]{\| #1 \|_{#2}}
\newcommand{\normEuc}[1]{\| #1 \|_2}
\newcommand{\ones}{\mathbf{1}}
\newcommand{\defined}{\triangleq}
\newcommand{\Tr}[1]{\mathrm{ tr}\left(#1 \right)}
\newcommand{\diag}[1]{\mathrm{diag}\left( #1 \right)}
\newcommand{\pxy}{p_{X,Y}}
\newcommand{\px}{p_X}
\newcommand{\py}{p_Y}
\newcommand{\pxp}{p_{\hat{X}}}
\newcommand{\pygx}{p_{Y|X}}
\newcommand{\pxhgx}{p_{\Xh|X}}
\newcommand{\qx}{q_X}
\newcommand{\ExpVal}[2]{\mathbb{E}\left[ #2 \right]}
\newcommand{\EV}[2]{\mathbb{E}_{#1}\left[ #2 \right]}
\newcommand{\Pygx}{\mathbf{P}_{Y|X}}
\newcommand{\Pxxp}{\bP_{X,\Xh}}
\newcommand{\Px}{\mathbf{p}_X}
\newcommand{\Pxp}{\mathbf{p}_{\Xh}}
\newcommand{\Py}{\mathbf{p}_Y}
\newcommand{\At}{\tilde{\mathbf{A}}}
\newcommand{\Bt}{\tilde{\mathbf{B}}}
\newcommand{\Ut}{\tilde{\mathbf{U}}}
\newcommand{\Vt}{\mathbf{\tilde{V}}}
\newcommand{\Yt}{\tilde{Y}}
\newcommand{\lambdat}{\tilde{\lambda}}
\newcommand{\Sigmat}{\tilde{\mathbf{\Sigma}}}
\newcommand{\by}{\mathbf{y}}
\newcommand{\blambda}{\pmb{\lambda}}
\newcommand{\blambdat}{\tilde{\pmb{\lambda}}}
\newcommand{\bLambda}{\pmb{\Lambda}}
\newcommand{\Emat}{\mathbf{F}}
\newcommand{\bu}{\mathbf{u}}
\newcommand{\bv}{\mathbf{v}}
\newcommand{\tu}{\tilde{u}}
\newcommand{\tv}{\tilde{v}}
\newcommand{\olU}{\overline{\mathbf{U}}}
\newcommand{\sto}{\mbox{\normalfont s.t.}}
\newcommand{\KFnorm}[2]{\| #1 \|_{#2}}
\newcommand{\ttheta}{\tilde{\theta}}
\newcommand{\EE}[1]{\ExpVal{}{#1}}
\newcommand{\whB}{\widehat{B}}
\newcommand{\xb}{\mathbf{x}}
\newcommand{\yb}{\mathbf{y}}
\newcommand{\fb}{\mathbf{f}}
\newcommand{\gb}{\mathbf{g}}
\newcommand{\bP}{\mathbf{P}}
\newcommand{\eye}{\mathbf{I}}
\newcommand{\bQ}{\mathbf{Q}}
\newcommand{\bU}{\mathbf{U}}
\newcommand{\bSigma}{\mathbf{\Sigma}}
\newcommand{\bV}{\mathbf{V}}
\newcommand{\bT}{\mathbf{T}}
\newcommand{\bbH}{\mathbf{H}}
\newcommand{\suchthat}{\,\mid\,}
\newcommand{\mmse}{\mathsf{mmse}}
\newcommand{\bw}{\mathbf{w}}
\newcommand{\vs}{v^*(p_{S,X})}
\newcommand{\dps}{\delta(p_{S,X})}
\newcommand{\bp}{\mathbf{p}}
\newcommand{\bq}{\mathbf{q}}
\newcommand\independent{\protect\mathpalette{\protect\independenT}{\perp}}
\def\independenT#1#2{\mathrel{\rlap{$#1#2$}\mkern2mu{#1#2}}}
\newcommand{\KC}{\calJ}
\newcommand{\bg}{\mathbf{g}}
\newcommand{\Dx}{\mathbf{D}_X}
\newcommand{\Dy}{\mathbf{D}_Y}
\definecolor{light-gray}{gray}{.90}
\definecolor{aliceblue}{rgb}{0.94, 0.97, 1.0}
\definecolor{airforceblue}{rgb}{0.36, 0.54, 0.66}
\newmdenv[%
  backgroundcolor=light-gray, %
  linecolor=black,
  linewidth =1pt,%
  skipabove = 10pt,%
  skipbelow = 10pt
]{comment}
\newmdenv[%
  backgroundcolor=aliceblue, %
  linewidth = 2pt,%
  skipabove = 10pt,%
  skipbelow = 10pt,
  pstrickssetting={linestyle=dashed,},
  linecolor=airforceblue,
  middlelinewidth=2pt
]{TODO}
\begin{document}

\maketitle
\vspace{-.3in}
\abstract{
We explore properties and applications of the Principal Inertia Components (PICs) between two discrete random variables $X$ and $Y$. The PICs lie in the intersection of information and estimation theory, and provide a fine-grained decomposition of the dependence between $X$ and $Y$. Moreover, the PICs describe which functions of $X$ can or cannot be reliably inferred (in terms of MMSE) given an observation of $Y$. We demonstrate that the PICs play an important role in information theory, and they can be used to characterize  information-theoretic limits of certain estimation problems. In privacy settings, we prove that the PICs are related to  fundamental limits of perfect privacy.
}

\setcounter{tocdepth}{2}

\vspace{-.1in}
{
\small
\tableofcontents
}

\section{Introduction}

    



There is a fundamental limit to how much we can learn from data. The problem of determining which functions of a hidden variable can or cannot be estimated from a noisy observation is at the heart of estimation, statistical learning theory \cite{abu-mostafa_learning_2012}, and numerous other applications of interest. For example, one of the main goals of prediction is to determine a function of a hidden variable that can be reliably inferred from the output of a system.  

Privacy and security applications are concerned with the inverse problem: guaranteeing that a certain set of functions of a hidden variable \textit{cannot} be reliably estimated given the output of a system. Examples of such functions are the identity of an individual whose information is contained in a supposedly anonymous dataset  \cite{Sweeney-2002}, sensitive information of a user who joined a database  \cite{Dwork-McSherry-2006,dwork_differential_2006}, the political preference of a set of users who disclosed their movie ratings \cite{salamatian2013hide,salamatian2014managing,bhamidipati_priview:_2015}, among others. On the one hand, estimation methods attempt to extract as much information as possible from data. On the other hand, privacy-assuring systems 
seek to minimize the information about a secret variable that can be reliably estimated from 
disclosed data. The relationship between privacy and estimation is similar to the one noted by Shannon between cryptography and communication \cite{shannon_communication_1949}: they are connected fields, but with different goals.
As illustrated in Fig. \ref{fig:PrivEst}, estimation and privacy are concerned with the same fundamental problem, and can be simultaneously studied through an information-theoretic lens.

In this paper, we  discuss information-theoretic tools to address challenges in privacy, security and estimation. By studying fundamental models that are common to these fields, we derive information-theoretic metrics and associated results that simultaneously (i) delineate the fundamental limits of estimation and (ii) characterize the security properties of privacy-assuring systems. 

We focus on the question that is central to privacy and estimation (illustrated in Fig. \ref{fig:PrivEst}): How well can a random variable $S$, that is correlated with a hidden variable $X$, be estimated given an observation of $Y$? The information-theoretic metrics presented here  seek to quantify properties of the random mapping from $X$ to $Y$ that can  be translated into bounds on the  error of estimating $S$ given an observation of $Y$. These bounds, which are often at the heart of information-theoretic converse proofs \cite{cover_elements_2006}, provide universal, algorithm-independent guarantees on what can (or cannot) be learned from $Y$. With a characterization of these bounds in hand, we study properties of random mappings that seek to achieve privacy in terms of how well an adversary can estimate a secret $S$ given the output of the mapping $Y$.

The results in this paper are situated at the intersection of estimation, privacy and security. We derive a set of general sharp bounds on how well certain classes of functions of a hidden variable can(not) be estimated from a noisy observation.  The bounds are expressed in terms of different information metrics of the joint distribution of the hidden and observed variables, and provide converse (negative) results: If an information metric is small, then not only the hidden variable cannot be reliably estimated, but also any non-trivial function of the hidden variable cannot be inferred with probability of error or mean-squared error smaller than a certain threshold. 

These results are applicable to both estimation and privacy. For estimation and statistical learning theory, they shed light on the fundamental limits of learning from noisy data, and can help guide the design of practical learning algorithms. In particular, the converse bounds can be used to derive minimax lower bounds (the same way Fano-style inequalities are used \cite{duchi2013distance}). Furthermore, as illustrated in this paper, the proposed bounds are  useful for creating security and privacy metrics, for  characterizing  the inherent trade-off between privacy and utility in statistical data disclosure problems and for studying the fundamental limits of perfect privacy. The tools used to derive the converse bounds are based on a set of statistics known as the Principal Inertia Components (PICs).

\begin{figure}[!tb]
  \begin{center}
    \includegraphics[scale=0.4]{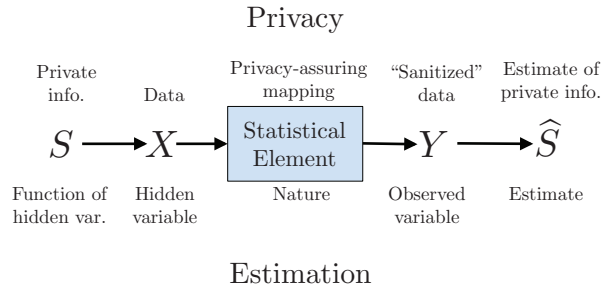}
  \end{center}
  \caption{Problem central to both estimation and privacy.}
  \label{fig:PrivEst}
\end{figure}

\subsection{Principal Inertia Components}
\label{sec:generalPICs}
The PICs provide a fine-grained decomposition of the dependence between two random variables. Well-studied statistical methods for estimating the PICs \cite{greenacre_theory_1984,breiman_estimating_1985} can lead to results  on the (im)possibility of estimating a large classes of functions by using bounds based on the PICs and standard statistical tests. We  show how PICs  can be used to characterize the information-theoretic limits of certain estimation problems. The PICs generalize other measures that are used in information theory, such as maximal correlation \cite{renyi1959measures} and $\chi^2$-dependence \cite{csiszar2004information}. The largest and smallest PIC play an important role in estimation and privacy (discussed in Sections \ref{sec:boundEP} and \ref{chap:PIC_Priv}). We also study properties of the sum of the $k$ largest principal inertia components. Below we list a few key properties of the PICs studied in this paper.

\begin{enumerate}
\item \textbf{Overview of the PICs:} We present an overview of the PICs and their different interpretations, summarized in Theorem \ref{thm:PIC_Charac}.  For two discrete random variables $X$ and $Y$, we denote the $k$ largest PICs by $\lambda_1(X;Y),\lambda_2(X;Y),\dots,\lambda_k(X;Y).$

\item \textbf{Sum of the PICs:} We propose a  measure of dependence termed  $k$-correlation which is defined as the sum of the $k$ largest PICs, i.e., 
 $ \calJ_k(X;Y) \defined \sum_{i=1}^k \lambda_i(X;Y)$.
This metric satisfies two key
properties: (i) convexity in $\pygx$ (Theorem \ref{thm:convex});
  (ii) Data Processing Inequality (Theorem \ref{lem:dataProc}). The latter is also satisfied by
      $\lambda_1(X;Y),\dots,\lambda_d(X;Y)$ individually, where  $d=\min\{|\calX|,|\calY|\}-1$. Both maximal correlation and the $\chi^2$-dependence between $X$ and $Y$ are special cases of $k$-correlation, with $\calJ_1(X;Y)=\rho_m(X;Y)^2$ and $\calJ_d(X;Y)=\chi^2(X;Y)$ (cf. notation in Section \ref{sec:notation}).

\item \textbf{Largest PIC}
 The largest PIC satisfies $\lambda_1(X;Y) = \rho_m(X;Y)^2$, where $\rho_m(X;Y)$ is the
 \textit{maximal correlation} between $X$ and $Y$, defined as \cite{renyi_measures_1959}
\begin{align}
\label{eq:maxcorrdef}
  \rho_m(X;Y) &\defined \max_{\substack{\EE{f(X)}=\EE{g(Y)}=0\\ \EE{f(X)^2}=\EE{g(Y)^2}=1}}\ExpVal{}{f(X)g(Y)}.
\end{align}
We show that both the probability of error and the minimum mean-squared error (MMSE) of estimating any function of a hidden variable $X$ given an observation $Y$ are closely related to the largest PIC.

By making use of the fact that the PICs satisfy the Data Processing Inequality (DPI), we are able to derive a family of bounds for the smallest average error of estimating $X$ having observed $Y$ $P_e(X|Y)$ (cf. \eqref{eq:PeDef} and notation in Section \ref{sec:notation}) in terms of the marginal distribution of $X$, $\px$, and $\lambda_1(X;Y),\dots,\lambda_d(X;Y)$, described in Theorem \ref{thm:Bound}. This result sheds light on the relationship of $P_e(X|Y)$ with   the PICs.

One immediate consequence of Theorem \ref{thm:Bound} is a useful scaling law for $P_e(X|Y)$ in terms of the largest PIC, the maximal correlation. Let $X=1$ be the most likely outcome for $X$. Corollary \ref{cor:coolBounds} proves that the advantage an adversary (who has access to $Y$) has of guessing $X$ over  guessing the most likely outcome  $X=1$ satisfies \vspace{-.1in}
\begin{equation*} 
  \Adv(X|Y) \defined \left|1-\px(1)-P_e(X|Y)\right|\leq O\left(\sqrt{\lambda_1(X;Y)}\right).
\end{equation*}

\item \textbf{Smallest PIC}
We show that the smallest PIC determines when perfect privacy, defined in Section \ref{chap:PIC_Priv}, can be achieved with non-trivial utility in the model depicted in Fig. \ref{fig:PrivEst}. More specifically, perfect privacy can be achieved with non-trivial utility if and only if the smallest PIC is 0 (Theorem \ref{thm:ntrivial}).

\end{enumerate}

\subsection{Organization of the Paper}

This paper is organized as follows. The rest of this section introduces notation and discusses related work. In Section \ref{chap:PICs}, we present the PICs and their multiple characterizations (Theorem \ref{thm:PIC_Charac}). We also introduce the definition of $k$-correlation, and demonstrate several properties of both $k$-correlation and, more broadly, the PICs, including convexity and the DPI. In Section \ref{chap:PIC_IT}, we apply the PICs to problems in information theory. In Section \ref{sec:boundEP}, we derive bounds on error probability and other estimation-theoretic results based on the PICs. Finally, in Section \ref{chap:PIC_Priv}, we demonstrate how the PICs play an important role in privacy and can be used for determining privacy-assuring mappings. We first summarize the main results obtained by applying the PICs to information theory, estimation theory and privacy.

\subsubsection*{Applications to Information Theory}

We present several distinct applications of the PICs to information theory. In Section \ref{sec:functions}, we demonstrate that the PICs correspond to the singular values of certain channel transformation matrices, and there effect on input distributions to the channel bear an interpretation similar to that of filter coefficients in a linear filter \cite{oppenheim_discrete-time_2009}. 
This is illustrated through an example in binary additive noise channels, where we argue that the binary symmetric channel is akin to a low-pass filter. We show how the PICs, and particularly the largest PIC, can be used to derive  bounds on information metrics between one-bit functions of a hidden variable $X$ and a correlated observation $Y$. We apply these results to the ``one-bit function conjecture'' \cite{kumar_which_2013}
We  do not solve this conjecture here. Nevertheless, we present further evidence for its validity, and introduce another conjecture based on our results.

The new conjecture (cf. Conjecture \ref{conj}) generalizes the ``one-bit function conjecture''. It states that, given a symmetric distribution $\pxy$, if we generate a new distribution $q_{X,Y}$ by making all the PICs of $\pxy$ equal to the largest one, then the new distribution is more informative about bits of $X$. By more informative, we mean that, for any 1-bit function $b$, $I(b(X);Y)$ is larger under $q_{X,Y}$ than  under $\pxy$. Indeed, from an estimation-theoretic perspective, increasing the PICs imply that any function of $X$ can be estimated with smaller MMSE when considering $q_{X,Y}$ than $\pxy$. Furthermore, in this case, we show that $q_{X,Y}$ is a $q$-ary symmetric channel. This conjecture, if proven, would imply as a corollary the original one-bit function conjecture.

We do show that our results on the PICs can be used to resolve the one-bit function conjecture in a specific setting in Section \ref{sec:estimators}. Instead of considering the mutual information between $b(X)$ and $Y$, we study the mutual information between $b(X)$ and a one-bit estimator $\hat{b}(Y)$. We show in Theorem \ref{thm:Estimators} that, when $\hat{b}(Y)$ is an unbiased estimator, the information that $\hat{b}(Y)$ carries about $b(X)$ can be upper-bounded for a range of dependence metrics (e.g. mutual information). This result also leads to bounds on estimation error probability. 

\subsubsection*{Applications to Estimation Theory}
In Section \ref{sec:boundEP}, we derive converse bounds on estimation error based on the PICs. In particular, we provide lower bounds on (i) the probability of correctly guessing  a hidden variable $X$ given an observation $Y$ and (ii) on the MMSE of estimating $X$ given $Y$. These results are stated in terms of the PICs between $X$ and $Y$, and provide  algorithm-independent bounds on estimation. We also extend these bounds to the functional setting, and show that the advantage over a random guess of correctly estimating a function of $X$ given an observation of $Y$ is upper-bounded by the largest PIC between $X$ and $Y$.
More specifically, we propose a family of lower bounds for the   error probability of estimating $X$ given $Y$ based on the PICs of $\pxy$ and the marginal distribution of $X$ in Theorems \ref{thm:Bound} and \ref{thm:AdvPeM}. We also extend these bounds for the probability of correctly estimating a function of the hidden variable $X$ given an observation of $Y$. 

These results are based on a more general framework for deriving bounds on error probability, discussed in Section \ref{sec:convexbounds}. 
At the heart of this framework are rate-distortion (test-channel) formulations that allow  bounds  on information metrics to be translated into bounds on estimation. These formulations, in turn, are based on convex programs that minimize the average estimation error over all possible distributions that satisfy a bound on a given information metric. The solution of such convex programs are called the error-rate functions. We study extremal properties of error-rate function and, by revisiting a result by Ahlswede \cite{ahlswede_extremal_1990}, we show how to extend the error-rate function to quantify not only the smallest average error of estimating a hidden variable, but also of estimating any function of a hidden variable. 

\subsubsection*{Applications to Privacy}

When referring to privacy in this paper, we consider the setting studied by Calmon and Fawaz in \cite{du2012privacy}. Using  Fig. \ref{fig:PrivEst} as  reference, we study the problem of disclosing data  $X$ to a third-party in order to derive some utility based on $X$. At the same time, some information correlated with $X$, denoted by $S$, is supposed to remain private. The engineering goal is to create a random mapping, called the privacy-assuring mapping, that transforms $X$ into a new data $Y$ that achieves a certain target utility, while minimizing the information revealed about $S$. For example, $X$ can represent  movie ratings that a user intends to disclose to a third-party in order to receive movie recommendations \cite{salamatian2013hide,salamatian2014managing,zhang_priview,bhamidipati_priview:_2015}. At the same time, the user may want to keep  her political preference $S$ secret. We allow the user to distort movie ratings in her data $X$ in order to generate a new data $Y$. The goal would then be  to find privacy-assuring mappings that minimize the number of distorted entries in $Y$ given a privacy constraint (e.g. the third-party cannot guess $S$ with significant advantage over a random guess). In general, $X$ is not restricted to be the data of an individual user, and can also represent multidimensional data derived from different sources. 

We present  necessary and sufficient conditions for achieving perfect privacy while disclosing a non-trivial amount of useful information when both $S$ and $X$ have finite support $\calS$ and $\calX$, respectively. We prove that the smallest PIC of $p_{S,X}$ plays a central role for achieving perfect privacy (i.e. $I(S;Y)=0$): If $|\calX|\leq |\calS|$, then perfect privacy is achievable with $I(X;Y)>0$ if and only if the smallest PIC of $p_{S,X}$ is 0. Since $I(S;Y)=0$  if and only if $S\independent Y$, this fundamental result holds for any privacy metric where statistical independence implies perfect privacy. We also provide an explicit lower bound for the amount of useful information that can be released  while guaranteeing perfect privacy, and demonstrate how to construct $p_{Y|X}$ in order to achieve  this bound.

In addition, we derive general bounds for the minimum amount of disclosed private information $I(S;Y)$ given that, on average, at least $t$ bits of useful information are revealed, i.e. $I(X;Y)\geq t$. These bounds are sharp, and delimit the achievable privacy-utility region for the considered setting. Adopting an analysis related to the information bottleneck \cite{tishby2000information} and for characterizing linear contraction coefficients in strong DPIs  in \cite{ahlswede_spreading_1976,anantharam_maximal_2013}, we determine the smallest achievable ratio between disclosed private and useful information, i.e.  $\inf_{p_{Y|X}}I(S;Y)/I(X;Y)$. We prove that this value is upper-bounded by the smallest PIC, and is zero if and only if the smallest PIC is zero. In this case, we present an explicit construction of a privacy-assuring mapping that discloses a non-trivial amount of useful information while guaranteeing perfect privacy. We also show that when the data is composed by multiple i.i.d. samples $(S^n,X^n)$, the smallest PIC decreases exponentially in $n$. Consequently, as the number of samples $n$ increases, we can achieve a more favorable trade-off between disclosing useful and private information. Finally, we motivate potential future applications of the PICs as a design driver for privacy assuring mappings in our final remarks in Section \ref{sec:finalremarks}.

\subsection{Notation}
\label{sec:notation}



Capital letters (e.g. $X$ and $Y$) are used to denote random variables, and calligraphic letters (e.g. $\calX$ and $\calY$) denote sets. The  exceptions are (i) $\calI$, which will be used in Section \ref{sec:boundEP} to denote a non-specified  measure of dependence, and (ii) $T$, which will denote the conditional expectation operator (defined below). The support set of  random variables $X$ and $Y$ are denoted by $\calX$ and $\calY$, respectively. If $X$ and $Y$ have finite support sets $ |\calX|< \infty$ and $|\calY|<\infty $, then we denote the joint probability mass function (pmf) of $X$ and $Y$ as $\pxy$, the conditional pmf of $Y$ given $X$ as $\pygx$, and the marginal distributions of $X$ and $Y$ as $\px$ and $\py$, respectively. We denote the fact that  $X$ is distributed according to $\px$ by $X\sim\px$.  When $p_{X,Y,Z}(x,y,z) =\px(x)\pygx(y|x)p_{Z|Y}(z|y)$ (i.e. $X,Y,Z$ form a Markov chain), we write $X\rightarrow Y \rightarrow Z$. We denote independence of two random variables $X$ and $Y$ by $X\independent Y$.

For positive integers $j,k,n$, $j\leq k$, we define $[n]\defined  \{1,\dots,n\}$ and $[j,k]\defined \{j,j+1,\dots,k\}$. For any $x \in \mathbb{R}$, $[x]^+$ is defined as $x$ if $x \ge 0$ and $0$ otherwise. Matrices are denoted in bold capital letters (e.g. $\bX$) and vectors in bold lower-case letters (e.g. $\bx$). The $(i,j)$-th entry of a matrix $\bX$ is given by $[\bX]_{i,j}$. Furthermore, for $\bx\in \Reals^n$, we let $\bx=(x_1,\dots,x_n)$. We denote by $\ones$ the vector with all entries equal to 1, and the dimension of $\ones$ will be clear from the context. The singular values of a  matrix $\bX \in \Reals^{m\times n}$ are denoted by $\sigma_1(\bX),\dots,\sigma_m(\bX)$. For a matrix $\bX$, we denote its $k$-th Ky Fan norm \cite[Eq. (7.4.8.1)]{horn_matrix_2012} by $\KFnorm{\bX}{k} \defined \sum_{i=1}^k \sigma_i(\bX)$.  


For a random variable $X$ with discrete support  and $X\sim \px$, the entropy of $X$ is given by \[H(X)\defined-\EE{\log \left( \px(X)\right)}.\] If $Y$ has a discrete support set and $X,Y\sim \pxy$,  the mutual information between $X$ and $Y$ is \[I(X;Y)\defined \EE{\log\left(\frac{\pxy(X,Y)}{\px(X)\py(Y)} \right)} .\] The basis of the logarithm will be clear from the context. The $\chi^2$-information between $X$ and $Y$ is \[\chi^2(X;Y)\defined \EE{\left(\frac{\pxy(X,Y)}{\px(X)\py(Y)} \right)}-1 .\] 
We denote the binary entropy function $h_b:[0,1]\to\Reals$ as 
\begin{equation}
\label{eq:defhb}
h_b(x)\defined-x\log  x-(1-x)\log (1-x),
\end{equation}
where, as usual, $0\log 0 \defined 0$. 

Let $X$ and $Y$ be discrete random variables with  finite support sets $\calX = [m]$ and $\calY = [n]$, respectively. Then we define the joint distribution matrix  $\bP$  as an $m\times n$ matrix with $[\bP]_{i,j}\defined p_{X,Y}(i,j)$. We denote by $\Px$ (respectively, $\Py$) the vector with $i$-th entry equal to $p_X(i)$ (resp. $p_Y(i)$). $\mathbf{D}_X=\diag{\Px}$ and $\mathbf{D}_Y=\diag{\Py}$ are matrices with diagonal entries equal to $\Px$ and $\Py$, respectively, and all other entries equal to 0. The matrix $\Pygx\in \Reals^{m\times n}$ is defined as $[\Pygx]_{i,j}\defined p_{Y|X}(j|i)$. Note that  $\bP=\bD_X\Pygx$.

For any real-valued random variable $X$, we denote the $L_p$-norm of $X$ as \[\|X\|_p\defined \left(\EE{|X|^p}\right)^{1/p}. \]
The set of all functions that when composed with a random variable $X$ with distribution $\px$ result in an  $L_2$-norm smaller than 1 is given by
\begin{equation}
    \calL_2(\px)\defined \left\{f:\calX\to \Reals\mid  \|f(X)\|_2\leq1
  \right\}.
\end{equation}

The operators $T_X:\calL_2(\py)\to\calL_2(\px)$ and $T_Y:\calL_2(\px)\to\calL_2(\py)$  denote conditional expectation, where  
\begin{equation}
(T_X g)(x)= \EE{g(Y)|X=x}\mbox{ and }(T_Y f)(y)=\EE{f(X)|Y=y}, \label{eq:Tdefn}
\end{equation}
 respectively. 
 Observe that $T_X$ and $T_Y$ are adjoint operators.

For $X$ and $Y$ with discrete support sets, we denote  by $P_e(X|Y)$ the smallest average probability of error of estimating $X$ given an observation of $Y$, defined as
 \begin{equation}
 \label{eq:PeDef}
        P_e(X|Y) = \min_{X\to Y\to \Xh} \Pr\{X\neq \Xh\},
 \end{equation}
where the minimum is taken over all distributions $p_{\Xh|Y}$ such that $X\to Y\to \Xh$. The advantage of correctly estimating $X$ given an observation of $Y$ over a random guess is defined as:
 \begin{equation}
        \Adv(X|Y) = 1-P_e(X|Y)-\max_{x\in \calX} p_X(x).
        \label{eq:advGuess}
 \end{equation}

The MMSE of estimating $X$ from an observation of $Y$ is given by 
\begin{equation*}
  \mmse(X|Y)\defined\min_{X\rightarrow Y\rightarrow \hat{X}}
  \EE{(X-\hat{X})^2},  
\end{equation*}
where the minimum is  taken over all distributions $p_{\Xh|Y}$ such that $X\to Y\to \Xh$. Note that, from Jensen's inequality, it is sufficient to consider $\Xh$ a deterministic mapping of $Y$. For any $X\to Y\to g(Y)$ with $\|g(Y)\|_2=\alpha$ and $\|X\|_2=\sigma$
\begin{align*}
  \EE{(X-g(Y))^2 } 
  &\geq \sigma^2 +\alpha^2 -2\alpha \|\EE{X|Y}\|_2,
\end{align*}    
with equality if and only if $g(Y)$ is proportional to $ \EE{X|Y}$. Minimizing the right-hand side over all $\alpha$, we find that the MMSE estimator of $X$ from $Y$ is
$g(y)=\EE{X|Y=y}$, and
\begin{equation}
  \label{eq:mmse}  
  \mmse(X|Y) = \|X\|_2^2 - \|\EE{X|Y}\|_2^2.
\end{equation}

For a given joint distribution $\pxy$ and corresponding joint distribution matrix $\bP$, the set of all vectors contained in the unit cube in $\Reals^n$ that satisfy $\normX{\bP\bx}{1}=a $ is given by 
\begin{equation} 
\label{eq:Cdef} 
\calC^n(a,\bP)\defined \{\xb\in \Reals^n|0\leq x_i \leq 1, \normX{\bP\bx}{1}=a\}. 
\end{equation} 
%
We represent the set of all $m\times n$ probability distribution matrices  by $\calP_{m,n}$.

For $x^n\in \{-1,1\}^n$ and $\calS\subseteq [n]$, 
\begin{equation}
\label{eq:defchi}
\chi_\calS(x^n)\defined\prod_{i\in \calS}x_i
\end{equation}
 (we consider $\chi_\emptyset(x) = 1$). For $y^n\in \{-1,1\}^n$, $a^n=x^n\oplus y^n$ is the vector resulting from the entrywise product of $x^n$ and $y^n$, i.e. $a_i = x_iy_i$, $i\in [n]$.

Given two probability distributions $p_{X}$ and $q_X$ and $f(t)$ a smooth convex function defined for $t>0$ with $f(1)=0$, the $f$-divergence is defined as \cite{_information_2004}
    \begin{equation}
        D_f(p_X||q_X) \defined \sum_x q_{X}(x) f \left( \frac{p_X(x)}{q_X(x)}
        \right).
    \end{equation}
The $f$-information is given by
    \begin{equation}
    \label{eq:defn_finf}
        I_f(X;Y)\defined D_f(p_{X,Y}||p_Xp_Y).
    \end{equation}
When $f(x)=x\log(x)$, then $I_f(X;Y)=I(X;Y)$. A study of information metrics related to $f$-information  was given in \cite{polyanskiy2010arimoto} in the context of channel coding converses.

\subsection{Related Work}
\label{sec:PIC_Related}
\vspace{-.05in}

The joint distribution matrix $\bP$ can be viewed as a contingency table and decomposed  using standard techniques from correspondence analysis \cite{greenacre_correspondence_2007,greenacre_theory_1984}. For an overview of correspondence analysis, we refer the reader to \cite{greenacre_geometric_1987}. The term ``principal inertia components'', used here, is borrowed from the correspondence analysis literature \cite{greenacre_theory_1984}. However, the study of the PICs of the joint distribution of two random variables or, equivalently, the spectrum of the conditional expectation operator, predates correspondence analysis, and goes back to the work of Hirschfeld \cite{hirschfeld_connection_1935}, Gebelein \cite{gebelein_statistische_1941}, Sarmanov \cite{sarmanov1962maximum}  and R\'enyi \cite{renyi_measures_1959}, having also appeared in the work of Witsenhausen \cite{witsenhausen_sequences_1975} and Ahlswede and G\'acs \cite{ahlswede_spreading_1976}. The PICs are also related to strong DPIs and contraction coefficients, being recently investigated by Anantharam \etal \cite{anantharam_maximal_2013}, Polyanskiy \cite{polyanskiy_hypothesis_2012}, Raginsky \cite{raginsky_logarithmic_2013}, Calmon \etal \cite{calmon_bounds_2013}, Makur and Zheng \cite{makur_bounds_2015}, among others. Recently, Liu \etal \cite{liu_brascamp-lieb_2016} provided a unified perspective on several functional inequalities used in the study of strong DPIs and hypercontractivity. The PICs also play a role in Euclidean Information Theory \cite{huang_euclidean_2015}, since they related to $\chi^2$-divergence and, consequently, to local approximations of mutual information and related  measures.

The largest principal inertia component is equal to $\rho_m(X;Y)^2$, where $\rho_m(X;Y)$ is the \textit{maximal correlation} between $X$ and $Y$. Maximal correlation has been widely studied in the information theory and statistics literature (e.g \cite{sarmanov1962maximum,renyi_measures_1959}). Ahslwede and  G\'acs studied maximal correlation in the context of contraction coefficients in strong data processing inequalities \cite{ahlswede_spreading_1976}, and more recently Anantharam \etal presented in \cite{anantharam_maximal_2013} an overview of different characterizations of maximal correlation, as well as its application in information theory. Estimating the maximal correlation is also the goal of the Alternating Conditional Expectation (ACE) algorithm introduced by Breiman and Friedman \cite{breiman_estimating_1985},  further analyzed by Buja \cite{buja_remarks_1990}, and recently investigated in \cite{makur2015efficient}. 

The DPI for the PICs was shown by Kang and Ulukus in \cite[Theorem 2]{kang_new_2011} in a different setting than the one considered here. Kang and Ulukus made use of the decomposition of the joint distribution matrix  to derive outer bounds for the rate-distortion region achievable in certain distributed source and channel coding problems.

Lower bounds on the average estimation error can be found using Fano-style inequalities. Recently, Guntuboyina \etal (\cite{guntuboyina_lower_2011,guntuboyina_sharp_2013}) presented a family of sharp bounds for the minmax risk in estimation problems involving general $f$-divergences. These bounds generalize Fano's inequality and, under certain assumptions,  can be extended in order to lower bound $P_e(X|Y)$.

Most information-theoretic approaches for estimating or communicating functions of a random variable are concerned with properties of specific functions  given i.i.d. samples of the hidden variable $X$, such as in the functional compression literature \cite{doshi_functional_2010,orlitsky_coding_2001}. These results are rate-based and asymptotic, and do not immediately extend to the case where the function $f(X)$ can be an arbitrary member of a class of functions, and only a single observation is available.

More recently,  Kumar and Courtade \cite{kumar_which_2013} investigated Boolean functions in an information-theoretic context. In particular, they analyzed which is the most informative (in terms of mutual information) 1-bit function for the case where $X$ is composed by $n$ i.i.d. Bernoulli(1/2) random variables, and $Y$ is the result of passing $X$ through a discrete memoryless binary symmetric channel. Even in this simple case, determining the most informative function seems to be non-trivial. Further investigations of this problem was done in \cite{kindler_remarks_2015,anantharam_hypercontractivity_2013,ordentlich_improved_2015,chandar_most_2014}. In particular, \cite{kindler_remarks_2015} studies a related problem in a continuous setting by considering that $X$ and $Y$ are Gaussian random vectors. Recently, Samorodnitsky \cite{samorodnitsky_most_2015} presented a proof of the conjecture in the high noise regime.

Information-theoretic formulations for privacy  have appeared in \cite{reed_information_1973,rebollo-monedero_t-closeness-like_2010,sankar_utility-privacy_2013,tandon_discriminatory_2013,evfimievski_limiting_2003}. For an overview, we refer the reader to \cite{du2012privacy,sankar_utility-privacy_2013} and the references therein. The privacy against statistical inference framework considered here was further studied in \cite{salamatian2013hide, makhdoumi2013privacy, salamatian2014managing}. The results presented in this paper are closely connected to the study of hypercontractivity coefficients and strong data processing results, such as in \cite{ahlswede_spreading_1976,anantharam_maximal_2013,polyanskiy_dissipation_2014,polyanskiy_hypothesis_2012,raginsky_logarithmic_2013}. 
PIC-based analysis were used in the context of security in \cite{li_maximal_2014,calmon2014allerton}. Extremal properties of privacy were also investigated in \cite{chakraborty_protecting_2013,asoodeh_notes_2014}, and in particular \cite{asoodeh_information_2015} builds upon some of the results introduced here. For more details on designing privacy-assuring mappings and applications with real-world data, we refer the reader to \cite{du2012privacy,salamatian2013hide,salamatian2014managing,Makhdoumi2014funnel,zhang_priview,bhamidipati_priview:_2015}.

We note that the privacy against statistical inference setting is related to differential privacy \cite{dwork_differential_2006,Dwork-McSherry-2006}. In the classic differential privacy setting, the output of a statistical query over a database is masked against small perturbations of the data contained in the database.  Assuming this centralized statistical database setting, the private variable $S$ can represent an individual user's entry to the database, and the variable  $X$ the output of a query over the database. Unlike in differential privacy, here we consider an additional distortion constraint, which can be chosen according to the application at hand. In the privacy funnel setting \cite{Makhdoumi2014funnel}, the distortion constraint is given in terms of the mutual information between $X$ and the perturbed query output $Y$. Connections between differential privacy and the privacy setting depicted in Fig. \ref{fig:PrivEst} as well as connections between differential privacy and PICs are studied in \cite{du2012privacy, ISITForgot}.


\section{Principal Inertia Components}
\label{chap:PICs}




We introduce in this section the Principal Inertia Components (PICs) of the joint distribution of two random variables $X$ and $Y$. The PICs provide a fine-grained decomposition of the statistical dependence between $X$ and $Y$, and are dependence measures that lie in the intersection of information and estimation theory. The PICs possess several desirable information-theoretic properties (e.g. satisfy the DPI, convexity, tensorization, etc.), and describe which functions of $X$ can or cannot be reliably inferred (in terms of MMSE) given an observation of $Y$. The latter interpretation is discussed in more detail in Section \ref{sec:boundEP}. 

\subsection{A Geometric Interpretation of the PICs}
\label{sec:PIC_Geo}
We give an intuitive geometric interpretation of the PICs before presenting their formal definition in the next section. Let  $X$ and $Y$ be related through a conditional distribution (channel), denoted by $p_{Y|X}$. For each $y\in \calY$, $p_{X|Y}(\cdot|y)$ will be a vector on the $|\calX|$-dimensional simplex, and the position of these vectors on the simplex will determine the nature of the relationship between $X$ and $Y$ (Fig. \ref{fig:triangle}).  If $p_{X|Y}$ is fixed, what can be learned about $X$ given an observation of $Y$, or the degree of accuracy of what can be inferred about $X$ \textit{a posteriori}, will  then depend on the marginal distribution $p_Y$. The value $p_Y(y)$, in turn, ponderates the corresponding vector $p_{X|Y}(\cdot|y)$ akin to a mass. As a simple example, if $|\calX|=|\calY|$ and the vectors $p_{X|Y}(\cdot|y)$ are located on distinct corners of the simplex, then $X$ can be perfectly learned from $Y$. As another example, assume that the vectors $p_{X|Y}(\cdot|y)$ can be grouped into two clusters located near opposite corners of the simplex. If the sum of the masses induced by $p_Y$ for each cluster is approximately $1/2$, then one may expect to reliably infer on the order of 1 unbiased bit of $X$ from an observation of $Y$.

\begin{figure}[!tb]
  \begin{center}
    \includegraphics[scale=0.4]{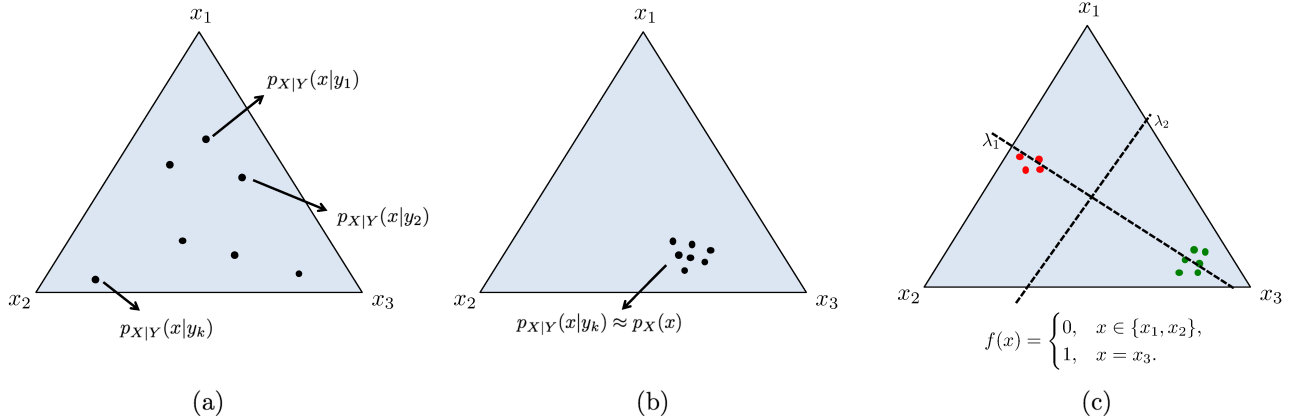}
  \end{center}
  \caption{Geometric interpretation of the PICs for $\calX = \{x_1,x_2,x_3\}$ and $X$ uniformly distributed. In (a), each point on the simplex corresponds to a posterior distribution $p_{X|Y}(\cdot|y)$ induced on $\calX$ by an observation of $Y=y_k$. If all the posterior distribution points are close together (b), then $X$ and $Y$ are approximately independent. If these points are far apart (c), then there may exist a function of $X$ that can be approximately reliably estimated given an observation of $Y$ (in this case, a binary function). The PICs can be intuitively understood as a measure of inertia of the posterior distribution vectors on the simplex.}
  \label{fig:triangle}
\end{figure}

The above discussion naturally leads to considering the use of techniques borrowed from classical mechanics.  For a given inertial frame of reference, the mechanical properties of a collection of distributed point masses can be characterized by the moments of inertia of the system. The moments of inertia measure how the weight of the point masses is distributed around the center of mass. An analogous metric exists for the distribution of the vectors $p_{X|Y}$ and masses $p_Y$ in the simplex, and it is the subject of study of a branch of applied statistics called \textit{correspondence analysis} (\cite{greenacre_theory_1984,greenacre_geometric_1987}). In correspondence analysis, the joint distribution  $p_{X,Y}$  is decomposed in terms of the PICs, which, in some sense, are analogous to the moments of inertia of a collection of point masses. 
For more related literature, we refer the reader back to Section \ref{sec:PIC_Related}.

\subsection{Definition and Characterizations of the PICs}
\label{sec:charac}
We start with the definition of principal inertia components. In this paper we focus on the discrete case, since two of our main goals are  (i) derive lower bounds on average estimation error probability and (ii) apply these results to privacy, where private data is often categorical. In addition, tools from correspondence analysis \cite{greenacre_correspondence_2007} can be used for estimating the PICs in the discrete setting. Nevertheless, the definition below is not limited to discrete random variables, and can be directly extended to general probability measures under compactness of the operator $T_XT_Y$ (cf. \cite[Section 3]{witsenhausen_sequences_1975}).

\begin{defn}
\label{def:PIC_4}
Let $X$ and $Y$ be random variables with support sets $\calX$ and $\calY$, respectively, and joint distribution $p_{X,Y}$. In addition, let $f_0:\calX\to \Reals$ and $g_0:\calY\to \Reals$ be the constant functions $f_0(x)=1$ and $g_0(y)=1$. For $k\in\mathbb{Z}_+$, we (recursively) define
\begin{align} 
\lambda_k(X;Y) = \max\Big\{\EE{f(X)g(Y)}^2\Big|& f\in \calL_2(p_X), g\in
      \calL_2(p_Y),\EE{f(X)f_j(X)}=0,\nonumber \\
      &\EE{g(Y)g_j(Y)}=0, j\in\{0,\dots,k-1\} \Big\},    
\end{align}
where
\begin{align}
             (f_k,g_k) \defined \argmax\Big\{\EE{f(X)g(Y)}^2\Big|& f\in \calL_2(p_X), g\in
      \calL_2(p_Y),\EE{f(X)f_j(X)}=0, \nonumber \\
      &\EE{g(Y)g_j(Y)}=0, j\in\{0,\dots,k-1\} \Big\}.  \label{eq:fkmaxcorr} 
\end{align}
The values $\lambda_k(X;Y)$ are
  called the  \textit{principal inertia components} (PICs) of $\pxy$. The functions $f_k$ and $g_k$ are called the \textit{principal functions} of $X$ and $Y$.
\end{defn}

Observe that the PICs satisfy $\lambda_k(X;Y)\leq 1$, since $f_k\in \calL_2(\px)$ $g_k\in \calL_2(p_Y)$ and 
\begin{equation*}
  \EE{f(X)g(Y)}\leq \|f(X)\|_2\|g(Y)\|_2\leq1.
\end{equation*}
Thus, from Definition \ref{def:PIC_4}, $\lambda_{k+1}(X;Y)\leq \lambda_{k}(X;Y)\leq 1$. When both random variables $X$ and $Y$ have a finite support set, we have the following definition.
\begin{defn}
\label{def:Q}
  For $\calX=[m]$ and $\calY=[n]$, let  $\bP\in \Reals^{m\times n}$ be a matrix with
  entries $[\bP]_{i,j}=p_{X,Y}(i,j)$, and $\bD_X\in \Reals^{m\times m}$ and
  $\bD_Y\in \Reals^{n\times n}$ be diagonal matrices with diagonal entries
  $[\bD_X]_{i,i}=p_X(i)$ and $[\bD_Y]_{j,j}=p_Y(j)$, respectively, where $i\in
  [m]$ and $j\in [n]$. We define
  \begin{equation}
    \label{eq:Qdefn}
    \bQ \defined \bD_X^{-1/2}\bP\bD_Y^{-1/2}.
  \end{equation}
  We denote the singular value decomposition of $\bQ$ by $\bQ=\bU\bSigma\bV^T$.
\end{defn}

The next theorem provides four equivalent characterizations of the PICs.

\begin{thm}
\label{thm:PIC_Charac}
The following characterizations of the PICs are equivalent:
\begin{enumerate}[(1)]
\item The characterization given in Definition \ref{def:PIC_4} where, for $f_k$ and $g_k$ given in \eqref{eq:fkmaxcorr}, $g_k(Y)=\frac{\EE{f_k(X)|Y}}{\|\EE{f_k(X)|Y}\|_2}$ and $f_k(X)=\frac{\EE{g_k(Y)|X}}{\|\EE{g_k(Y)|X}\|_2}$. 

\item \cite[Section 3]{witsenhausen_sequences_1975} Consider the conditional expectation operator $T_Y:\calL_2(p_X)\to
    \calL_2(p_Y)$, defined in \eqref{eq:Tdefn}.
    Then $$\left(1,\sqrt{\lambda_1(X;Y)},\sqrt{\lambda_2(X;Y)},\dots\right)$$ are the singular values of $T_Y$.

\item For any $k\in\mathbb{Z}_+$, 
    \begin{align}
    \label{eq:mmsePIC}
      1- \lambda_k(X;Y) = \min \Big\{\mmse(f(X)|Y)\Big|& f\in
        \calL_2(p_X),\|f(X)\|_2=1,\EE{f(X)h_j(X)}=0, j\in\{0,\dots,k-1\} \Big\},    
      \end{align}
      where 
     \begin{align}
       \label{eq:fkmmse}
      h_k \defined \argmin\Big\{\mmse(f(X)|Y)\Big|& f\in
        \calL_2(p_X),\|f(X)\|_2=1,\EE{f(X)h_j(X)}=0, j\in\{0,\dots,k-1\} \Big\}.    
      \end{align}
      If $\lambda_k(X;Y)$ is unique, then $h_k=f_k$ given in \eqref{eq:fkmaxcorr}.

\end{enumerate}
Finally, if both $\calX$ and $\calY$ are defined over finite supports, the following characterization is also equivalent.
\begin{enumerate}[(1)]
\setcounter{enumi}{3}
\item $\sqrt{\lambda_k(X;Y)}$ is the $(k+1)$-st largest singular value of $\bQ$. The principal functions $f_k$ and $g_k$ in \eqref{eq:fkmaxcorr} correspond to the columns of the matrices $\Dx^{-1/2}\bU$ and $\Dy^{-1/2}\bV$, respectively, where $\bQ = \bU\bSigma\bV$. 
\end{enumerate}
\end{thm}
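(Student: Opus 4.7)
My plan is to establish the equivalences by proving a chain of implications that all rest on a single key calculation: given any $f\in\calL_2(p_X)$ with $\|f(X)\|_2=1$, Cauchy--Schwarz applied to $\EE{f(X)g(Y)}=\EE{g(Y)\,\EE{f(X)\mid Y}}=\langle g,T_Y f\rangle_{\calL_2(p_Y)}$ gives
\begin{equation*}
\max_{\|g(Y)\|_2=1}\EE{f(X)g(Y)}^2=\|T_Y f\|_2^2,
\end{equation*}
attained by $g=T_Y f/\|T_Y f\|_2$. The symmetric statement holds with $T_X$ in place of $T_Y$. Applied at stage $k$ this immediately yields
\[
g_k=\frac{\EE{f_k(X)\mid Y}}{\|\EE{f_k(X)\mid Y}\|_2},\qquad f_k=\frac{\EE{g_k(Y)\mid X}}{\|\EE{g_k(Y)\mid X}\|_2},
\]
and $\lambda_k(X;Y)=\|T_Y f_k\|_2^2=\|T_X g_k\|_2^2$, which is characterization (1) together with the refinement that $f_0=g_0=1$ are attained since $T_Y f_0=g_0$ and the corresponding maximal value is $1$.

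For (1)$\Leftrightarrow$(3), I combine the above with the MMSE identity \eqref{eq:mmse}: for $\|f(X)\|_2=1$, $\mmse(f(X)\mid Y)=1-\|T_Y f\|_2^2$. Thus minimizing the MMSE is the same as maximizing $\|T_Y f\|_2^2$, which by the Cauchy--Schwarz step equals the inner maximum in Definition \ref{def:PIC_4}. The only subtlety is that the Definition imposes orthogonality of $g$ to each $g_j$, whereas (3) imposes orthogonality only on the $f$-side; I must verify that the former is automatic. Using the relation $T_X g_j=\sqrt{\lambda_j(X;Y)}\,f_j$ from Step 1, for $f\perp f_j$ we get $\EE{(T_Y f)(Y)g_j(Y)}=\EE{f(X)(T_X g_j)(X)}=\sqrt{\lambda_j(X;Y)}\EE{f(X)f_j(X)}=0$, so the optimal $g$ is orthogonal to $g_j$ automatically. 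This forces $h_k=f_k$ when $\lambda_k(X;Y)$ is simple.

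For (1)$\Leftrightarrow$(2), I invoke the variational (Courant--Fischer) characterization of singular values of $T_Y$. Since $\EE{f(X)g(Y)}=\langle T_Y f,g\rangle$, the problem in Definition \ref{def:PIC_4} is exactly the iterated $k$-th step of the singular value variational formula for the bounded operator $T_Y:\calL_2(p_X)\to\calL_2(p_Y)$. The top singular value of $T_Y$ is $1$, attained by the constant pair $(f_0,g_0)$ because $T_Y 1=1$; removing this direction and iterating extracts the remaining singular values, which are therefore $\sqrt{\lambda_k(X;Y)}$ for $k\geq 1$. The adjointness of $T_X$ and $T_Y$ noted after \eqref{eq:Tdefn} ensures that the singular vectors on the two sides satisfy the alternating relations from Step 1.

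For (2)$\Leftrightarrow$(4), restricted to finite supports, I use the isometries $\calL_2(p_X)\to\Reals^{|\calX|}$ and $\calL_2(p_Y)\to\Reals^{|\calY|}$ given by $f\mapsto\tilde{\mathbf{f}}\defined\bD_X^{1/2}\mathbf{f}$ and $g\mapsto\tilde{\mathbf{g}}\defined\bD_Y^{1/2}\mathbf{g}$, where $\mathbf{f}$ and $\mathbf{g}$ are coordinate vectors. A direct computation yields $\EE{f(X)g(Y)}=\mathbf{f}^T\bP\mathbf{g}=\tilde{\mathbf{f}}^T\bQ\,\tilde{\mathbf{g}}$, so $\bQ$ is the matrix of $T_Y$ in these orthonormal bases. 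Its singular values therefore coincide with those of $T_Y$, and the singular vectors of $\bQ$, pulled back through the isometries (i.e.\ multiplied by $\bD_X^{-1/2}$ and $\bD_Y^{-1/2}$), are exactly the principal functions $f_k,g_k$. The top singular value $1$ corresponds to $\bD_X^{1/2}\mathbf{1}$ and $\bD_Y^{1/2}\mathbf{1}$ because $\bQ\bD_Y^{1/2}\mathbf{1}=\bD_X^{-1/2}\bP\mathbf{1}=\bD_X^{-1/2}\Px=\bD_X^{1/2}\mathbf{1}$. The main technical obstacle throughout is the bookkeeping that guarantees the iterative orthogonality constraints on one side propagate cleanly to the other; once the adjoint identity $T_X g_j=\sqrt{\lambda_j(X;Y)}f_j$ is in hand, all four characterizations align.
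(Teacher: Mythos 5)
Your proof is correct and follows essentially the same route as the paper's: the Cauchy--Schwarz identification $g=T_Y f/\|T_Y f\|_2$ for (1), the MMSE identity $\mmse(f(X)|Y)=1-\|T_Y f\|_2^2$ for (3), the variational characterization of singular values for (2), and the $\bD_X^{1/2},\bD_Y^{1/2}$ isometries turning $\bQ$ into the matrix of $T_Y$ for (4). One genuine value-add over the paper's terse treatment of (1)$\Leftrightarrow$(3): you explicitly verify via the adjoint relation $T_X g_j=\sqrt{\lambda_j(X;Y)}\,f_j$ that the $g$-side orthogonality constraints in Definition~\ref{def:PIC_4} are automatically satisfied once only the $f$-side constraints of characterization (3) are imposed, a point the paper's one-line justification leaves implicit.
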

\begin{proof}
We will prove that $(1)\iff(2)$,  $(1)\iff (3)$, finally and $(1)\iff (4)$.
\begin{itemize}

\item $(1)\iff (2)$. First observe that for $f\in \calL_2(\px)$ and $g\in \calL_2(\py)$
\begin{align*}
\EE{f(X)g(Y)}= \EE{g(Y)\EE{f(X)|Y}} \leq \|g(Y)\|_2\|\EE{f(X)|Y}\|_2\leq \|\EE{f(X)|Y}\|_2,
\end{align*}
where the first inequality follows from the Cauchy-Schwarz inequality, with equality if and only if $g(Y)=\frac{\EE{f(X)|Y}}{\|\EE{f(X)|Y}\|_2}$.
The equivalence  then follows by noting that 
    \begin{align}
      \sqrt{\lambda_1(X;Y)}&= \max_{\substack{\EE{f(X)}=\EE{g(X)}=0\\ \|f(X)\|_2=\|g(Y)\|_2=1}} \EE{f(X)g(Y)} \nonumber\\
      &= \max_{\substack{\EE{f(X)}=\EE{g(X)}=0\\ \|f(X)\|_2=\|g(Y)\|_2=1}} \EE{\EE{g(Y)f(X)|Y} } \nonumber\\
      &= \max_{\substack{\EE{f(X)}=0\\ \|f(X)\|_2=1}} \|\EE{f(X)|Y} \|_2, \label{eq:varPIC}
    \end{align}
    where the last equality follows by setting $g(Y)=\frac{\EE{f(X)|Y}}{\|\EE{(X)|Y}\|_2}$. Inverting the roles of $f$ and $g$, we find $f(X)=\frac{\EE{g(Y)|X}}{\|\EE{(Y)|X}\|_2}$. Since this last expression is the second largest singular value of the conditional expectation operator $T_Y$ (the largest being 1), the result follows for $\lambda_1(X;Y)$. The equivalent result for the other PICs follows by adding orthogonality constraints and the min-max properties of singular values (cf. Rayleigh-Ritz Theorem \cite[Theorem 4.2.2]{horn_matrix_2012}). 

\item $(1)\iff (3)$. The result follows from $\lambda_k(X;Y)=\|\EE{f_k(X)|Y}\|_2^2$ in \eqref{eq:varPIC} and by noting that the MMSE can be written as \eqref{eq:mmse}. Consequently, maximizing $\|\EE{f(X)|Y)}\|$ is equivalent to minimizing the MMSE in \eqref{eq:mmsePIC}.

\item $(1)\iff(4)$. 
  Let $f\in \calL_2(p_X)$ and $g\in \calL_2(p_Y)$.  Define the column-vectors $\fb\defined (f(1),\dots,f(m))^T$ and $\gb\defined (g(1),\dots,g(n))^T$. Then \[\EE{f(X)g(Y)}=\fb^T\bP\gb\] and \[\fb^T\bD_X\fb=\gb^T\bD_Y\gb=1.\]  For $\bQ=\bU\bSigma\bV^T$ given in Definition \ref{def:Q}, put $\bu\defined \bU^T\Dx^{1/2}\fb$ and $\bv\defined \bV\Dy^{1/2}\gb$. Then $\|\bu\|_2=\|\bv\|_2=1$, and \[\EE{f(X)g(Y)}=\bu^T\bSigma\bv.\] The result then follows directly from the variational characterization of singular values \cite[Theorem 7.3.8]{horn_matrix_2012}.
      
  Assuming unique PICs, note that the column-vectors $(\fb_0,\fb_1,\dots,\fb_d)$ corresponding to the functions $(f_0,f_1,\dots,f_d)$ are the first $d+1$  columns of $\Dx^{-1/2}\bU$, and the column-vectors $(\bg_0,\bg_1,\dots,\bg_d)$ corresponding to the functions $(g_0,g_1,\dots,g_d)$ are the first $d+1$ of  $\Dy^{-1/2}\bV$. In addition, let $\bz_k\in \Reals^n$ be the column vector with entries $\EE{f_k(X)|Y=j}$. Then 
  \begin{align*} \bz_k= \fb^T\bP\Dy^{-1} = \fb_k^T \Dx^{1/2}\bU\bSigma\bV^T\Dy^{-1/2}=\sqrt{\lambda_k(X;Y)}\bg_k, 
  \end{align*} 
  so $\lambda_k(X;Y)=\|\EE{f_k(X)|Y}\|_2^2$ and once again we find $g_k(Y)=\frac{\EE{f_k(X)|Y}}{\|\EE{f_k(X)|Y}\|_2}$.
\end{itemize}
\end{proof}

The previous theorem provides different operational characterization of the PICs.  Characterization (1), presented in Definition \ref{def:PIC_4} implies  that the principal functions of $X$ and $Y$ are the solution to the following problem: Consider two parties, namely Alice and Bob, where Alice has access to an observation of $X$ and Bob has access to an observation $Y$. Alice and Bob's goal is to produce zero-mean, unit variance functions $f(X)$ and $g(Y)$, respectively, that maximizes the correlation $\EE{f(X)g(Y)}$ without any additional information beyond their respective observations of $X$ and $Y$. The optimal choice of functions is  $f_1$ and $g_1$, given in the theorem. Moreover,
\begin{equation*}
  \lambda_1(X;Y)=\rho_m(X;Y)^2.
\end{equation*}
Characterization (3) above proves that the PICs are the solution to another related question: Given a noisy observation $Y$ of a hidden variable $X$, what is the unit-variance, zero-mean function of $X$ that can be estimated with the smallest mean-squared error? It follows directly from \eqref{eq:mmsePIC} that the function is $f_1(X)$, and the minimum MMSE is $1-\lambda_1(X;Y)$. Indeed, since they are orthonormal, the principal functions form a basis for the zero-mean functions in $\calL_2(\px)$ (we revisit this point in the Section \ref{sec:finalremarks}). Characterization (4) lends itself to the geometric interpretation discussed in Section \ref{sec:PIC_Geo}.

The next result states the well-known tensorization property the PICs between sequences of independent random variables (e.g. \cite{witsenhausen_sequences_1975,anantharam_maximal_2013,beigi_duality_2015}). We present a proof of the discrete case here for the sake of completeness.
\begin{lem}
\label{lem:tensor}
Let $(X_1,Y_1)\independent (X_2,Y_2)$, $d_1=\min\{|\calX_1|,|\calY_1|\}-1<\infty$ and $d_2=\min\{|\calX_2|,|\calY_2|\}-1<\infty$. Then the PICs of $p_{(X_1,X_2),(Y_1,Y_2)}$ are $\lambda_i(X_1,Y_1)\lambda_j(X_2,Y_2)$ for $(i,j)\in [0,d_1]\times[0,d_2]$, where $\lambda_0(X_1,Y_1)=\lambda_0(X_2,Y_2)=1$. Furthermore, denoting the principal functions $(X_1,Y_1)$ by $f_i$ and of $(X_2,Y_2)$ by $\tilde{f}_j$, then the principal functions of $ p_{(X_1,X_2),(Y_1,Y_2)} $ are of the form $(x_1,x_2)\mapsto f_i(x_1) \tilde{f}_j(x_2)$. In particular
\begin{equation*}
  \lambda_1((X_1,X_2);(Y_1,Y_2))= \max\{ \lambda_1(X_1;Y_1),\lambda_1(X_2;Y_2)\}.
\end{equation*}
\end{lem}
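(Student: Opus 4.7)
The plan is to prove this via characterization~(4) of Theorem~\ref{thm:PIC_Charac}, reducing everything to the singular value decomposition of the matrix $\bQ$ of Definition~\ref{def:Q}. Independence of $(X_1,Y_1)$ and $(X_2,Y_2)$ implies that the joint distribution matrix factors as a Kronecker product, $\bP = \bP_1 \otimes \bP_2$, where $\bP_k$ is the joint distribution matrix of $(X_k,Y_k)$. The marginals also factor, so $\bD_{(X_1,X_2)} = \bD_{X_1} \otimes \bD_{X_2}$ and likewise on the $Y$ side. Using the mixed-product property of the Kronecker product together with $(\mathbf{A}\otimes \mathbf{B})^{-1/2} = \mathbf{A}^{-1/2}\otimes \mathbf{B}^{-1/2}$ for positive diagonal matrices, I obtain
\begin{equation*}
\bQ = \bQ_1 \otimes \bQ_2,
\end{equation*}
where $\bQ_k$ is the matrix associated to $(X_k,Y_k)$.

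Next, I would invoke the standard fact that the singular values of a Kronecker product $\bQ_1 \otimes \bQ_2$ are precisely the pairwise products $\sigma_i(\bQ_1)\sigma_j(\bQ_2)$, with left and right singular vectors $\bu_i \otimes \btu_j$ and $\bv_i \otimes \btv_j$, respectively. By characterization~(4) of Theorem~\ref{thm:PIC_Charac}, these singular values equal $\sqrt{\lambda_i(X_1;Y_1)}\sqrt{\lambda_j(X_2;Y_2)}$, where $\lambda_0 \equiv 1$ accounts for the trivial singular value associated with the constant principal functions $f_0 \equiv g_0 \equiv 1$. Squaring yields the claimed expression $\lambda_i(X_1;Y_1)\lambda_j(X_2;Y_2)$ for the PICs of the product distribution. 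Translating singular vectors back to principal functions via $\fb_k = \bD_{X}^{-1/2}\bu_k$, and using that $\bD_{X_1}^{-1/2}\otimes \bD_{X_2}^{-1/2}$ applied to $\bu_i \otimes \btu_j$ equals $(\bD_{X_1}^{-1/2}\bu_i) \otimes (\bD_{X_2}^{-1/2}\btu_j)$, recovers the product form $(x_1,x_2) \mapsto f_i(x_1)\tilde{f}_j(x_2)$ of the principal functions of the joint pair.

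Finally, for the claim about $\lambda_1$, I would observe that among all products $\lambda_i(X_1;Y_1)\lambda_j(X_2;Y_2)$ with $(i,j) \neq (0,0)$, the two largest candidates are $\lambda_1(X_1;Y_1)\cdot \lambda_0(X_2;Y_2)=\lambda_1(X_1;Y_1)$ and $\lambda_0(X_1;Y_1)\cdot\lambda_1(X_2;Y_2)=\lambda_1(X_2;Y_2)$, since all other pairs involve at least one factor strictly smaller than $1$ multiplied by a quantity in $[0,1]$. I do not anticipate a serious obstacle here: the steps are mostly bookkeeping, and the only care needed is the treatment of the trivial index $(0,0)$, which corresponds to the constant singular vectors and yields the universal singular value $1$ that is customarily excluded from the list of PICs. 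The minor technical point worth flagging is the factorization $\bQ = \bQ_1 \otimes \bQ_2$ itself, which hinges on the fact that independence gives matching factorizations of $\bP$ and of both diagonal marginal matrices so that the $\bD^{-1/2}$ factors distribute across the Kronecker product.
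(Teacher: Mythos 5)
Your proposal is correct and follows essentially the same route as the paper: recognizing that independence makes $\bQ = \bQ_1 \otimes \bQ_2$, then invoking the singular-value-of-a-Kronecker-product fact (the paper cites Horn and Johnson, Theorem 4.2.15). You merely spell out in more detail the factorization through $\bP$ and the diagonal marginal matrices, which the paper leaves implicit.
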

\begin{proof}
Let $[\bQ_1]_{i,j}=\frac{p_{X_1,Y_1}(i,j)}{\sqrt{p_{X_1}(i)p_{Y_1}(j)}}$ and $[\bQ_2]_{i,j}=\frac{p_{X_2,Y_2}(i,j)}{\sqrt{p_{X_2}(i)p_{Y_2}(j)}}$. Denoting by $\bQ$ the decomposition in Definition \ref{def:PIC_4} of $ p_{(X_1,X_2),(Y_1,Y_2)} $ then, from the independence assumption, $\bQ=\bQ_1\otimes \bQ_2$, where $\otimes$ is the Kronecker product. The result  follows directly from the fact that the singular values of the Kronecker product of two matrices are the Kronecker product of the singular values (and equivalently for the singular vectors) \cite[Theorem 4.2.15]{topics_horn}.
\end{proof}

\subsection{$k$-correlation}
\label{sec:measureMoments}

In this section we  introduce the $k$-correlation $\calJ_k(X;Y)$ between two random variables, which is equivalent to the sum of the $k$ largest PICs. We prove that  $k$-correlation is convex in $\pygx$ and satisfies the DPI. 

\begin{defn}
We define the \textit{$k$-correlation} between $X$ and $Y$ as 
  \begin{align}
    \KC_k(X;Y)&\defined \sum_{i=1}^k \lambda_i(X;Y).
\end{align}
For finite $\calX$ and $\calY$, the $k$-correlation is given by
  \begin{align}
    \KC_k(X;Y)& \defined \KFnorm{\bQ\bQ^T}{k}-1.
  \end{align}
\end{defn}
Note that 
\begin{equation*}
    \KC_1(X;Y)=\rho_m(X;Y)^2,
\end{equation*}
and for finite $\calX$ and $\calY$, $d=\min\{|\calX|,|\calY|\}-1$,
\begin{align*}
  \KC_d(X;Y) = \ExpVal{X,Y}{\frac{\pxy(X,Y)}{\px(X)\py(Y)}}-1=\chi^2(X;Y).
\end{align*}


We demonstrate next that $k$-correlation and, consequently, maximal correlation, is convex in  $\pygx$ for a fixed $\px$ and satisfies a form of the DPI, i.e. if $X\to Y \to Z$, then $\calJ_k(X;Y)\leq \calJ_k(X;Z)$. These results hold for both discrete and continuous random variables (under appropriate compactness assumptions),



\begin{thm}
\label{thm:convex}
    For a fixed $\px$, $\KC_k(X;Y)$ is convex in $\pygx$.
\end{thm}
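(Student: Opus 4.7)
The plan is to realize $\calJ_k(X;Y)$ as a pointwise supremum, over a $\pygx$-independent index set, of quantities that are each convex in $\pygx$; a supremum of convex functions being convex then yields the result.

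First I would upgrade characterization~(3) of Theorem~\ref{thm:PIC_Charac} from a single PIC to the sum of the top $k$ PICs via the Ky Fan maximum principle. The operator $T_X T_Y = T_Y^* T_Y$ is compact, self-adjoint, and positive on $\calL_2(\px)$, with eigenvalues $1,\lambda_1(X;Y),\lambda_2(X;Y),\dots$, the top eigenvector being the constant function. Restricting $T_X T_Y$ to the zero-mean subspace $\{f\in\calL_2(\px):\EE{f(X)}=0\}$ removes the top eigenvalue, and Ky Fan's principle then gives
\begin{equation*}
\calJ_k(X;Y) \;=\; \max\!\left\{\,\sum_{i=1}^k \bigl\|\EE{g_i(X)|Y}\bigr\|_2^2 \;:\; g_i\in\calL_2(\px),\ \EE{g_i(X)}=0,\ \EE{g_i(X)g_j(X)}=\delta_{ij}\,\right\}.
\end{equation*}
The essential feature of this identity is that the feasibility constraints on $(g_1,\dots,g_k)$ involve only $\px$, which is held fixed, so as $\pygx$ varies we optimize a family of functions over a common index set.

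Next I would show that for each fixed feasible $k$-tuple the objective is convex in $\pygx$. Expanding via Bayes' rule,
\begin{equation*}
\bigl\|\EE{g_i(X)|Y}\bigr\|_2^2 \;=\; \sum_{y\in\calY} \frac{\bigl(\sum_x \px(x)\,\pygx(y|x)\,g_i(x)\bigr)^{\!2}}{\sum_x \px(x)\,\pygx(y|x)}.
\end{equation*}
Each summand has the form $u(\pygx)^2/v(\pygx)$ where both $u$ and $v$ are affine in the entries of $\pygx$. Since the perspective map $(u,v)\mapsto u^2/v$ is jointly convex on $\Reals\times\Reals_{>0}$ and convexity is preserved under affine precomposition, each summand is convex in $\pygx$. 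Summing over $y\in\calY$ and then over $i\in[k]$ preserves convexity. Combining the two steps, $\calJ_k(X;Y)$ is the pointwise supremum of a family of convex functions of $\pygx$ indexed by a $\pygx$-independent set, hence itself convex.

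The only step requiring some care is the Ky Fan identity in the first paragraph: one must explicitly identify the top eigenvector of $T_X T_Y$ with the constants (using $T_Y \mathbf{1}=\mathbf{1}$ and adjointness) and invoke the Courant--Fischer / Ky Fan variational formula for sums of eigenvalues restricted to an invariant subspace. An alternative route would interpret a mixture $\pygx=\alpha\,p_{Y|X}^{(0)}+(1-\alpha)\,p_{Y|X}^{(1)}$ as the chain $X\to(Y,W)\to Y$ with $W$ the channel-selection bit, observe that $\bigl\|\EE{g(X)|Y,W}\bigr\|_2^2$ decomposes linearly in $\alpha$ to give $\calJ_k(X;(Y,W))\le \alpha\calJ_k(X;Y_0)+(1-\alpha)\calJ_k(X;Y_1)$, and then apply the DPI through $X\to(Y,W)\to Y$; however, this would appeal to Theorem~\ref{lem:dataProc}, so I prefer the direct perspective-function argument above, which keeps Theorem~\ref{thm:convex} logically independent of the DPI.
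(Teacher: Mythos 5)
Your proof is correct and follows the same overall structure as the paper's: both realize $\calJ_k(X;Y)$ as a pointwise maximum, over an index set depending only on $\px$, of the quantity $\sum_{i=1}^k \|\EE{g_i(X)|Y}\|_2^2$, show that each such quantity is convex in $\pygx$, and conclude by the fact that a supremum of convex functions is convex. The differences are in the two supporting sub-arguments, where you use cleaner tools. For the variational identity, the paper invokes the Poincar\'e separation theorem, whereas you invoke the Ky Fan maximum principle for the compact positive self-adjoint operator $T_X T_Y$ restricted to the zero-mean subspace; these two routes are close cousins, but your phrasing makes the ``index set independent of $\pygx$'' feature more transparent. For the convexity of a single term $\|\EE{g(X)|Y}\|_2^2$, the paper runs a Jensen argument with a mixing variable $U$, while you expand via Bayes' rule and read off a sum of perspective functions $u^2/v$ with $u,v$ affine in $\pygx$; this is arguably the more self-contained argument, as it avoids introducing an auxiliary random variable and makes the affine dependence on $\pygx$ explicit. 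Your closing remark correctly flags the one place that requires care (identifying the constants as the top eigenspace, via $T_Y\mathbf{1}=\mathbf{1}$ and adjointness), and your decision to avoid the DPI-based alternative keeps Theorem \ref{thm:convex} and Theorem \ref{lem:dataProc} logically independent, which matches the paper's ordering.
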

\begin{proof}
First note that $\|\EE{f(X)|Y}\|_2^2$ is convex $\pxy$, since for any $U\to (X,Y)$
\begin{align*}
\EV{Y}{\left(\EV{X|Y}{f(X)|Y}\right)^2} &= \EV{Y}{\left(\EV{U|Y}{\EV{X|Y,U}{f(X)|Y,U}}\right)^2}\\
&\leq \EV{Y}{\EV{U|Y}{\left(\EV{X|Y,U}{f(X)|Y,U}\right)^2}} \\
&= \EV{U}{\EV{Y|U}{\left(\EV{X|Y,U}{f(X)|Y,U}\right)^2}},
\end{align*}
where the inequality follows from Jensen's inequality. Consequently, for any $\{f_1,\dots,f_k\}\subseteq \calL_2(\px)$, $\sum_{i=1}^k \|\EE{f_i(X)|Y}\|_2^2$ is convex in $\pxy$ and thus, for a fixed $\px$, convex in $\pygx$. From Theorem \ref{thm:PIC_Charac} and the Poincar\'e separation theorem \cite[Corollary 4.3.16]{horn_matrix_2012}
\begin{align*}
\sum_{i=1}^k\lambda_i(X;Y) = \max_{\substack{\{f_i\}_{i=1}^k\subseteq \calL_2(\px)\\f_i\perp f_j, i\neq j\\\EE{f_i}=0}} \sum_{i=1}^k \|\EE{f_i(X)|Y}\|_2^2.
\end{align*}
Since the pointwise supremum of convex functions is convex \cite[Sec 3.2.3]{boyd_convex_2004}, it follows that for fixed $\px$ $\calJ_k(X;Y)$ is convex in $\pygx$.
\end{proof}


The following lemma will be used to prove  that the PICs satisfy the DPI.

\begin{lem}[DPI for MMSE]
\label{lem:DPI_MMSE}
For $X\to Y \to Z$ and any $f\in \calL_2(p_X)$, $\EE{f(X)}=0$,
\begin{equation}
\label{eq:MMSE_DPI}
  \|\EE{f(X)|Z}\|_2^2\leq \lambda_1(Y;Z)\|\EE{f(X)|Y}\|_2^2.
\end{equation}
Consequently, $\mmse(f(X)|Y)\leq \mmse(f(X)|Z)$.
\end{lem}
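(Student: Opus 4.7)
The plan is to reduce the inequality to the variational characterization of $\lambda_1(Y;Z)$ derived in the proof of Theorem~\ref{thm:PIC_Charac}, specifically equation~\eqref{eq:varPIC}, which says
\[
\sqrt{\lambda_1(Y;Z)} = \max_{\substack{\mathbb{E}[h(Y)]=0 \\ \|h(Y)\|_2=1}} \|\mathbb{E}[h(Y)\mid Z]\|_2 .
\]
Equivalently, for every zero-mean $h \in \mathcal{L}_2(p_Y)$, one has $\|\mathbb{E}[h(Y)\mid Z]\|_2^2 \leq \lambda_1(Y;Z)\,\|h(Y)\|_2^2$. The whole lemma is just this inequality applied to a carefully chosen $h$.

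First, I would set $h(Y) \defined \mathbb{E}[f(X)\mid Y]$ and verify that $h$ meets the hypotheses of the variational formula: by the tower property, $\mathbb{E}[h(Y)] = \mathbb{E}[f(X)] = 0$, and by Jensen's inequality $\|h(Y)\|_2^2 \leq \|f(X)\|_2^2 < \infty$, so $h \in \mathcal{L}_2(p_Y)$. Second, I would exploit the Markov chain $X \to Y \to Z$: conditional independence of $X$ and $Z$ given $Y$ combined with the tower property yields
\[
\mathbb{E}[f(X)\mid Z] = \mathbb{E}\bigl[\,\mathbb{E}[f(X)\mid Y,Z]\;\big|\;Z\bigr] = \mathbb{E}\bigl[\,\mathbb{E}[f(X)\mid Y]\;\big|\;Z\bigr] = \mathbb{E}[h(Y)\mid Z].
\]
Third, I would apply the variational bound to $h$, which immediately gives
\[
\|\mathbb{E}[f(X)\mid Z]\|_2^2 = \|\mathbb{E}[h(Y)\mid Z]\|_2^2 \leq \lambda_1(Y;Z)\,\|h(Y)\|_2^2 = \lambda_1(Y;Z)\,\|\mathbb{E}[f(X)\mid Y]\|_2^2 ,
\]
which is exactly \eqref{eq:MMSE_DPI}.

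For the MMSE consequence, I would invoke the identity \eqref{eq:mmse}: $\mathrm{mmse}(f(X)\mid Y) = \|f(X)\|_2^2 - \|\mathbb{E}[f(X)\mid Y]\|_2^2$ and similarly with $Z$. Since $\lambda_1(Y;Z) \leq 1$ (as noted right after Definition~\ref{def:PIC_4}), the bound above gives $\|\mathbb{E}[f(X)\mid Z]\|_2^2 \leq \|\mathbb{E}[f(X)\mid Y]\|_2^2$, so $\mathrm{mmse}(f(X)\mid Y) \leq \mathrm{mmse}(f(X)\mid Z)$. The only subtlety to watch is the zero-mean requirement in \eqref{eq:varPIC}, which is precisely why the hypothesis $\mathbb{E}[f(X)] = 0$ is imposed; without it one would need to split off the mean first, but since it is given, no real obstacle arises and the proof is essentially a direct application of the variational formula to the Markov-collapsed conditional expectation.
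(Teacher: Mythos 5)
Your proof is correct and rests on exactly the same two ingredients the paper uses: the Markov conditional-independence collapse $\EE{f(X)\mid Z}=\EE{\EE{f(X)\mid Y}\mid Z}$ and the variational characterization \eqref{eq:varPIC} of $\lambda_1(Y;Z)$. The paper packages these via an auxiliary test function $g(Z)$ and Cauchy--Schwarz, then optimizes over $g$ to recover $\|\EE{f(X)\mid Z}\|_2$ on the left; you instead apply the variational bound once directly to $h(Y)=\EE{f(X)\mid Y}$, which is the same argument in dual form and arguably a bit cleaner.
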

\begin{proof}
The proof is in Appendix \ref{app:proofsPIC}.
\end{proof}

Lemma \ref{lem:DPI_MMSE} leads to the following theorem.

\begin{thm}[DPI for the PICs]
  \label{lem:dataProc}
Assume that $X\rightarrow Y \rightarrow Z$. Then $\lambda_k(X;Z)\leq \lambda_1(Y;Z)\lambda_k(X;Y)$ for all $k$.
\end{thm}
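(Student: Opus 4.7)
The plan is to combine Lemma \ref{lem:DPI_MMSE} with a Courant--Fischer style variational characterization of the $k$-th PIC viewed as the squared singular value of the conditional expectation operator restricted to the zero-mean subspace. Let $T_Y^X$ denote the map $f\mapsto \EE{f(X)|Y}$ acting on $\{f\in\calL_2(p_X):\EE{f(X)}=0\}$. By characterization (2) of Theorem \ref{thm:PIC_Charac}, the singular values of $T_Y^X$ on this subspace are $\sqrt{\lambda_1(X;Y)}\geq\sqrt{\lambda_2(X;Y)}\geq\cdots$, and the principal functions $f_1^Y,f_2^Y,\ldots$ form an orthonormal eigenbasis for the self-adjoint operator $(T_Y^X)^*T_Y^X$. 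In the finite discrete case, this is just the spectrum of $\bQ_{X,Y}\bQ_{X,Y}^T$ restricted to the orthogonal complement of the top singular vector (characterization (4)).

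First, I would invoke the min-max principle for eigenvalues of positive semidefinite self-adjoint compact operators in the form
\begin{equation*}
\lambda_k(X;Z)\;=\;\min_{\substack{V\subset \calL_2(p_X)\\ V\perp 1,\ \dim V=k-1}}\ \max_{\substack{f\in V^{\perp},\ \EE{f(X)}=0\\ \|f(X)\|_2=1}}\|\EE{f(X)|Z}\|_2^2,
\end{equation*}
and choose the specific subspace $V=\mathrm{span}(f_1^Y,\ldots,f_{k-1}^Y)$ spanned by the first $k-1$ principal functions of $(X,Y)$. This gives the one-sided inequality
\begin{equation*}
\lambda_k(X;Z)\;\leq\;\max_{\substack{f\perp 1,\, f\perp f_1^Y,\ldots,f_{k-1}^Y\\ \|f(X)\|_2=1}}\|\EE{f(X)|Z}\|_2^2.
\end{equation*}

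Second, I would bound the right-hand side in two steps for an arbitrary feasible $f$. Since $X\to Y\to Z$ and $\EE{f(X)}=0$, Lemma \ref{lem:DPI_MMSE} yields
\begin{equation*}
\|\EE{f(X)|Z}\|_2^2\;\leq\;\lambda_1(Y;Z)\,\|\EE{f(X)|Y}\|_2^2.
\end{equation*}
Because $f$ is additionally orthogonal to $f_0,f_1^Y,\ldots,f_{k-1}^Y$ and has unit norm, characterization (3) of Theorem \ref{thm:PIC_Charac}, together with the identity $\|\EE{f(X)|Y}\|_2^2=\|f(X)\|_2^2-\mmse(f(X)|Y)=1-\mmse(f(X)|Y)$, gives $\|\EE{f(X)|Y}\|_2^2\leq \lambda_k(X;Y)$. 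Chaining the two bounds and taking the max over feasible $f$ produces $\lambda_k(X;Z)\leq \lambda_1(Y;Z)\,\lambda_k(X;Y)$, which is the claim.

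The only potentially delicate step is the first one: applying the min-max principle requires that the relevant conditional expectation operators are compact (so that a countable spectrum and orthonormal eigenbasis exist) and that one is precise about working on the codimension-one zero-mean subspace, so that $\lambda_k$ really is the $k$-th eigenvalue rather than the $(k+1)$-st. In the discrete setting assumed throughout the paper this reduces to the classical Courant--Fischer theorem applied to the positive semidefinite matrix obtained from $\bQ_{X,Z}$ in Definition \ref{def:Q}; in the general case compactness of $T_Z^X$ (as assumed after Definition \ref{def:PIC_4}) suffices. Once that setup is in place, the actual estimate is the clean two-line chain above.
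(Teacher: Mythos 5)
Your proof is correct and matches the paper's route: both rest on a Courant--Fischer min-max characterization of $\lambda_k$ as the $(k+1)$-st squared singular value of the conditional expectation operator, and both obtain the factor $\lambda_1(Y;Z)$ by applying Lemma \ref{lem:DPI_MMSE}. The only difference is that you spell out the two steps the paper leaves implicit --- the choice of test subspace $V=\mathrm{span}(f_1^Y,\dots,f_{k-1}^Y)$ and the bound $\|\EE{f(X)|Y}\|_2^2\leq\lambda_k(X;Y)$ via characterization (3) --- so your writeup is a fleshed-out version of the same argument rather than a genuinely different one.
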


\begin{proof}
  A direct consequence of Theorem \ref{thm:PIC_Charac} is that for any two random variables $X, Y$
  \begin{equation*}
    \lambda_k(X;Y)= \min_{\{f_i\}_{i=1}^k\subseteq \calL_2(\px)} \max_{\substack{f\in\calL_2(\px)\\\EE{f(X)f_i(X)}=0}} \|\EE{f(X)|Y}\|_2^2,
  \end{equation*}
and equivalently for $ \lambda_k(X;Z)$. The result then follows directly from  \eqref{eq:MMSE_DPI}.
\end{proof}

The next corollary is a direct consequence of the previous theorem. 
\begin{cor}
    For $X \rightarrow Y \rightarrow Z$ forming a Markov chain, $\KC_k(X;Z)\leq \lambda_1(Y;Z)\KC_k(X;Y)$.
\end{cor}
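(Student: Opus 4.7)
The plan is to observe that this corollary follows immediately by summing the per-component DPI bound from Theorem \ref{lem:dataProc} and using the definition of $k$-correlation as a sum of PICs.

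More concretely, by the definition $\calJ_k(X;Z) \defined \sum_{i=1}^k \lambda_i(X;Z)$, and Theorem \ref{lem:dataProc} gives, for each index $i \in \{1,\dots,k\}$, the inequality $\lambda_i(X;Z) \leq \lambda_1(Y;Z)\,\lambda_i(X;Y)$ under the Markov assumption $X \to Y \to Z$. I would simply add these $k$ inequalities together and factor out the common scalar $\lambda_1(Y;Z)$ (which is non-negative since it equals $\rho_m(Y;Z)^2 \in [0,1]$), obtaining
\[
\calJ_k(X;Z) \;=\; \sum_{i=1}^k \lambda_i(X;Z) \;\leq\; \lambda_1(Y;Z)\sum_{i=1}^k \lambda_i(X;Y) \;=\; \lambda_1(Y;Z)\,\calJ_k(X;Y).
\]

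There is essentially no obstacle here: all of the work is already embedded in Theorem \ref{lem:dataProc}, whose proof in turn relies on the MMSE-style DPI of Lemma \ref{lem:DPI_MMSE} and the variational characterization of the PICs from Theorem \ref{thm:PIC_Charac}. The only thing worth flagging in the write-up is that $\lambda_1(Y;Z) \geq 0$, so multiplying the termwise inequalities by this scalar preserves the direction. As a sanity check, one recovers the usual DPI for maximal correlation by taking $k=1$, namely $\rho_m(X;Z)^2 \leq \rho_m(Y;Z)^2\,\rho_m(X;Y)^2$, and the strong DPI for $\chi^2$-information by taking $k = d$ when the support sizes are finite.
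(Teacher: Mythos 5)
Your argument is exactly the one the paper intends: the corollary is presented as a direct consequence of Theorem \ref{lem:dataProc}, and summing the termwise inequalities $\lambda_i(X;Z)\leq\lambda_1(Y;Z)\lambda_i(X;Y)$ over $i\in\{1,\dots,k\}$ is precisely that. Nothing is missing, and your sanity checks for $k=1$ and $k=d$ accord with the paper's remarks on maximal correlation and $\chi^2$-information.
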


\begin{remark}
  The data processing result in Theorem \ref{lem:dataProc} and the previous corollary was proved by Kang and Ulukus in \cite[Theorem 2]{kang_new_2011} and applied to problems in distributed source and channel coding, even though they do not make the explicit connection with maximal correlation and PICs. A weaker form of Theorem \ref{lem:dataProc} can be derived using a clustering result presented in \cite[Sec. 7.5.4]{greenacre_theory_1984} and originally due to Deniau \etal \cite{deniau_effet}. We use a different proof technique from the one in \cite[Sec. 7.5.4]{greenacre_theory_1984} and \cite[Theorem 2]{kang_new_2011} to show result stated in the theorem, and present the proof here for completeness. Finally, a related data processing result was stated in \cite{polyanskiy_hypothesis_2012}.
\end{remark}


In the next three sections of the paper, we demonstrate the fundamental role of PICs in problems in information theory, estimation theory, and privacy.


\section[Applications of the PICs to Information Theory]{Applications of the Principal Inertia Components to Information Theory}
\label{chap:PIC_IT}


In this section, we present results that connect the PICs with other information-theoretic metrics. As seen in Section \ref{chap:PICs}, the distribution of the vectors $p_{Y|X}$ in the simplex or, equivalently, the PICs of the joint distribution of $X$ and $Y$, are inherently connected to how an observation of $Y$ is statistically related to $X$. In this section, we explore this connection within an information theoretic framework. We show that, under certain assumptions, the PICs play an important part in estimating a one-bit function of $X$, namely $b(X)$ where $b:\calX\rightarrow \{0,1\}$, given an observation of $Y$: they can be understood  as the singular values (or filter coefficients) in the linear transformation  of $p_{b(X)|X}$ into $p_{b(X)|Y}$ determined by the channel transition matrix. Alternatively, the PICs can bear an interpretation as the transform of the distribution of the noise in certain additive-noise channels, in particular when $X$ and $Y$ are binary strings. We also show that maximizing the PICs is equivalent to maximizing the first-order term of the Taylor series expansion of certain convex dependence measures between $b(X)$ and $Y$. We conjecture that, for symmetric distributions of $X$ and $Y$ and a given upper bound on the value of the largest PIC, $I(b(X);Y)$ is maximized when all the principal inertia components have the same value as the largest principal inertia component. For uniformly distributed $X$ and $Y$, this is equivalent to $Y$ being the result of passing $X$ through a $q$-ary  symmetric channel. This conjecture, if proven, would imply the conjecture made by Kumar and Courtade in \cite{kumar_which_2013}.

Finally, we study the Markov chain $B\rightarrow X \rightarrow Y \rightarrow \whB$, where $B$ and $\whB$ are binary random variables, and the role of the principal inertia components in characterizing the relation between $B$ and $\whB$. We show that  this relation is linked to solving a non-linear maximization problem, which, in turn, can be solved when $\whB$ is an unbiased estimate of $B$ (i.e. $\EE{B}=\mathbb{E}[\whB])$, the joint distribution of $X$ and $Y$ is symmetric and $\Pr\{B=\whB=0\}\geq \EE{B}^2$. We illustrate this result for the setting where $X$ is a binary string and $Y$ is the result of sending $X$ through a memoryless binary symmetric channel. We note that this is a similar setting to the one considered by Anantharam \etal in \cite{anantharam_hypercontractivity_2013}.

The rest of the section is organized as follows. Section \ref{sec:conf} introduces the notion of conforming distributions and ancillary results. Section \ref{sec:functions} presents results concerning the role of the PICs in inferring one-bit functions of $X$ from an observation of $Y$ and in the  transformation of $p_X$ into $p_Y$ in certain symmetric settings. We argue that, in such settings, the PICs can be viewed as singular values (filter coefficients) in a linear transformation. In particular, results for binary channels with additive noise are derived using techniques inspired by Fourier analysis of Boolean functions. Furthermore, Section \ref{sec:functions} also introduces a conjecture that encompasses the one made by Kumar and Courtade in \cite{kumar_which_2013}. Finally, Section \ref{sec:estimators} provides further evidence for this conjecture by investigating the  Markov chain $B\rightarrow X\rightarrow Y \rightarrow \whB$ where $B$ and $\whB$ are binary random variables. Throughout this section we assume $X$ and $Y$ are discrete random variables defined over a finite support set.

%

\subsection{Conforming distributions}
\label{sec:conf}


In this section we shall focus on probability distributions that meet the following definition.

\begin{defn}
  A joint distribution $p_{X,Y}$ is said to be \textit{conforming} if the corresponding matrix $\bP$ satisfies $\bP=\bP^T$ and $\bP$ is positive-semidefinite. 
\end{defn}

Conforming distributions are particularly interesting since they are closely related to symmetric channels\footnote{We say that a channel is symmetric if $\Pygx=\Pygx^T$.}. In addition, if a joint distribution is conforming, then its eigenvalues are equal to (the square root of) its PICs when its marginal distributions are identical. We shall illustrate this relation in the following two lemmas and in Section \ref{sec:functions}.

\begin{remark}
If $X$ and $Y$ have a conforming joint distribution, then they have the same
marginal distribution.  Consequently, $\mathbf{D}\defined
\mathbf{D}_X=\mathbf{D}_Y$, and $\bP=\mathbf{D}^{1/2}\mathbf{U\Sigma U}^T
\mathbf{D}^{1/2}$ (cf. Definition \ref{def:Q} for notation). 
\end{remark}

\begin{lem}
  \label{lem:conf}
    If $\bP$ is conforming, then the corresponding conditional distribution matrix $\Pygx$ is positive semi-definite. Furthermore, for any symmetric channel $\Pygx=\Pygx^T$, there is an input distribution $\px$ (namely, the uniform distribution) such that the PICs of $\bP=\bD_X\Pygx$ correspond to the square of the eigenvalues of $\Pygx$. In this case, if $\Pygx$ is also positive-semidefinite, then the resulting $\bP$ is conforming. 
\end{lem}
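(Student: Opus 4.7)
The plan is to prove the three assertions in order, relying on the Remark preceding the lemma (a conforming $\bP$ forces $\bD_X=\bD_Y=:\bD$) together with basic similarity and congruence arguments applied to the factorization $\bP=\bD\,\Pygx$.

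For the first statement, I would write
\[
\Pygx = \bD^{-1}\bP = \bD^{-1/2}\bigl(\bD^{-1/2}\bP\,\bD^{-1/2}\bigr)\bD^{1/2},
\]
exhibiting $\Pygx$ as similar to the symmetric matrix $\mathbf{M}:=\bD^{-1/2}\bP\,\bD^{-1/2}$. Because $\mathbf{M}$ is congruent to $\bP$ (symmetric and PSD by the conforming hypothesis), $\mathbf{M}$ is itself symmetric PSD, and similarity preserves the spectrum, so $\Pygx$ has only non-negative real eigenvalues. Equivalently, $\Pygx$ is self-adjoint and PSD with respect to the weighted inner product $\langle f,g\rangle_{\bD}=f^T\bD g$, which is the natural sense in which a generally non-symmetric transition matrix is called positive semi-definite here.

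For the second statement, I would take $\px$ uniform, so $\bD_X=\tfrac{1}{m}\eye$. Since the hypothesis $\Pygx=\Pygx^T$ together with row-stochasticity makes $\Pygx$ doubly stochastic, $\Py=\Pygx^T\Px=\Px$ and $\bD_Y=\tfrac{1}{m}\eye$ as well. Plugging into Definition~\ref{def:Q} gives
\[
\bQ = \bD_X^{-1/2}\bP\,\bD_Y^{-1/2} = m\,\bP = \Pygx .
\]
By characterization (4) in Theorem~\ref{thm:PIC_Charac}, the values $\sqrt{\lambda_k(X;Y)}$ are the singular values of $\bQ=\Pygx$ other than the trivial one corresponding to the all-ones eigenvector; because $\Pygx$ is symmetric, its singular values equal the absolute values of its eigenvalues, so the PICs are exactly the squares of the non-trivial eigenvalues of $\Pygx$. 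If moreover $\Pygx$ is positive-semidefinite, then $\bP=\tfrac{1}{m}\Pygx$ is a positive scalar multiple of a symmetric PSD matrix, hence itself symmetric and PSD, i.e.\ conforming, giving the last assertion.

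I do not expect any real obstacle; the one mild subtlety is the interpretation of ``$\Pygx$ is positive semi-definite'' in the first part, since $\Pygx$ need not be a symmetric matrix even when $\bP$ is conforming. Making the weighted-inner-product viewpoint explicit resolves this cleanly and is fully consistent with the second half of the lemma, where $\Pygx$ is already assumed symmetric.
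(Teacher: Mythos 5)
Your proof is correct and follows essentially the same route as the paper: similarity of $\Pygx$ to the symmetric PSD matrix $\bD^{-1/2}\bP\bD^{-1/2}$ for the first part, double stochasticity of a symmetric $\Pygx$ to reduce to uniform marginals and identify $\bQ = m\bP = \Pygx$ for the second, and closure under nonnegative scaling for the third. Your remark on the $\bD$-weighted inner product is a welcome clarification of what ``positive semi-definite'' means for a non-symmetric $\Pygx$; the paper silently identifies this with having nonnegative real eigenvalues.
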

\begin{proof}
  Let $\bP$ be conforming and $\calX=\calY=[m]$. Then $\Pygx = \mathbf{D}^{-1/2}\mathbf{U\Sigma U}^T \mathbf{D}^{1/2}=\left( \bD^{-1/2}\bU \right)\bSigma \left( \bD^{-1/2}\bU \right)^{-1}$. It follows that $\diag{\bSigma}$ are the eigenvalues of $\Pygx$, and, consequently, $\Pygx$ is positive semi-definite.

  Now let $\Pygx=\Pygx^T=\bU\bLambda\bU^T$. The entries of $\bLambda$ here are the eigenvalues of $\Pygx$ and not necessarily positive. Since $\Pygx$ is symmetric, it is also doubly stochastic, and for $X$ uniformly distributed $Y$ is also uniformly distributed. Thus, the resulting joint distribution matrix $\bP$ is symmetric, and $\bP=\bU\bLambda\bU^T/m$. It follows directly that the principal inertia components of $\bP$ are  the diagonal entries of $\bLambda^2$, and if $\Pygx$ is positive-semidefinite then $\bP$ is conforming.

\end{proof}

The $q$-ary symmetric channel, defined below, is of particular interest to some of the results derived in the following subsections.

\begin{defn}
  The $q$-ary symmetric channel with crossover probability $\epsilon\leq1-q^{-1}$, also
    denoted as  $(\epsilon,q)$-SC, is
    defined as the channel with input $X$ and output $Y$ where
    $\calX=\calY=[q]$ and
    \begin{align*}
        p_{Y|X}(y|x) = 
            \begin{cases}
                1-\epsilon& \mbox{if } x=y\\
                \displaystyle \frac{\epsilon}{q-1}& \mbox{if } x\neq y.
            \end{cases}
    \end{align*}
\end{defn}

In the rest of this section, we assume that $X$ and $Y$ have a conforming joint distribution matrix with $\calX=\calY=[q]$ and PICs $\lambda_k(X;Y)=\sigma_k^2$ for $k\in [d-1]$. The following  lemma shows that a conforming $\bP$ with uniform marginals can be transformed into the joint distribution of a $q$-ary symmetric channel with input distribution $p_X$ by setting $\sigma_1^2=\sigma_2^2=\dots=\sigma_{q-1}^2$, i.e. making all principal inertia components equal to the largest one.

\begin{lem}
  \label{lem:qary}
  Let $\bP$ be a conforming joint distribution matrix of $X$ and
  $Y$, with $\calX=\calY=[q]$,
  $\bP=\bD^{1/2}\mathbf{U\Sigma U}^T\bD^{1/2}$, where $\bD = \bD_X$ and
  $\mathbf{\Sigma}=\diag{1,\sigma_1,\dots,\sigma_d}$. For
  $\tilde{\mathbf{\Sigma}}=\diag{1,\sigma_1,\dots,\sigma_1}$, let $X$ and
  $\tilde{Y}$ have joint distribution $\tilde{\bP}=\bD^{1/2}\mathbf{U\widetilde{\Sigma}
U}^T \bD^{1/2}$.  Then,  $\tilde{Y}$ is output of a channel with input $X$ and probability transition matrix
\begin{equation}
\label{eq:btilde}
\bP_{\tilde{Y}|X} = \sigma_1 \eye + (1-\sigma_1) \ones \Px^T.
\end{equation}
In particular, if $X$ is uniform, $\tilde{Y}$ is the output of an $(\epsilon,q)$-SC with input $X$, where
  \begin{align}
    \label{eq:epsilon}
    \epsilon = \frac{(q-1)(1-\rho_m(X;Y))}{q}.
  \end{align}
\end{lem}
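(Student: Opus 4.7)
The plan is to exploit the spectral decomposition of $\bQ$ together with the explicit form of its leading singular vector. Since $\bP$ is conforming we have $\bD_X=\bD_Y=\bD$, and $\bQ=\bD^{-1/2}\bP\bD^{-1/2}$ is symmetric with eigendecomposition $\bU\bSigma\bU^T$. First I would identify the first column $\bu_0$ of $\bU$ (the eigenvector associated with the trivial singular value $1$) as $\bu_0=\bD^{1/2}\ones$; this is verified directly from $\bQ\bD^{1/2}\ones=\bD^{-1/2}\bP\ones=\bD^{-1/2}\Px=\bD^{1/2}\ones$, together with $\|\bD^{1/2}\ones\|_2^2=\ones^T\bD\ones=1$.

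Next, I would write $\tilde{\bSigma}=\sigma_1\bI+(1-\sigma_1)\mathbf{E}_1$, where $\mathbf{E}_1$ has a single $1$ in entry $(1,1)$ and zeros elsewhere. Substituting into $\tilde{\bP}=\bD^{1/2}\bU\tilde{\bSigma}\bU^T\bD^{1/2}$ and using $\bU\bU^T=\bI$, $\bU\mathbf{E}_1\bU^T=\bu_0\bu_0^T$, and $\bD^{1/2}\bu_0\bu_0^T\bD^{1/2}=\bD\ones\ones^T\bD=\Px\Px^T$ yields
\begin{equation*}
\tilde{\bP}=\sigma_1\bD+(1-\sigma_1)\Px\Px^T.
\end{equation*}
Left-multiplying by $\bD^{-1}$ then gives the conditional matrix $\bP_{\tilde{Y}|X}=\sigma_1\bI+(1-\sigma_1)\bD^{-1}\Px\Px^T=\sigma_1\bI+(1-\sigma_1)\ones\Px^T$, where we use $\bD^{-1}\Px=\ones$. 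This establishes \eqref{eq:btilde}.

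For the second assertion, I would specialize to $\Px=(1/q)\ones$ to obtain $\bP_{\tilde{Y}|X}=\sigma_1\bI+\tfrac{1-\sigma_1}{q}\ones\ones^T$. The diagonal entries are $\sigma_1+(1-\sigma_1)/q$ and the off-diagonal entries are $(1-\sigma_1)/q$. Matching against the $(\epsilon,q)$-SC formulas yields $\epsilon=(q-1)(1-\sigma_1)/q$, and since $\sigma_1=\sqrt{\lambda_1(X;Y)}=\rho_m(X;Y)$ by Theorem~\ref{thm:PIC_Charac}, we recover \eqref{eq:epsilon}.

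The argument is entirely linear-algebraic and has no substantial obstacle once the identification $\bu_0=\bD^{1/2}\ones$ is made. The one subtlety worth verifying is that $\tilde{\bP}$ is a bona fide probability distribution: its entries are nonnegative because $\sigma_1\in[0,1]$ and both $\sigma_1\bD$ and $(1-\sigma_1)\Px\Px^T$ are entrywise nonnegative, and the total mass is $\ones^T\tilde{\bP}\ones=\sigma_1+(1-\sigma_1)=1$.
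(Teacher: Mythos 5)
Your proof is correct and follows essentially the same route as the paper's: identify the leading column of $\bU$ as $\Px^{1/2}$ (your $\bD^{1/2}\ones$ is exactly that vector), expand $\tilde\bSigma$ to get $\tilde\bP=\sigma_1\bD+(1-\sigma_1)\Px\Px^T$, and left-multiply by $\bD^{-1}$. You merely fill in a few steps the paper leaves implicit (the verification that $\bD^{1/2}\ones$ is the eigenvector for $1$, the split $\tilde\bSigma=\sigma_1\bI+(1-\sigma_1)\mathbf{E}_1$, and the nonnegativity check).
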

\begin{proof}
  The first column of $\mathbf{U}$ is $\Px^{1/2}$. Therefore
\begin{align}
    \tilde{\bP}&=\bD^{1/2}\mathbf{U\tilde{\Sigma} U}^T \bD^{1/2}\nonumber \\
    &=\sigma_1 \bD +(1-\sigma_1)\Px\Px^T. \label{eq:qarydist}
\end{align}    
By left multiplying $ \tilde{\bP}$ by $\bD^{-1}$, we obtain the channel transition matrix given in \eqref{eq:btilde}. 
\end{proof}



\begin{remark}
  For $X$, $Y$ and $\tilde{Y}$ given in the previous lemma, a natural question that arises is whether $Y$ is a degraded version of $\Yt$, i.e. $X\rightarrow \tilde{Y}\rightarrow Y$. Unfortunately, this is not true in general, since the matrix $\mathbf{U\widetilde{\Sigma}^{-1}\Sigma U^T}$ does not necessarily contain only positive entries, although it is doubly-stochastic. However, since the PICs of $X$ and $\tilde{Y}$ upper bound the PICs of $X$ and $Y$, it is natural to expect that, at least in some sense, $\tilde{Y}$ is more informative about $X$ than $Y$. This intuition is indeed correct for certain estimation problems where a one-bit function of $X$ is to be inferred from a single observation $Y$ or $\tilde{Y}$, and will be investigated in the next subsection. In addition, using the characterization of the PICs in Theorem \ref{thm:PIC_Charac}, it follows that \textit{any} function of $X$ can be inferred with smaller MMSE from $\tilde{Y}$ than from $Y$. Consequently, even if, for example $I(X;\tilde{Y})\leq I(X;Y)$, any function of $X$ can be estimated with smaller MMSE for $\tilde{Y}$ than from $Y$.
\end{remark}

\subsection{One-bit Functions and Channel Transformations}
\label{sec:functions}

Let $B\rightarrow X \rightarrow Y$, where $B$ is a binary random variable. When $X$ and $Y$ have a conforming probability distribution, the PICs of $X$ and $Y$ have a particularly interesting interpretation: they can be understood as the filter coefficients in a linear transformation from $p_{B|X}$ into $p_{B|Y}$, as we explain next. Consider the joint distribution of $B$ and $Y$, denoted here by $\bB$, given by 
\begin{align}
  \label{eq:OBF}
  \bB \defined [\bx~ ~1-\bx]^T\bP= [\bx~ ~1-\bx]^T\bP_{X|Y}\bD_Y=[\by~ ~ 1-\by]^T\bD_Y,
\end{align}
where $\bx\in \Reals^m$ and $\by\in \Reals^n$ are column-vectors with entries $x_i=p_{B|X}(0|i)$ and $y_j=p_{B|Y}(0|j)$. In particular, if $B$ is a deterministic function of $X$, $\bx\in \{0,1\}^m$.

If $\bP$ is conforming and $\calX=\calY=[m]$, then $\bP=\bD^{1/2}\bU\bSigma\bU^T\bD^{1/2}$, where $\bD=\bD_X=\bD_Y$. Assuming $\bD$ fixed, the joint distribution $\bB$ is entirely specified by the linear transformation of $\bx$ into $\by$. Denoting $\bT\defined \bU^T\bD^{1/2}$, this transformation is done in three steps:
\begin{enumerate}
  \item (Linear transform) $\widehat{\bx}\defined \bT\bx$,
  \item (Filter) $\widehat{\by}\defined \bSigma\widehat{\bx}$, where the diagonal of
    $\bSigma^2$ are the PICs of $X$ and $Y$, 
  \item (Inverse transform)  $\by= \bT^{-1}\widehat{\by}$.
\end{enumerate}
Note that $\widehat{x}_1=\widehat{y}_1=1-\EE{B}$ and  $\widehat{\by}=
\bT\by$. Consequently, the PICs of $X$ and $Y$ correspond to the singular values (or filter coefficients) of the linear transformation of  $p_{B|X}(0|\cdot)$
into $p_{B|Y}(0|\cdot)$.




A similar interpretation can be made for symmetric channels, where
$\Pygx=\Pygx^T=\bU\bLambda\bU^T$ and $\Pygx$ acts as the matrix of the linear
transformation of $\Px$ into
$\Py$.  Note that $\Py = \Pygx\Px$, and,
consequently, $\Px$ is transformed into $\Py$ in the same three steps as before:
\begin{enumerate}
  \item (Linear transform) $\widehat{\Px}= \bU^T\Px$,
  \item (Filter) $\widehat{\Py}\defined\bLambda\widehat{\Px}$, where the diagonal of $\bLambda^2$ is the PICs of $X$ and $Y$ in the particular case when $X$ is uniformly distributed (Lemma \ref{lem:conf}),
  \item (Inverse transform)  $\Py= \bU\widehat{\Py}$.

\end{enumerate} From this perspective, the vector $\mathbf{z}=\bU\bLambda\ones m^{-1/2}$ can be understood as a proxy for the noise effect of the channel. Note that $\sum_i z_i=1$. However, the entries of $\mathbf{z}$ are not necessarily positive, and $\mathbf{z}$ might not be a  probability distribution.

We now illustrate these ideas by investigating binary channels with additive noise in the next section, where $\bT$ will correspond to the well-known Walsh-Hadamard transform matrix.

\subsection{Example: Binary Additive Noise Channels}
\label{sec:binAdd}

In this example, let $\calX^n, \calY^n\subseteq \{-1,1\}^n$ be the support sets of $X^n$ and $Y^n$, respectively.  We define two sets of channels that maps $X^n$ to $Y^n$. In each set definition, we  assume the conditions for $p_{Y^n|X^n}$ to be a valid probability distribution (i.e. non-negativity and unit sum).

 \begin{defn}
    The set of \textit{parity-changing channels} of block-length $n$, denoted
    by $\calA_{n}$, is defined as:
    \begin{align}
      \calA_{n} \defined \left\{
        p_{Y^n|X^n} \mid \forall
        \calS\subseteq[n],~\exists c_{\calS} \in
        [-1,1]  \mbox{ s.t. }
        \EE{\chi_\calS(Y^n)|X^n}=c_{\calS}\chi_\calS(X^n)  \right\} \label{eq:PAchannels},
    \end{align}
    where $\chi_\calS(\cdot)$ is defined in \eqref{eq:defchi}.
    The set of all \textit{binary additive noise channels} is given by
        \begin{align}
          \mathcal{B}_{n} \defined \left\{
        p_{Y^n|X^n} \mid  \exists Z^n \mbox{ s.t. }
        Y^n=X^n\oplus Z^n, \mbox{ supp}(Z^n)\subseteq\{-1,1\}^n, Z^n\independent X^n \right\}.
        \label{eq:PBBchannels}
    \end{align}
 \end{defn}
 The definition of parity-changing channels is inspired by results from the literature on Fourier analysis of Boolean functions. For an overview of the topic we refer the reader to the survey \cite{odonnell_topics_2008}.  The set of binary additive noise channels, in turn, is widely used in the information theory literature. The following lemma shows that both characterizations are equivalent.
%
%
%

 \begin{lem}
   \label{thm:AB}
For $\calA_n$ and $\calB_n$ given in \eqref{eq:PAchannels} and \eqref{eq:PBBchannels}, respectively,  $\calA_n=\mathcal{B}_n$.
 \end{lem}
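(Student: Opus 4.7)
The plan is to prove both inclusions $\calB_n \subseteq \calA_n$ and $\calA_n \subseteq \calB_n$, exploiting the fact that $\{\chi_\calS\}_{\calS\subseteq [n]}$ forms an orthonormal basis for functions on $\{-1,1\}^n$ (in the spirit of Fourier analysis of Boolean functions, which the definition of $\calA_n$ was designed to match).

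For the direction $\calB_n \subseteq \calA_n$, I would start with any channel $Y^n = X^n \oplus Z^n$ with $Z^n \independent X^n$. Since $\oplus$ is entrywise multiplication on $\{-1,1\}^n$, the characters factor: $\chi_\calS(Y^n) = \prod_{i\in\calS} X_i Z_i = \chi_\calS(X^n)\,\chi_\calS(Z^n)$. Independence of $X^n$ and $Z^n$ then gives
\begin{equation*}
\EE{\chi_\calS(Y^n)\mid X^n} = \chi_\calS(X^n)\,\EE{\chi_\calS(Z^n)},
\end{equation*}
so the parity-changing property holds with $c_\calS \defined \EE{\chi_\calS(Z^n)} \in [-1,1]$.

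For the direction $\calA_n \subseteq \calB_n$, which is the substantive direction, I define the candidate noise by $Z^n \defined X^n \oplus Y^n$ (entrywise, so $Z_i = X_i Y_i$) and verify that $Z^n \independent X^n$. The key tool is the Fourier expansion of the point-indicator on $\{-1,1\}^n$: for any $z^n \in \{-1,1\}^n$,
\begin{equation*}
\indicator\{Z^n = z^n\} = \frac{1}{2^n}\prod_{i=1}^n (1 + z_i Z_i) = \frac{1}{2^n}\sum_{\calS\subseteq[n]} \chi_\calS(z^n)\,\chi_\calS(Z^n).
\end{equation*}
Taking conditional expectation given $X^n = x^n$ and using $\chi_\calS(Z^n) = \chi_\calS(X^n)\chi_\calS(Y^n)$ together with the defining property of $\calA_n$, each summand becomes
\begin{equation*}
\EE{\chi_\calS(Z^n)\mid X^n=x^n} = \chi_\calS(x^n)\cdot c_\calS\,\chi_\calS(x^n) = c_\calS,
\end{equation*}
where the last equality uses $\chi_\calS(x^n)^2 = 1$ for $x^n \in \{-1,1\}^n$. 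Hence $p_{Z^n\mid X^n}(z^n\mid x^n) = 2^{-n}\sum_\calS c_\calS \chi_\calS(z^n)$ is independent of $x^n$, proving $Z^n \independent X^n$ and placing the channel in $\calB_n$.

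I do not expect any serious obstacle. The only subtlety is the cancellation $\chi_\calS(x^n)^2 = 1$, which is exactly why the $\{-1,1\}$-alphabet (as opposed to $\{0,1\}$) was chosen in the definitions of $\chi_\calS$ and $\oplus$; this is what turns the parity-changing condition into pure independence of the additive noise. One should also remark that $p_{Z^n\mid X^n}$ is automatically non-negative because it equals the transported $p_{Y^n\mid X^n}(x^n\oplus z^n\mid x^n)$, so no further positivity check is needed beyond the hypothesis that $p_{Y^n\mid X^n}$ is a valid conditional distribution.
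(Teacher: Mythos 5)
Your proof is correct and follows essentially the same approach as the paper: both directions rely on the character factorization $\chi_\calS(a^n\oplus b^n)=\chi_\calS(a^n)\chi_\calS(b^n)$ and, for $\calA_n\subseteq\calB_n$, the Fourier expansion of the point indicator on $\{-1,1\}^n$. The only difference is organizational: the paper first establishes shift-invariance $p_{Y^n|X^n}(y^n\oplus z^n\mid x^n\oplus z^n)=p_{Y^n|X^n}(y^n\mid x^n)$ using the parity-changing property twice and then defines $Z^n$ via $p_{Z^n}(z^n)=p_{Y^n|X^n}(z^n\mid\ones^n)$, whereas you compute $p_{Z^n|X^n}$ directly from the indicator expansion and use the property once via $\chi_\calS(x^n)^2=1$, which is a modest streamlining of the same argument.
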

 \begin{proof}
 The proof is in Appendix \ref{app:proofsPIC_IT}.

  \end{proof}

  The previous theorem suggests that there is a correspondence between the
  coefficients $c_\calS$ in \eqref{eq:PAchannels} and the distribution of the
  additive noise $Z^n$ in the definition of $\mathcal{B}_n$. The next result
  shows that this is indeed the case and, when $X^n$ is uniformly distributed,
  the coefficients $c_\calS^2$ correspond to the PICs
  of $X^n$ and $Y^n$.

  \begin{thm}
    \label{thm:PIbinary}
    Let $p_{Y^n|X^n}\in \mathcal{B}_n$, and $X^n\sim p_{X^n}$. Then
    $\bP_{X^n,Y^n}=\bD_{X^n}\bbH_{2^n}\bLambda \bbH_{2^n}$, where $\bbH_l$ is
    the $l\times l$ normalized Hadamard matrix\footnote{We define the normalized Hadamard matrix $\bbH_{2^k}$ as
    $ \bbH_1\defined [1]$, 
    \begin{equation*}
    \bbH_2\defined \frac{1}{\sqrt{2}}
    \begin{bmatrix} 
    1&1\\
    1&-1
    \end{bmatrix},
    \end{equation*}
    and $\bbH_{2^{k}}\defined \bbH_2\otimes \bbH_{2^{k-1}}$.}
    (hence $\bbH_l^2=\eye$).
    Furthermore, for $Z^n\sim p_{Z^n}$,  $\diag{\bLambda} =
    2^{n/2}\bbH_{2^n}\mathbf{p}_{Z^n}$, and the diagonal entries of $\bLambda$
    are equal to $c_\calS$ in \eqref{eq:PAchannels}. Finally, if $X$ is
    uniformly distributed, then $c_\calS^2$ are the principal inertia components
    of $X^n$ and $Y^n$.
  \end{thm}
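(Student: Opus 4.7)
The plan is to directly diagonalize the conditional probability matrix of a binary additive noise channel using the Walsh--Hadamard transform, which plays the role of the Fourier transform on the group $(\{-1,1\}^n, \oplus)$. Index rows/columns of $\bbH_{2^n}$ by subsets $\calS \subseteq [n]$ on one side and by binary strings $x^n \in \{-1,1\}^n$ on the other, via $[\bbH_{2^n}]_{\calS, x^n} = 2^{-n/2}\chi_\calS(x^n)$; the recursive Kronecker definition in the footnote yields exactly this form, and orthogonality $\bbH_{2^n}^2 = \eye$ follows from the character identity $\sum_{x^n} \chi_{\calS \Delta \calS'}(x^n) = 2^n \mathbf{1}[\calS = \calS']$.

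Next I would use the additive-noise structure to write $[\bP_{Y^n|X^n}]_{x^n, y^n} = p_{Z^n}(x^n \oplus y^n)$ and compute the conjugation
\[
[\bbH_{2^n} \bP_{Y^n|X^n} \bbH_{2^n}]_{\calS, \calS'} = 2^{-n} \sum_{x^n, y^n} \chi_\calS(x^n)\, p_{Z^n}(x^n \oplus y^n)\, \chi_{\calS'}(y^n).
\]
After the change of variables $z^n = x^n \oplus y^n$, the multiplicativity $\chi_{\calS'}(x^n \oplus z^n) = \chi_{\calS'}(x^n)\chi_{\calS'}(z^n)$ separates the sum, and the orthogonality relation collapses the $x^n$-sum to $2^n \mathbf{1}[\calS = \calS']$. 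This leaves $\mathbf{1}[\calS = \calS']\, \EE{\chi_\calS(Z^n)}$, proving $\bbH_{2^n} \bP_{Y^n|X^n} \bbH_{2^n} = \bLambda$ is diagonal. Since $\bbH_{2^n}^2 = \eye$, conjugating back gives $\bP_{Y^n|X^n} = \bbH_{2^n} \bLambda \bbH_{2^n}$, and multiplying on the left by $\bD_{X^n}$ gives the claimed decomposition of $\bP_{X^n,Y^n}$. The entry $[\bLambda]_{\calS,\calS} = \EE{\chi_\calS(Z^n)} = \sum_{z^n} \chi_\calS(z^n)\, p_{Z^n}(z^n)$ is exactly the $\calS$-th entry of $2^{n/2}\bbH_{2^n}\mathbf{p}_{Z^n}$, which establishes the second assertion.

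To identify the diagonal entries with the $c_\calS$ of \eqref{eq:PAchannels}, I would invoke Lemma \ref{thm:AB} to move to the additive form $Y^n = X^n \oplus Z^n$ with $Z^n \independent X^n$, and compute
\[
\EE{\chi_\calS(Y^n) \mid X^n} = \EE{\chi_\calS(X^n \oplus Z^n) \mid X^n} = \chi_\calS(X^n)\, \EE{\chi_\calS(Z^n)},
\]
so that $c_\calS = \EE{\chi_\calS(Z^n)} = [\bLambda]_{\calS,\calS}$ as required.

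For the final claim, when $X^n$ is uniform, $\bD_{X^n} = 2^{-n}\eye$, and since any binary additive noise channel preserves the uniform distribution, $\bD_{Y^n} = 2^{-n}\eye$ as well. Then
\[
\bQ = \bD_{X^n}^{-1/2} \bP_{X^n,Y^n} \bD_{Y^n}^{-1/2} = \bbH_{2^n} \bLambda \bbH_{2^n},
\]
which is an orthogonal-diagonal-orthogonal factorization (with signed singular values $c_\calS$). By characterization (4) of Theorem \ref{thm:PIC_Charac}, the PICs are the squares of the non-leading singular values of $\bQ$; the leading singular value is $|c_\emptyset| = 1$, and the remaining squared singular values are exactly $c_\calS^2$ for $\calS \neq \emptyset$. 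The main obstacle will just be bookkeeping---being careful about the subset-versus-string indexing of $\bbH_{2^n}$ and verifying that the character identity and the convolution structure align correctly---but the computation is essentially the classical fact that the Walsh--Hadamard transform diagonalizes convolution on $\mathbb{Z}_2^n$.
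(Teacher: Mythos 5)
Your proposal is correct and follows essentially the same route as the paper: both establish that the Walsh--Hadamard transform diagonalizes the additive-noise channel matrix and then multiply on the left by $\bD_{X^n}$. The only cosmetic difference is that the paper reads off the eigenvector interpretation directly from the definition of $\calA_n$ (each $\chi_\calS$ is a right eigenvector of $\bP_{Y^n|X^n}$ with eigenvalue $c_\calS$) and then appeals to Lemma \ref{lem:conf} for the final PIC identification, whereas you carry out the conjugation $\bbH_{2^n}\bP_{Y^n|X^n}\bbH_{2^n}$ explicitly via the character-orthogonality argument and invoke characterization (4) of Theorem \ref{thm:PIC_Charac}; these are the same diagonalization stated in two languages.
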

  \begin{proof}
    Let  $p_{Y^n|X^n}\in \calA_n$ be given. From Lemma \ref{thm:AB} and the
    definition of $\calA_n$, it follows that $\chi_\calS(Y^n)$ is a right
    eigenvector of $p_{Y^n|X^n}$ with corresponding eigenvalue $c_\calS$.  Since
    $\chi_\calS(Y^n)2^{-n/2}$ corresponds to a row of $\bbH_{2^n}$ for each
    $\calS$ (due to the Kronecker product construction of the Hadamard matrix) and $\bbH_{2^n}^2=\eye$, then
    $\bP_{X^n,Y^n}=\bD_{X^n}\bbH_{2^n}\bLambda \bbH_{2^n}$. Finally, note that
    $\mathbf{p}_Z^T=2^{-n/2}\ones^T\bLambda\bbH_{2^n}$. From Lemma
    \ref{lem:conf}, it follows that  $c_\calS^2$ are the PICs of $X^n$ and $Y^n$ if $X^n$ is uniformly distributed.
  \end{proof}

  \begin{remark}
    Theorem \ref{thm:PIbinary} suggests that one possible method for estimating the distribution of the additive binary noise $Z^n$ is to estimate  its effect on the  parity bits of $X^n$ and $Y^n$. In this case, we are estimating the coefficients $a_\calS$ of the Walsh-Hadamard transform of $p_{Z^n}$. This approach was  studied by Raginsky \etal in \cite{raginsky_recursive_2013} and in other learning literature (see \cite{odonnell_analysis_2014} and the references therein).
  \end{remark}

Theorem \ref{thm:PIbinary} illustrates the filtering role of the principal inertia components (discussed in Section \ref{sec:functions}) in binary additive noise channels. If $X^n$ is uniform, then the vector of conditional probabilities $\Px$ is transformed into the vector of \textit{a posteriori} probabilities $\Py$ by: (i) taking the Hadamard transform of $\Px$, (ii) filtering the transformed vector according to the coefficients $c_\calS$ (these coefficients have a one-to-one mapping to the entries of the vector resulting from the Hadamard transform of $\bp_Z$), and (iii) taking the inverse Hadamard transform to recover $\Py$. 

\subsection[The Information of a Boolean Function of the Input of a
 Channel]{Quantifying the Information of a Boolean Function of the Input of a
Noisy Channel}

We now investigate the connection between the PICs and $f$-information (cf. Eq. \eqref{eq:defn_finf}) in the context of one-bit functions of $X$. Recall from the discussion in the beginning of this section and, in particular, equation \eqref{eq:OBF},  that for a binary $B$ and $B\rightarrow X \rightarrow Y$, the distribution of $B$ and $Y$ is entirely specified by the transformation of $\xb$ into $\yb$, where $\xb$ and $\yb$ are vectors with entries equal to $p_{B|X}(0|\cdot)$ and $p_{B|Y}(0|\cdot)$, respectively.

For $\EE{B}=1-a$, the $f$-information between $B$ and $Y$ is given by (cf. \eqref{eq:defn_finf})
    \begin{align*}
            I_f(B;Y) =\ExpVal{}{ a f\left( \frac{p_B(0|Y)}{a}
            \right) +(1-a) f\left( \frac{1-p_B(0|Y)}{1-a}
            \right) }.
    \end{align*}

For $0\leq r,s \leq 1$, and since $f$ is smooth with $f(1)=0$, we can expand $f\left( \frac{r}{s} \right)$ around
1 as

    \begin{equation*}
        f\left( \frac{r}{s} \right) = \sum_{k=1}^\infty  \frac{
        f^{(k)}(1)}{k!}\left( \frac{r-s}{r} \right)^k.
    \end{equation*}
Denoting 
    \begin{align*}
        c_k(\alpha) &\defined
        \frac{1}{a^{k-1}}+\frac{(-1)^k}{(1-a)^{k-1}},
    \end{align*}
the $f$-information can then be expressed as
    \begin{align}
        I_f(B;Y)& = \sum_{k=2}^\infty  \frac{
         f^{(k)}(1)c_k(a)}{k!}\ExpVal{}{(p_B(0|Y)-a)^k}.
     \label{eq:momentDecomp}
    \end{align}

    Similarly to \cite[Chapter 4]{_information_2004}, for a fixed $\EE{B}=1-a$, maximizing the PICs of $X$ and $Y$ will always maximize the first term in the expansion \eqref{eq:momentDecomp}. To see why this is the case, observe that    
    \begin{align}
        \ExpVal{}{(p_{B|Y}(0|Y)-a)^2} &= (\by-a)^T\bD_Y(\by-a) \nonumber\\
                               &= \by^T\bD_Y\by -a^2 \nonumber\\
                               &=\bx^T\bD_X^{1/2}\bU\bSigma^2\bU^T\bD_X^{1/2}\bx-a^2.\label{eq:var}
     \end{align}
For a fixed $a$ and any $\bx$ such that $\bx^T\ones=a$, \eqref{eq:var} is non-decreasing in the diagonal  entries of $\bSigma^2$ which, in turn, are exactly the PICs of $X$ and $Y$. Equivalently, \eqref{eq:var} is non-decreasing in the $\chi^2$-divergence between $\pxy$ and $\px\py$.

However, we do note that increasing the PICs does not increase the $f$-information between $B$ and $Y$ in general. Indeed, for a fixed $\bU$, $\bV$ and marginal distributions of $X$ and $Y$, increasing the PICs might not even lead to a valid probability distribution matrix $\bP$.

Nevertheless, if $\bP$ is conforming and $X$ and $Y$ are uniformly distributed over $[q]$, as shown in Lemma \ref{lem:qary}, by increasing the PICs we can define a new random variable $\tilde{Y}$ that results from sending $X$ through a $(\epsilon,q)$-SC, where $\epsilon$ is given in \eqref{eq:epsilon}. In this case, the $f$-information between $B$ and $Y$ has a simple expression when $B$ is a function of $X$.

\begin{lem}
  \label{lem:qaryIf}
  Let $B\rightarrow X \rightarrow \Yt$, where $B=b(X)$ for some $b:[q]\rightarrow
  \{0,1\}$,  $\EE{B}=1-a$ where $aq$ is an integer, $X$ is uniformly distributed in $[q]$ and $\Yt$ is the
  result of passing $X$ through a $(\epsilon,q)$-SC with $\epsilon\leq (q-1)/q$. Then
  \begin{equation}
    \label{eq:genaralqary}
    I_f(B;\Yt)=a^2f\left( 1+\sigma_1 c
    \right)+2a(1-a)f\left(1-\sigma_1\right)+(1-a)^2f\left( 1+\sigma_1c^{-1} \right)
  \end{equation}
  where $\sigma_1=\rho_m(X;\Yt)=1-\epsilon q (q-1)^{-1}$ and $c\defined (1-a)a^{-1}$. In particular,
  for $f(x)=x\log x$, then $I_f(X;\Yt)=I(X;\Yt)$, and for $\sigma_1 = 1-2\delta$
    \begin{align}
        I(B;\Yt)&=h_b(a)-\alpha h_b\left( 2\delta(1-a) \right)-(1-a)h_b(2\delta
        a) \label{eq:qaryMI}\\
        &\leq 1-h_b(\delta) \label{eq:qaryMax}.
    \end{align}{}
  where $h_b(\cdot)$ is the binary entropy function, defined in \eqref{eq:defhb}.
\end{lem}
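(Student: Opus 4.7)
The plan is to compute the posterior $p_{B|\tilde Y}(0|y)$ directly from the channel symmetry, substitute it into the definition of $I_f$, and then specialize to $f(x)=x\log x$. Since $X$ is uniform on $[q]$ and the $(\epsilon,q)$-SC is doubly stochastic, $\tilde Y$ is also uniform on $[q]$ and $p_{X|\tilde Y}(x|y)=p_{\tilde Y|X}(y|x)$. Setting $\calX_0\defined\{x:b(x)=0\}$ (so $|\calX_0|=aq$ by the integrality hypothesis on $aq$), for $y$ with $b(y)=0$ I would sum the posterior over $\calX_0$ and separate the diagonal term $x=y$:
\begin{equation*}
p_{B|\tilde Y}(0|y)=(1-\epsilon)+(aq-1)\frac{\epsilon}{q-1}=1-\frac{\epsilon q}{q-1}(1-a)=a+\sigma_1(1-a),
\end{equation*}
using $\sigma_1=1-\epsilon q/(q-1)$. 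The same calculation with no diagonal contribution gives $p_{B|\tilde Y}(0|y)=a(1-\sigma_1)$ whenever $b(y)=1$, and $\Pr(b(\tilde Y)=0)=a$ follows from uniformity of $\tilde Y$.

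Next I would substitute these two posterior values into
\begin{equation*}
I_f(B;\tilde Y)=\EE{a\,f(p_{B|\tilde Y}(0|\tilde Y)/a)+(1-a)\,f(p_{B|\tilde Y}(1|\tilde Y)/(1-a))}
\end{equation*}
and split the outer expectation according to $b(\tilde Y)\in\{0,1\}$ (weights $a$ and $1-a$). A direct simplification yields four arguments for $f$, namely $1+\sigma_1 c$, $1-\sigma_1$, $1-\sigma_1$, and $1+\sigma_1 c^{-1}$ with $c=(1-a)/a$; combining the two identical $f(1-\sigma_1)$ contributions produces \eqref{eq:genaralqary}.

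For \eqref{eq:qaryMI}, I would set $f(x)=x\log x$ and $\sigma_1=1-2\delta$, so $1+\sigma_1 c=(1-2\delta(1-a))/a$, $1-\sigma_1=2\delta$, and $1+\sigma_1 c^{-1}=(1-2\delta a)/(1-a)$. Each $x\log x$ factor splits as $\log(\text{numerator})-\log(\text{denominator})$; regrouping the five resulting logarithmic families $\log a$, $\log(1-a)$, $\log(1-2\delta(1-a))$, $\log(1-2\delta a)$, and $\log 2\delta$, and invoking the definition \eqref{eq:defhb} of $h_b$, recovers the claimed formula. For \eqref{eq:qaryMax}, let $\phi(a)$ denote the right-hand side of \eqref{eq:qaryMI}; it is symmetric under $a\leftrightarrow 1-a$, vanishes at $a\in\{0,1\}$, and satisfies $\phi(1/2)=h_b(1/2)-h_b(\delta)=1-h_b(\delta)$. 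The bound then follows by verifying that $a=1/2$ is the global maximum of $\phi$ on $[0,1]$.

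The main obstacle is the maximization claim in the last step: while direct calculus (using $\phi'(1/2)=0$ by symmetry together with a second-derivative check) works, a cleaner proof would exploit the mixture representation of the induced binary channel $B\to\hat B\defined b(\tilde Y)$ as ``$\hat B=B$ with probability $1-2\delta$, and $\hat B\sim\mathrm{Bern}(1-a)$ independent of $B$ otherwise,'' which invites a symmetrization argument that upper-bounds $I(B;\hat B)$ by the BSC$(\delta)$ capacity $1-h_b(\delta)$. The intermediate algebra reassembling $x\log x$ factors into $h_b$ form is mechanical but tedious, given the five logarithmic families one must track.
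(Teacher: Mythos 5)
Your proposal is correct and follows essentially the same path as the paper: compute the posterior $p_{B|\tilde Y}(0|\cdot)$ (which the paper reads off from the decomposition $\tilde{\bP}=\sigma_1\bD+(1-\sigma_1)\Px\Px^T$ of Lemma \ref{lem:qary}, whereas you obtain it directly from the $(\epsilon,q)$-SC transition probabilities — an algebraically equivalent step), split the $I_f$ sum according to the value of $b(\tilde Y)$, and specialize to $f(x)=x\log x$. For \eqref{eq:qaryMax} the paper simply invokes concavity of \eqref{eq:qaryMI} in $a$ together with symmetry about $a=1/2$, which is the same symmetry-plus-curvature check you propose; the alternative ``mixture/degradation'' route you sketch is not used in the paper.
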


\begin{proof}
  Since $B$ is a deterministic function of $X$ and $aq$ is an integer, $\bx$ is a vector with $aq$ entries equal to 1 and $(1-a)q$ entries equal to 0. It follows from \eqref{eq:qarydist} that  
  \begin{align*}
    I_f(B;\Yt)=&\frac{1}{q}\sum_{i=1}^q af\left(
    \frac{(1-\sigma_1)a+x_i\sigma_1}{a} \right)+(1-a)f\left(
    \frac{1-(1-\sigma_1)a-x_i\sigma_i}{1-a} \right)\\
    =&a^2f\left( 1+\sigma_1 \frac{1-a}{a}
    \right)+2a(1-a)f\left(1-\sigma_1\right)+(1-a)^2f\left(
    1+\sigma_1\frac{a}{1-a} \right).
  \end{align*}
  Letting $f(x)=x\log x$, \eqref{eq:qaryMI} follows immediately. Since
  \eqref{eq:qaryMI} is concave in $a$ and symmetric around $a=1/2$, it is
  maximized at $a=1/2$, resulting in \eqref{eq:qaryMax}.
\end{proof}

\subsection{On the ``Most Informative Bit''}
We now return to channels with additive binary noise, analyzed in Section \ref{sec:binAdd}. Let $X^n$ be a uniformly distributed binary string of length $n$ ($\calX = \{-1,1\})$ and $Y^n$ be the result of passing $X^n$ through a memoryless binary symmetric channel with crossover probability $\delta\leq 1/2$. Kumar and Courtade conjectured  \cite{kumar_which_2013} that for all binary $B$ and $B\rightarrow X^n \rightarrow Y^n$ we have \begin{equation} I(B;Y^n)\leq 1-h_b(\delta).~~\mbox{(conjecture)} \label{eq:conjI} \end{equation} It is sufficient to consider $B$ a function of $X^n$, denoted by $B=b(X^n)$, $b:\{-1,1\}^n\rightarrow \{0,1\}$, and we make this assumption henceforth.

From the discussion in Section \ref{sec:binAdd}, for the memoryless binary symmetric channel $Y^n=X^n\oplus Z^n$, where $Z^n$ is an i.i.d. string with $\Pr\{Z_i=1\}=1-\delta$, and any $\calS\in [n]$,
\begin{align*}
  \EE{\chi_\calS(Y^n)|X^n}&=
  \chi_\calS(X^n)\left(\Pr\left\{\chi_\calS(Z^n)=1\right\}-\Pr\left\{\chi_\calS(Z^n)=-1\right\}\right)\\
                          &=
                          \chi_\calS(X^n)\left(2\Pr\left\{\chi_\calS(Z^n)=1\right\}-1\right)\\
                          &=\chi_\calS(X^n)(1-2\delta)^{|\calS|}.
\end{align*}
It follows directly that $c_\calS = (1-2\delta)^{|\calS|}$ for all $\calS\subseteq [n]$. Consequently, from Theorem \ref{thm:PIbinary}, the principal inertia components of $X^n$ and $Y^n$ are of the form $(1-2\delta)^{2|\calS|}$ for some $\calS\subseteq [n]$. Observe that the principal inertia components act, broadly speaking, as a low pass filter on the vector of conditional probabilities $\bx$ given in \eqref{eq:OBF}, since it attenuates the high order interaction terms in the Walsh-Hadamard transform of $\bx$.

Can the noise distribution be modified so that the principal inertia components act as an all-pass filter? More specifically, what happens when $\Yt^n=X^n\oplus W^n$, where $W^n$ is such that the principal inertia components between $X^n$ and $\Yt^n$ satisfy $\sigma_i=1-2\delta$? Then, from Lemma \ref{lem:qary}, $\Yt^n$ is the result of sending $X^n$ through a $(\epsilon,2^n)$-SC with $\epsilon=2\delta(1-2^{-n})$. Therefore, from \eqref{eq:qaryMax},
    \begin{equation*}
        I(B;\Yt^n)\leq 1-h_b(\delta).
    \end{equation*}

For any function $b:\{-1,1\}^n\rightarrow \{0,1\}$ such that $B=b(X^n)$, from standard results in Fourier analysis of Boolean functions  \cite[Prop.  1.1]{odonnell_topics_2008}, $b(X^n)$ can be expanded as \[b(X^n)=\sum_{\calS\subseteq [n]}\beta_\calS \chi_\calS(X^n) .\] The value of $B$ is uniquely determined by the action of $b$ on $\chi_\calS(X^n)$. Consequently, for a fixed function $b$, one could expect that $\Yt^n$ should be more informative about $B$ than $Y^n$, since the parity bits $\chi_\calS(X^n)$ are more reliably estimated from $\Yt^n$ than from $Y^n$. Indeed, the memoryless binary symmetric channel attenuates $\chi_\calS(X^n)$ exponentially  in $|\calS|$, acting (as argued previously) as a low-pass filter. In addition, if one could prove that for any fixed $b$ the inequality  $I(B;Y^n)\leq I(B;\Yt^n)$ holds, then \eqref{eq:conjI} would be proven true. This motivates the following conjecture.


\begin{conj}
  \label{conj}
    For all $b:\{-1,1\}^n\rightarrow \{0,1\}$ and $B=b(X^n)$ 
    \begin{align*}
        I(B;Y^n)\leq I(B;\Yt^n).
    \end{align*}
\end{conj}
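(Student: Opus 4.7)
The plan is to reduce the inequality to a conditional-entropy comparison, make both posteriors explicit via the Walsh--Hadamard expansion, and then try to promote the resulting variance inequality to a convex-order statement. Since $X^n$ is uniform on $\{-1,1\}^n$, so are $Y^n$ and $\tilde Y^n$, and Conjecture~\ref{conj} is equivalent to $\EE{h_b(g_Y(Y^n))}\geq\EE{h_b(g_{\tilde Y}(\tilde Y^n))}$, where $g_Y(y)\defined p_{B|Y^n}(0|y)$ and $g_{\tilde Y}(y)\defined p_{B|\tilde Y^n}(0|y)$. Writing $b(x)=\sum_{\calS\subseteq[n]}\beta_\calS\chi_\calS(x)$ in the Walsh basis (so that $\beta_\emptyset=1-a$), Theorem~\ref{thm:PIbinary} together with the multiplicativity $\chi_\calS(x\oplus z)=\chi_\calS(x)\chi_\calS(z)$ gives $\EE{B\mid Y^n=y}=\sum_\calS \beta_\calS(1-2\delta)^{|\calS|}\chi_\calS(y)$, while Lemma~\ref{lem:qary} gives $\EE{B\mid\tilde Y^n=y}=\beta_\emptyset+\sigma_1\sum_{\calS\neq\emptyset}\beta_\calS\chi_\calS(y)$ with $\sigma_1=1-2\delta$. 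Both posteriors share the mean $1-a$, and Parseval yields
\begin{equation*}
\mathrm{Var}(g_Y(Y^n))=\sum_{\calS\neq\emptyset}(1-2\delta)^{2|\calS|}\beta_\calS^2 \;\leq\; \sigma_1^2\sum_{\calS\neq\emptyset}\beta_\calS^2=\mathrm{Var}(g_{\tilde Y}(\tilde Y^n)),
\end{equation*}
since $(1-2\delta)^{|\calS|}\leq\sigma_1$ for $|\calS|\geq1$. A second-order Taylor expansion of $h_b$ around $a$ immediately turns this variance comparison into the required entropy inequality in the high-noise regime $\delta\to 1/2$, in line with Samorodnitsky's resolution of the original one-bit function conjecture in that regime.

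To cover all $\delta$ I would try to upgrade the variance inequality to the convex-order statement $g_Y(Y^n)\preceq_{\mathrm{cx}} g_{\tilde Y}(\tilde Y^n)$, which by concavity of $h_b$ implies the conjecture. By Strassen's theorem this is equivalent to exhibiting a coupling $(G,\tilde G)$ of the two posteriors with $\EE{\tilde G\mid G}=G$. A natural starting point is to couple the two channels through shared randomness: realize $\tilde Y^n=(1-T)X^n+TV$ with $T\sim\mathrm{Bernoulli}(2\delta)$ and $V$ uniform independent (a valid $q$-ary SC by Lemma~\ref{lem:qary}), and generate $Y^n$ from $X^n$ by an independent bit-wise BSC; one would then need to show that, after integrating out $T$, $V$, and the bit flips, the conditional expectation of $g_{\tilde Y}(\tilde Y^n)$ given $g_Y(Y^n)$ equals $g_Y(Y^n)$.

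The main obstacle is precisely that the auxiliary variable $T$ does not average out when conditioning only on $Y^n$, so this coupling is not obviously a martingale. An alternative strategy is an interpolation argument along a one-parameter family of channels whose PIC spectra interpolate from $(1-2\delta)^{2|\calS|}$ to $(1-2\delta)^2$, verifying $\tfrac{d}{dt}I(B;Y^n_t)\geq 0$ via a de~Bruijn-style identity. The obstruction here, already flagged in the remark after Lemma~\ref{lem:qary}, is that an arbitrary PIC spectrum need not correspond to a stochastic matrix (the candidate channel $\bU\tilde{\bSigma}^{-1}\bSigma\bU^T$ has negative entries in general), so data-processing alone cannot drive the interpolation; any complete proof will likely need to combine the product Fourier structure of the BSC with the $q$-ary structure of $\tilde Y^n$ through a refined hypercontractive estimate that bridges the low- and high-noise regimes.
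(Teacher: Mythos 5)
This statement is a \emph{conjecture} in the paper, not a theorem: the authors explicitly write that they do not resolve it, offering only partial evidence. So there is no ``paper proof'' to compare against, and your attempt should be read as an independent try at an open problem. Your preliminary reduction is sound: the equivalence with $\EE{h_b(p_{B|Y^n}(0\mid Y^n))}\geq\EE{h_b(p_{B|\tilde Y^n}(0\mid \tilde Y^n))}$ is correct (up to a harmless $g\leftrightarrow 1-g$ slip that cancels by the symmetry of $h_b$), the Walsh--Hadamard identity $\EE{b(X^n)\mid Y^n=y}=\sum_\calS \beta_\calS(1-2\delta)^{|\calS|}\chi_\calS(y)$ follows from Theorem~\ref{thm:PIbinary} exactly as you say, the Doeblin representation $\tilde Y^n=(1-T)X^n+TV$ with $T\sim\mathrm{Bernoulli}(2\delta)$ is a valid realization of the $(\epsilon,2^n)$-SC, and the Parseval variance domination $\mathrm{Var}(g_Y)\leq\mathrm{Var}(g_{\tilde Y})$ is correct for $\delta\leq 1/2$. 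The second-order Taylor argument then genuinely establishes a local (high-noise) version, which is consistent with Samorodnitsky's regime for the original Kumar--Courtade conjecture. None of this, however, bridges to general $\delta$.

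The gap is the one you yourself flag, and it is real on both of your proposed routes. For the convex-order route, variance domination with equal means gives $\preceq_{\rm cx}$ among \emph{two-point} or Gaussian laws, but not in general; you would need a genuine martingale coupling, and the Doeblin coupling fails to be one precisely because the auxiliary switch $T$ is visible through $Y^n$ and does not integrate out to a conditional identity. For the interpolation route, the obstruction you name --- that $\bU\tilde{\bSigma}^{-1}\bSigma\bU^T$ need not be a stochastic matrix --- is exactly the remark the paper makes after Lemma~\ref{lem:qary}, and it means $Y^n$ is not a degradation of $\tilde Y^n$, so DPI cannot carry the interpolation. For comparison, the paper's own (partial) evidence takes a different, more elementary angle: rather than comparing $I(B;Y^n)$ with $I(B;\tilde Y^n)$ directly, Theorem~\ref{thm:Estimators} compares $I(B;\widehat B)$ across one-bit \emph{unbiased} estimators via the trace bound of Lemma~\ref{lem:zupper}, obtaining the same right-hand side as in \eqref{eq:genaralqary}. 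That argument only needs the largest PIC and convexity of $I_f$ in the corner parameter $z$, but it does not touch $I(B;Y^n)$ itself. So neither your attempt nor the paper closes the conjecture; your reduction is a correct reformulation and your diagnosis of the obstacles is accurate, but the essential step --- turning the spectral domination $(1-2\delta)^{2|\calS|}\leq(1-2\delta)^2$ into an entropy inequality at all noise levels --- is missing, which is precisely why it remains a conjecture.
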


We note that Conjecture \ref{conj} is false if $B$ is not a deterministic function of $X^n$. In the next section, we provide further evidence for this conjecture by investigating information metrics between $B$ and an estimate $\whB$ derived from $Y^n$.

\subsection{One-bit Estimators}
\label{sec:estimators}

Let $B\rightarrow X \rightarrow Y \rightarrow \widehat{B}$, where $B$ and $\whB$
are binary random variables with $\EE{B}=1-a$ and $\mathbb{E}[\whB]=1-b$. 
Again, we let $\xb\in \Reals^m$ and $\yb\in \Reals^n$ be the
column vectors with entries $x_i = p_{B|X}(0|i)$ and $y_j =
p_{\whB|Y}(0|j)$. The joint distribution matrix of $B$ and $\whB$ is given by
\begin{equation}
  \label{eq:Pbbh}
  \mathbf{P}_{B,\whB}=\left(
\begin{array}{cc}
 z  & a-z  \\
 b-z  & 1-a-b+z \\
\end{array}
\right),
\end{equation}
where $z=\xb^T\bP\yb=\Pr\{B=\whB=0 \}$. For fixed values of $a$ and $b$, the joint distribution
of $B$ and $\whB$ only depends on $z$.

Let $f:\calP_{2\times 2}\rightarrow \Reals$, and, with a slight abuse of
notation, we also denote $f$ as a function of the entries of the $2\times 2$ matrix as
$f(a,b,z)$. If $f$ is convex in $z$ for a fixed $a$ and $b$, then $f$ is
maximized at one of the extreme values of $z$. Examples of such functions $f$
include mutual information and expected error probability.  Therefore,
characterizing the maximum and minimum values of $z$ is equivalent to
characterizing the maximum value of $f$ over all possible mappings $X\rightarrow
B$ and $Y\rightarrow \whB$. This leads to the following definition. 
\begin{defn}
    For a fixed $\bP$ and given  $\EE{B}=1-a$ and $\mathbb{E}[\whB]=1-b$, the minimum and maximum
    values of $z$ over all possible mappings $X\rightarrow B$ and $Y\rightarrow
    \whB$ are defined as
    \begin{equation*}
      z^*_l(a,b,\bP) \defined \min_{\substack{\xb \in \calC^m(a,\bP^T)\\\yb \in
      \calC^n(b,\bP)}}
      \xb^T\bP\yb ~\mbox{ and }~
       z^*_u(a,b,\bP) \defined \max_{\substack{\xb \in \calC^m(a,\bP^T)\\\yb \in
       \calC^n(b,\bP)}}
       \xb^T\bP\yb,
    \end{equation*}
    respectively, and $\calC^{n}(a,\bP)$ is defined in \eqref{eq:Cdef}.
    
\end{defn}


The next lemma provides a simple upper-bound for $z^*_u(a,b,\bP)$ in terms of the largest principal inertia components or, equivalently, the maximal correlation between $X$ and $Y$.
\begin{lem}
  \label{lem:zupper}
  $z_u^*(a,b,\bP)\leq ab+\rho_m(X;Y)\sqrt{a(1-a)b(1-b)}$.
\end{lem}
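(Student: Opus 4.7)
The plan is to bound $\xb^T \bP \yb$ by diagonalizing $\bP$ via the SVD of $\bQ = \bD_X^{-1/2}\bP\bD_Y^{-1/2} = \bU\bSigma\bV^T$ (cf. Definition \ref{def:Q}), peeling off the trivial leading component of the resulting quadratic form, and controlling the rest by Cauchy–Schwarz using the constraint $x_i, y_j \in [0,1]$.

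First, I would write $\bP = \bD_X^{1/2}\bU\bSigma\bV^T\bD_Y^{1/2}$ and introduce the change of variables $\bu \defined \bU^T\bD_X^{1/2}\xb$ and $\bv \defined \bV^T\bD_Y^{1/2}\yb$, so that
\begin{equation*}
\xb^T\bP\yb \;=\; \bu^T\bSigma\bv \;=\; \sum_{k=0}^{d}\sigma_k u_k v_k,
\end{equation*}
where $\sigma_0 = 1$ and $\sigma_k = \sqrt{\lambda_k(X;Y)} \leq \rho_m(X;Y)$ for $k \geq 1$ by Theorem \ref{thm:PIC_Charac}. Next I would identify the leading singular vectors: the columns of $\bU$ and $\bV$ corresponding to $\sigma_0 = 1$ are $\Px^{1/2}$ and $\Py^{1/2}$, respectively. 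Hence
\begin{equation*}
u_0 \;=\; (\Px^{1/2})^T\bD_X^{1/2}\xb \;=\; \sum_i p_X(i)x_i \;=\; \|\bP^T\xb\|_1 \;=\; a,
\end{equation*}
and identically $v_0 = b$, so the leading contribution is exactly $\sigma_0 u_0 v_0 = ab$.

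The remaining sum satisfies
\begin{equation*}
\Bigl|\sum_{k=1}^{d}\sigma_k u_k v_k\Bigr| \;\leq\; \rho_m(X;Y)\sum_{k=1}^{d}|u_k v_k| \;\leq\; \rho_m(X;Y)\sqrt{\sum_{k=1}^{d}u_k^2}\sqrt{\sum_{k=1}^{d}v_k^2}
\end{equation*}
by Cauchy–Schwarz. Since $\bU$ and $\bV$ are orthogonal,
\begin{equation*}
\sum_{k=0}^{d}u_k^2 \;=\; \xb^T\bD_X\xb \;=\; \sum_i p_X(i)x_i^2,
\end{equation*}
and because $x_i \in [0,1]$ we have $x_i^2 \leq x_i$, giving $\sum_{k=0}^d u_k^2 \leq a$. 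Subtracting $u_0^2 = a^2$ yields $\sum_{k\geq 1}u_k^2 \leq a(1-a)$, and symmetrically $\sum_{k\geq 1}v_k^2 \leq b(1-b)$. Combining,
\begin{equation*}
\xb^T\bP\yb \;\leq\; ab + \rho_m(X;Y)\sqrt{a(1-a)b(1-b)},
\end{equation*}
which is the claimed bound after taking the max over admissible $\xb, \yb$.

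The only mildly delicate step is the inequality $\sum_i p_X(i)x_i^2 \leq a$, which relies crucially on the cube constraint $0 \leq x_i \leq 1$ in the definition of $\calC^m(a,\bP^T)$; without it, the extra factor $\sqrt{a(1-a)b(1-b)}$ could not appear and one would only get the weaker bound involving $\sqrt{ab}$. Everything else is a routine application of the SVD characterization of the PICs in Theorem \ref{thm:PIC_Charac}(4) together with Cauchy–Schwarz.
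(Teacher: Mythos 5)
Your proof is correct and follows essentially the same route as the paper's: decompose $\bP$ via the SVD of $\bQ$, pull out the leading $ab$ term using the fact that the top singular vectors are $\Px^{1/2}$ and $\Py^{1/2}$, and bound the tail by Cauchy--Schwarz together with $\sum_i p_X(i)x_i^2\leq a$ from the cube constraint. You are in fact slightly more careful than the paper's proof, which writes $\normEuc{\hat{\xb}}=a$ where it should be $\normEuc{\hat{\xb}}^2\leq a$; your version makes that inequality explicit.
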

\begin{proof}
The proof is in Appendix \ref{app:proofsPIC_IT}.
\end{proof}

\begin{remark}
  An analogous result was derived by Witsenhausen  \cite[Thm. 2]{witsenhausen_sequences_1975} for bounding the probability of agreement of a common bit derived from two correlated sources.
\end{remark}

We will focus in the rest of this section on functions and corresponding estimators that are (i) unbiased ($a=b$) and (ii) satisfy $z=\Pr\{\hat{B}=B=0\}\geq a^2$. The set of all such mappings is given by
\begin{equation*}  
    \mathcal{ H }(a,\bP)\defined \left\{ (\bx,\by)\mid \bx\in
    \calC^m(a,\bP^T),\by\in \calC^n(a,\bP),\bx^T\bP \by\geq a^2 \right\}.
\end{equation*}

The next results provide upper and lower bounds on $z$ for the mappings in $\calH(a,\bP)$.
\begin{lem}
  \label{lem:zbounds}
    Let $0\leq a\leq 1/2$ and $\bP$ be fixed. For any $(\bx,\by)\in  \calH(a,\bP)$
        \begin{equation}
          \label{eq:zbounds}
            a^2\leq z \leq a^2+\rho_m(X;Y)a(1-a),
        \end{equation}  
        where  $z=\bx^T\bP\by$.
\end{lem}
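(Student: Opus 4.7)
The statement to prove is that for any $(\bx,\by)\in\calH(a,\bP)$ with $0\le a\le 1/2$, we have
\[
a^2 \le z \le a^2+\rho_m(X;Y)a(1-a),
\]
where $z=\bx^T\bP\by$. My plan is that almost all of the work has already been done in the preceding Lemma \ref{lem:zupper}, so this will be a short specialization argument.

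For the lower bound, I would note that it is essentially built into the definition of the set $\calH(a,\bP)$: the condition $\bx^T\bP\by\ge a^2$ is one of the defining constraints of $\calH(a,\bP)$, so $z\ge a^2$ is immediate.

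For the upper bound, I would invoke Lemma \ref{lem:zupper} directly with $b=a$. Lemma \ref{lem:zupper} states that for any valid pair $(\bx,\by)$ with $\bx\in\calC^m(a,\bP^T)$ and $\by\in\calC^n(b,\bP)$, we have
\[
\bx^T\bP\by \;\le\; ab+\rho_m(X;Y)\sqrt{a(1-a)b(1-b)}.
\]
Since any $(\bx,\by)\in\calH(a,\bP)$ satisfies these membership constraints with $b=a$, the bound specializes to
\[
z \;\le\; a^2+\rho_m(X;Y)\sqrt{a^2(1-a)^2} \;=\; a^2+\rho_m(X;Y)\,a(1-a),
\]
where the last equality uses that $a(1-a)\ge 0$ for all $a\in[0,1]$ (and in particular for $0\le a\le 1/2$, so the square root simplifies without sign issues). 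This yields the upper bound and completes the proof.

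There is no real obstacle here; the only non-trivial content has already been packaged into Lemma \ref{lem:zupper}, and the restriction $a\le 1/2$ just ensures we are in the natural regime (since for a function with $\mathbb{E}[B]=1-a$, the label $B=0$ is the less likely outcome, and similarly for $\widehat{B}$, which is consistent with $z=\Pr\{B=\widehat{B}=0\}\ge a^2$ representing positive correlation between the estimator and the estimand). I would make this one paragraph and move on.
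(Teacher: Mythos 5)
Your proof is correct and matches the paper's own argument exactly: the lower bound is the defining constraint $\bx^T\bP\by\ge a^2$ of $\calH(a,\bP)$, and the upper bound is Lemma \ref{lem:zupper} specialized to $b=a$. Nothing further to add.
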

\begin{proof}
    The lower bound for $z$ follows directly from the definition of
    $\calH(a,\bP)$, and the upper bound follows from Lemma \ref{lem:zupper}. 
\end{proof}

The previous lemma allows us to provide an upper bound over the mappings in $\calH(a,\bP)$ for the $f$-information between $B$ and $\whB$ when $I_f$ is non-negative.
\begin{thm}
  \label{thm:Estimators}
  For any non-negative $I_f$  and fixed $a$ and $\bP$,
  \begin{equation} 
    \label{eq:unbiasedBound}
    \sup_{ (\bx,\by) \in \calH(a,\bP)} I_f(B;\hat{B})\leq  a^2f\left( 1+\sigma_1 c
    \right)+2a(1-a)f\left(1-\sigma_1\right)+(1-a)^2f\left( 1+\sigma_1c^{-1} \right)
  \end{equation}
  where here $\sigma_1=\rho_m(X;\Yt)$ and $c\defined (1-a)a^{-1}$. In particular, for
  $a=1/2$,
    \begin{equation} 
    \sup_{ (\bx,\by) \in \calH(1/2,\bP)} I_f(B;\hat{B})\leq
  \frac{1}{2}\left( f(1-\sigma_1)+f(1+\sigma_1) \right). 
  \end{equation}
\end{thm}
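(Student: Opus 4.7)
My plan is to reduce the optimization to a one-parameter problem in the single scalar $z := \bx^T\bP\by = \Pr\{B=\whB=0\}$ and then invoke convexity. Under the unbiasedness constraint $\EE{B}=\EE{\whB}=1-a$, the joint distribution matrix \eqref{eq:Pbbh} becomes
\[
\mathbf{P}_{B,\whB} \;=\; \begin{pmatrix} z & a-z \\ a-z & 1-2a+z \end{pmatrix},
\]
so for fixed $a$ the entire $f$-information depends on $(\bx,\by,\bP)$ only through $z$:
\[
I_f(B;\whB) \;=\; a^2 f\!\left(\frac{z}{a^2}\right) + 2a(1-a)\, f\!\left(\frac{a-z}{a(1-a)}\right) + (1-a)^2 f\!\left(\frac{1-2a+z}{(1-a)^2}\right).
\]

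Next I would observe that each of the three summands is the composition of the convex function $f$ with an affine function of $z$, so the right-hand side is convex in $z$. By Lemma \ref{lem:zbounds}, the admissible values of $z$ are confined to the interval $[a^2,\;a^2+\sigma_1 a(1-a)]$ with $\sigma_1=\rho_m(X;Y)$. A convex function on a closed interval attains its maximum at one of the endpoints, so the supremum of $I_f$ over $\calH(a,\bP)$ is bounded by the larger of the two endpoint values. At the left endpoint $z=a^2$, each argument of $f$ equals $1$, and since $f(1)=0$ the expression vanishes; the non-negativity hypothesis on $I_f$ then forces the maximum to occur at the right endpoint $z=a^2+\sigma_1 a(1-a)$.

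Finally, substituting $z = a^2 + \sigma_1 a(1-a)$ into the three fractions yields $1+\sigma_1 c$, $1-\sigma_1$, and $1+\sigma_1 c^{-1}$ respectively, where $c=(1-a)/a$, which produces exactly the right-hand side of \eqref{eq:unbiasedBound}. The $a=1/2$ specialization is immediate: $c=1$ collapses the first and third terms, yielding $\tfrac{1}{2}f(1+\sigma_1) + \tfrac{1}{2}f(1-\sigma_1)$. I do not expect any serious obstacle once the reduction to $z$ is carried out; the only non-trivial input is the sharp upper bound on $z$ provided by Lemma \ref{lem:zbounds}, which itself rests on the maximal-correlation bound of Lemma \ref{lem:zupper}. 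Everything else is bookkeeping with the convexity of $f$ and the identity $f(1)=0$.
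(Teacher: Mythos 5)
Your proof is correct and is essentially the paper's: both reduce the optimization to the scalar $z=\bx^T\bP\by$, observe that $I_f(B;\hat B)$ is a convex function of $z$ vanishing at $z=a^2$, invoke Lemma \ref{lem:zbounds} to confine $z$ to $[a^2,\,a^2+\sigma_1 a(1-a)]$, and conclude by non-negativity that the supremum is attained at the right endpoint.
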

\begin{proof}
     Using the matrix form of the joint distribution between $B$ and $\whB$
     given in \eqref{eq:Pbbh}, for $\EE{B}=\EE{\whB}=1-a$, the $f$ information is given by
    \begin{align}
      \label{eq:Ifproof}
      I_f(B;\hat{B}) = a^2f\left( \frac{z}{a^2} \right)+ 2a(1-a)f\left(
      \frac{a-z}{a(1-a)} \right)
                        + (1-a)^2f\left( \frac{1-2a+z}{(1-a)^2} \right).
    \end{align}
    Consequently, \eqref{eq:Ifproof} is convex in $z$. For $(\bx,\by) \in \calH(a,\bP)$, it follows from Lemma \ref{lem:zbounds} that $z$ is restricted to the interval in \eqref{eq:zbounds}. Since  $I_f(B;\hat{B})$ is non-negative by assumption, $I_f(B;\hat{B})=0$ for $z=a^2$ and \eqref{eq:Ifproof} is convex in $z$, then $I_f(B;\hat{B})$  is non-decreasing in $z$ for $z$ in \eqref{eq:zbounds}. Substituting $z=a^2+\rho_m(X;Y)a(1-a)$ in \eqref{eq:Ifproof}, inequality \eqref{eq:unbiasedBound} follows.
  \end{proof}

\begin{remark}
  Note that the right-hand side of \eqref{eq:unbiasedBound} matches  the right-hand side of \eqref{eq:genaralqary}, and provides further evidence for Conjecture \ref{conj} by demonstrating that the conjecture holds for the specific case when $B\to X\to Y\to \hat{B}$ and $\EE{B}=\mathbb{E}[\hat{B}]$. Moreover, this result indicates that, for conforming probability distributions, the information between a binary function and its corresponding unbiased estimate is maximized when all the PICs have the same value.
\end{remark}

Following the same approach from Lemma \ref{lem:qaryIf}, we find the next 
bound for the mutual information between $B$ and $\whB$. 
\begin{cor}
  \label{cor:MIEstimators}
  For $0\leq a \leq 1$ and $\rho_m(X;Y)=1-2\delta$ 
  \begin{align*}
    \sup_{ (p_{B|X},p_{\whB|Y} )  \in \calH(a,\bP)} I(B;\hat{B})\leq
    1-h_b(\delta). 
  \end{align*}
\end{cor}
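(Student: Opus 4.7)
The plan is to specialize Theorem \ref{thm:Estimators} to the $f$-divergence corresponding to mutual information and then recognize that the resulting upper bound coincides exactly with the expression already analyzed in Lemma \ref{lem:qaryIf}. Specifically, I would choose $f(x) = x \log x$, so that $I_f(B;\widehat{B}) = I(B;\widehat{B})$. This choice of $f$ satisfies $f(1)=0$ and yields a non-negative $f$-information (since mutual information is non-negative), so the hypotheses of Theorem \ref{thm:Estimators} are met.

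With this substitution, the right-hand side of \eqref{eq:unbiasedBound} becomes
\begin{equation*}
  a^2 f(1+\sigma_1 c) + 2a(1-a) f(1-\sigma_1) + (1-a)^2 f(1+\sigma_1 c^{-1}),
\end{equation*}
with $\sigma_1 = \rho_m(X;Y) = 1-2\delta$ and $c = (1-a)/a$. This is exactly the expression that appears on the right-hand side of \eqref{eq:genaralqary} in Lemma \ref{lem:qaryIf}, evaluated for the same choice of $f(x) = x \log x$ and the same value of $\sigma_1$. Hence the bound can be rewritten, by the computation carried out in the proof of Lemma \ref{lem:qaryIf}, as
\begin{equation*}
  h_b(a) - a\, h_b\bigl(2\delta(1-a)\bigr) - (1-a)\, h_b(2\delta a),
\end{equation*}
matching \eqref{eq:qaryMI}.

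To finish, I would simply invoke the inequality \eqref{eq:qaryMax} established in Lemma \ref{lem:qaryIf}, namely that this last expression is concave in $a$ and symmetric about $a=1/2$, hence maximized at $a=1/2$ with maximum value $1-h_b(\delta)$. Combining, we obtain
\begin{equation*}
  \sup_{(p_{B|X},p_{\widehat{B}|Y})\in\calH(a,\bP)} I(B;\widehat{B}) \leq 1 - h_b(\delta),
\end{equation*}
which is the desired bound. There is no genuine obstacle here: once Theorem \ref{thm:Estimators} and Lemma \ref{lem:qaryIf} are in hand, the corollary is a direct specialization, and the only thing to verify is that the two right-hand sides coincide algebraically (which they do by construction, since the extremal value of $z$ in Lemma \ref{lem:zbounds} yields the same joint distribution on $(B,\widehat{B})$ as that induced by a $q$-ary symmetric channel with matched parameters in Lemma \ref{lem:qaryIf}).
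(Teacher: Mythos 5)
Your proof is correct and matches the paper's intended argument, which is given only as a one-line remark ("Following the same approach from Lemma \ref{lem:qaryIf}"). The key observation you make — that the right-hand side of \eqref{eq:unbiasedBound} is algebraically identical to \eqref{eq:genaralqary}, so that the specialization $f(x)=x\log x$ and the concavity-and-symmetry argument of \eqref{eq:qaryMI}--\eqref{eq:qaryMax} carry over verbatim — is exactly the intended derivation.
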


We  provide next a few application examples for the results derived in this section.

\begin{example}[Memoryless Binary Symmetric Channels with Uniform Inputs]
We turn our attention back to the setting considered in Section \ref{sec:binAdd}. Let $Y^n$ be the result of passing $X^n$ through a memoryless binary symmetric channel with crossover probability $\delta$, $X^n$ uniformly distributed, and $B\rightarrow X^n\rightarrow Y^n\rightarrow \whB$. Then $\rho_m(X^n;Y^n)=1-2\delta$ and, from
\eqref{eq:Fano}, when $\EE{B}=1/2$, 
\begin{equation*}
     \Pr\{B\neq \whB\}\geq \delta.
\end{equation*}
Consequently, inferring any unbiased one-bit function of the input of a binary symmetric channel is at least as hard (in terms of error probability) as inferring a single output from a single input.

Using the result from Corollary \ref{cor:MIEstimators}, it follows that when
$\EE{B}=\EE{\whB}=a$ and $\Pr\{B=\whB=0\}\geq a^2 $, then 
\begin{equation}
  \label{eq:Sudeep}
    I(B;\whB)\leq 1-h_b(\delta).
\end{equation}

\end{example}
\begin{remark}
%
    Anantharam \etal presented in  \cite{anantharam_hypercontractivity_2013} a computer aided proof that the upper bound \eqref{eq:Sudeep} holds  for any $B\rightarrow X^n\rightarrow Y^n\rightarrow \whB$. Nevertheless, we highlight that the methods introduced here allowed an analytical derivation of  \eqref{eq:Sudeep} for unbiased estimators.
 \end{remark}

\begin{example}[Lower Bounding the Estimation Error Probability]

For $z$ given in \eqref{eq:Pbbh}, the average estimation error probability is given by $\Pr\{B\neq \whB\}=a+b-2z$, which is a convex (linear) function of $z$. If  $a$ and $b$ are fixed, then the error probability  is minimized when $z$ is maximized. Therefore
\begin{equation*}
   \Pr\{B\neq \whB\}\geq a+b-2z_u^*(a,b). 
\end{equation*}
Using the bound from Lemma \ref{lem:zupper}, it follows that
\begin{equation}
  \label{eq:errorProb}
  \Pr\{B\neq \whB\}\geq a+b-2ab-2\rho_m(X;Y)\sqrt{a(1-a)b(1-b)}. 
\end{equation}
The bound \eqref{eq:errorProb} is exactly the bound derived by Witsenhausen in
\cite[Thm 2.]{witsenhausen_sequences_1975}. Furthermore, minimizing the right-hand side of  \eqref{eq:errorProb}
over $0\leq b \leq 1/2$, we arrive at
\begin{equation}
  \label{eq:Fano}
      \Pr\{B\neq \whB\}\geq\frac{1}{2}\left(
  1-\sqrt{1-4a(1-a)(1-\rho_m(X;Y)^2)} \right). 
  \end{equation}
  This result suggests that the PICs are particularly useful for deriving bounds on error probability. We explore this fact in the next section, and show that \eqref{eq:Fano} is a particular form of a more general bound derived in Theorem \ref{thm:Bound}. 

\end{example}

\section{Application to Estimation: Bounds on Error Probability}
\label{sec:boundEP}

In this section we derive lower bounds on error-probability based on the PICs (cf. section \ref{chap:PICs}). Before presenting these bounds, we discuss the general approach used for deriving lower bounds, which can be extended to other measures of dependence. This approach is particularly useful for proving information-theoretic security and privacy guarantees.


Recall the central estimation-theoretic problem: Given an observation of a random variable $Y$, what can we learn about a correlated, hidden variable $X$? Such questions are relevant for different application areas. For example, in a symmetric-key encryption setup, $X$ can be the plaintext message, and  $Y$ the ciphertext and any additional side information available to an adversary.  If there is an encryption mechanism  in place that guarantees that the mutual information between an individual symbol of the plaintext $X$ and a cipheretext $Y$ is at most 0.01 bits \cite{inproc:allertonCrypt}, how well can  an adversary guess individual symbols of $X$? How does this result depend on the distribution of the plaintext source? Are there other information measures besides mutual information for deriving such bounds on estimation?

If the joint distribution between $X$ and $Y$ is known, the  probability of error of estimating $X$ given an observation of $Y$ can be calculated exactly.
However, in most practical settings, this joint distribution is unknown.
Nevertheless, it may be possible to estimate certain correlation (dependence) measure of
$X$ and $Y$  reliably, such as  maximal correlation, $\chi^2$ or
mutual information. In general, we will denote this measure as $\calI(X;Y)$.

Given  an upper bound $\theta$ on a certain dependence measure $\calI$, i.e. $\calI(X;Y)\leq \theta$, is it possible to determine a lower bound for the average error of estimating $X$ from $Y$ over all possible estimators? We answer this question in the affirmative. In particular, the problem of computing such a bound for a given distribution $\px$ and $\theta$ is equivalent to computing a distortion-rate function, presented    in Definition \ref{defn:distRate}.  When the estimation metric is error probability, we call the corresponding distortion-rate function the \textit{error-rate function}, denoted by $e_\calI(\px,\theta)$ and given in Definition \ref{defn:ei}. In the context of security and privacy, this bound  characterizes the best  estimation of the plaintext that a (computationally unbounded) adversary can make given an observation of the output of the system in terms of the statistic of the distribution of the input and output. This allows, for example, guarantees on correlation measures frequently used in security and privacy settings to be translated into bounds on the estimation error.  

Recall that $X$ and $Y$ are discrete random variables with  support $\calX=[m]$ and $\calY=[n]$, and, consequently, the joint pmf $\pxy$ can be displayed as the entries of a matrix $\bP\in \Reals^{m\times n}$, where $[\bP]_{i,j}=\pxy(i,j)$. The problem of determining the estimator  $\Xh$ of $X$ given an observation of $Y$ then reduces to finding  a row-stochastic matrix $\bP_{\Xh|Y}\in \Reals^{n\times m}$ that is the solution of
\begin{equation}
\label{eq:bestPe}
P_e(X|Y) = \min_{\bP_{\Xh|Y}}1-\Tr{\bP\times\bP_{\Xh|Y}}.
\end{equation}
Note that the previous minimization is a linear program, and by taking its dual the reader can verify that the optimal $\bP_{\Xh|Y}$ is the maximum a posteriori (MAP) estimator, as expected.

We highlight again that in  applications the joint distribution matrix $\bP$ may not be known exactly -- only a given dependence measure $\calI(\pxy)$  may be known.  Equation \eqref{eq:bestPe} hints that dependence measures that depend on the spectrum of $\bP$ may lead to  sharp lower bounds for error probability. Indeed, the trace of the product of two matrices is closely related to their spectra (cf. Von Neumman's trace inequality \cite[Thm. 7.4.1.1]{horn_matrix_2012}). This motivates the following question: Are there information measures that capture the spectrum of a joint distribution matrix $\bP$? This naturally leads to the consideration of  measures of dependence and lower bounds on estimation error  based on the PICs. These bounds are derived in Section \ref{sec:fanoinertia}, but we first provide an overview of our approach in Section \ref{sec:convexbounds}.


Owing to the nature of the joint distribution, it may be infeasible to estimate $X$ from $Y$ with small estimation error. It is, however, possible that a non-trivial function   $f(X)$ exists that is of interest to a learner and can be  estimated reliably from $Y$.   If $f$ is the identity function, this  reduces to the standard problem of estimating $X$ from $Y$.  Determining if such a function exists is relevant to several applications in learning, privacy, security and information theory. In particular, this setting is related to the information bottleneck method \cite{tishby_information_2000} and functional compression \cite{doshi_functional_2010}, where the goal is to compress $X$ into $Y$ such that $Y$ still preserves information about $f(X)$.

For most security applications, minimizing the average error of estimating a hidden variable $X$ from an observation of $Y$ is insufficient. As argued in \cite{calmon2014allerton}, cryptographic definitions of security, and in particular semantic security \cite{goldwasser_probabilistic_1984}, require that an adversary has negligible advantage in guessing any function of the input given an observation of the output. In light of this, we present bounds for the best possible average error achievable for estimating functions of  $X$ given an observation of $Y$.

Assuming that $\pxy$ is unknown, $\px$ is given and a bound  $\calI(X;Y)\leq \theta$ is known (where $\calI$ is not restricted to being mutual information), we present in Theorem \ref{thm:boundPeM} a method for adapting bounds for error probability into bounds for the average estimation error of functions of $X$ given $Y$. This method depends on a few technical assumptions on the dependence measure (stated in Definition \ref{defn:informationMeasure} and in Theorem \ref{thm:boundPeM}), foremost of which is the existence of a lower bound for the error-rate function that is Schur-concave\footnote{A function $f:\Reals^n\rightarrow \Reals$ is said to be \textit{Schur-concave} if for all $\bx,\by\in \Reals^n$ where $\bx$ is majorized by $\by$, then $f(\bx)\geq f(\by)$.} in $\px$ for a fixed $\theta$. Theorem \ref{thm:boundPeM} then states that, under these assumptions, for any deterministic, surjective function $f:\calX\rightarrow \{1,\dots,M\}$, we can obtain a lower bound for the average estimation error of $f$ by  computing $e_\calI(p_U,\theta)$, where $U$ is a random variable that is a function $X$.

Note that Schur-concavity is crucial for this result. In Theorem \ref{thm:schur}, we show that this condition is always satisfied when $\calI(X;Y)$ is concave in $\px$ for a fixed $\pygx$, convex in $\pygx$ for a fixed $\px$, and satisfies the DPI. This generalizes a result by Ahlswede  \cite{ahlswede_extremal_1990} on the extremal properties of rate-distortion functions.  Consequently, Fano's inequality can be  adapted in order to bound the average estimation error of functions, as shown in Corollary \ref{cor:PeMboundI}. By observing that a particular form of the bound stated in Theorem \ref{thm:Bound} is Schur-concave, we present in the next section a bound for the error probability of estimating functions in terms of  the maximal correlation, stated in Corollary \ref{cor:PeMboundrho}.

\subsection[A Convex Program for  Bounds on Estimation]{A Convex Program for Mapping Information Guarantees to Bounds on Estimation}
\label{sec:convexbounds}
Throughout the rest of the paper, we let $X$ and $Y$ be two random variables drawn from finite sets $\calX$ and $\calY$. We have the following definition.


\begin{defn}
\label{defn:informationMeasure}
We say that a function $\calI$ that maps any joint probability mass function  (pmf) to a non-negative real number is a \textit{dependence measure} (equivalently \textit{measure of dependence}) if  for any discrete random variables $W$, $X$, $Y$ and $Z$ (i) $\calI(\pxy)$ is convex in $\pygx$ for a fixed $\px$, (ii) $\calI$ satisfies the DPI, i.e. if $X\to Y\to Z$ then $\calI(p_{X,Z})\leq \calI(\pxy)$, and (iii) if $W$ is a one-to-one mapping of $Y$ and $Z$ is a one-to-one mapping of $X$, then $\calI(p_{W,Z})=\calI(\pxy)$ (invariance property). We overload the notation of $\calI$ and let $\calI(\pxy)=\calI(\px,\pygx)$ in order to make the dependence on the marginal distribution and the channel (transition probability) clear. Furthermore, we also denote $\calI(\pxy)=\calI(X;Y)$ when the distribution is clear from the context. Examples of dependence measures includes maximal correlation, defined in \eqref{eq:maxcorrdef}, and mutual information. 
\end{defn}

Now consider the standard estimation setup where a hidden variable $X$ should be estimated from an observed random variable $Y$. We assume that the joint distribution between $\pxy$ is not known, but the marginal distribution $\px$ is known, and that $\calI(\pxy)\leq \theta$ (e.g. security constraint) for a given dependence measure $\calI$. Since $\calI$ satisfies the DPI, for any estimate $\hat{X}$ of $X$ such that $X\to Y\to \Xh$ we have $\calI(X;\Xh)\leq \calI(X;Y)\leq \theta$. The problem of translating a bound on $\calI$ into a constraint on how well a hidden variable $X$ can (on average) be estimated from $Y$ given an error function $d:\calX\times\calX\to \Reals$  can be approximated by solving the optimization problem
\begin{align}
\inf_{p_{\hat{X}|X}}&~~\EE{d(X,\Xh)}\\
\sto&~~\calI(X;\Xh) \leq \theta.
\end{align}
This motivates the following definition.

\begin{defn}
\label{defn:distRate}
We denote the smallest (average) estimation error $D_{\calI,d}$ for a given dependence measure $\calI$ and estimation cost function $d:\calX\times\calX\to \Reals$ as
\begin{equation}
\label{eq:Ddefn}
D_{\calI,d}(p_X,\theta)\defined \inf_{p_{\hat{X}|X}} \left\{\EE{d(X,\Xh)} \middle|  \calI(p_X,p_{\Xh|X}) \leq \theta\right\},
\end{equation}
where the infimum is over all conditional distributions $p_{\hat{X}|X}$. 
\end{defn}

Observe that for any $p_{Y|X}$ that satisfies $ \calI(p_X,p_{Y|X}) \leq \theta$
\begin{align*}
D_{\calI,d}(p_X,\theta)\leq \inf_{p_{\hat{X}|Y}} \left\{\EE{d(X,\Xh)} \middle| X\to Y\to \Xh  \right\},
\end{align*}
since, by the assumption that $\calI$ satisfies the DPI, $\calI(X;\Xh)\leq \calI(X;Y)\leq \theta$.
When $\calI(X;Y)=I(X;Y)$, $D_{I,d}(\px,\theta)$ is the distortion-rate function \cite[pg. 306]{cover_elements_2006}. When the distortion function $d$ is the Hamming distortion, $D_{\calI,d}(p_X,\theta)$ gives the smallest probability of error for estimating $X$ given an observation $Y$ that satisfies $\calI(X;Y)\leq \theta$. This case will be of particular interest in this section, motivating the next definition.

\begin{defn}
\label{defn:ei}
Denoting the Hamming distortion metric as
\begin{equation*}
	d_H(x,y)\defined \begin{cases}
		0,&~x=y,\\
		1,&~\mbox{otherwise},
	\end{cases}
\end{equation*}
we define the \textit{error-rate function}\footnote{The term \textit{error-rate} function is used in the same sense as \textit{distortion-rate} function in rate distortion theory \cite[Chap. 10]{cover_elements_2006}. We adopt ``error'' instead of distortion here since we only consider Hamming distance as the distortion metric.}  for the dependence measure $\calI$ as \[e_{\calI}(p_X,\theta)\defined D_{\calI,d_H}(p_X,\theta).\] 
\end{defn}
The definition of error-rate function directly leads to the following simple lemma.

\begin{lem}
For  a given dependence measure $\calI$ and any fixed $\pxy$ such that $\calI(\pxy)\leq \theta$
\begin{equation*}
    P_e(X|Y)\geq e_\calI(\px,\theta).
\end{equation*}
\end{lem}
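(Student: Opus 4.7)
The plan is a short unrolling of the two definitions, glued together by the data processing inequality for $\calI$. The essential content is that any estimator $\hat{X}$ of $X$ built from $Y$ automatically satisfies the feasibility constraint of the optimization that defines $e_\calI(\px,\theta)$, so the estimation error attained by that estimator is an upper bound on $P_e(X|Y)$ and a lower bound on $e_\calI(\px,\theta)$ simultaneously.

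First I would fix an arbitrary $p_{\hat X \mid Y}$ that induces a Markov chain $X \to Y \to \hat X$. Compose this with the given $\pygx$ to obtain the marginal channel $p_{\hat X \mid X}(\hat x \mid x) = \sum_{y} \pygx(y\mid x)\, p_{\hat X\mid Y}(\hat x \mid y)$. The assumption $\calI(\pxy) \le \theta$ together with the DPI property of any dependence measure (Definition \ref{defn:informationMeasure}(ii)) gives
\begin{equation*}
\calI(\px, p_{\hat X \mid X}) \;=\; \calI(X;\hat X) \;\le\; \calI(X;Y) \;\le\; \theta,
\end{equation*}
so $p_{\hat X \mid X}$ is feasible in the optimization defining $e_\calI(\px,\theta)$ in \eqref{eq:Ddefn} with $d=d_H$.

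Next I would observe that the objective values match: since $\EE{d_H(X,\hat X)} = \Pr\{X \ne \hat X\}$ and this expectation only depends on the joint law $p_{X,\hat X}$, which is the same whether one computes it via the Markov chain $X\to Y\to \hat X$ or directly from $(\px, p_{\hat X\mid X})$. Consequently
\begin{equation*}
\Pr\{X \ne \hat X\} \;\ge\; \inf_{p'_{\hat X\mid X}\,:\,\calI(\px,p'_{\hat X\mid X})\le \theta} \EE{d_H(X,\hat X')} \;=\; e_\calI(\px,\theta).
\end{equation*}
Taking the infimum of the left-hand side over all $p_{\hat X \mid Y}$ with $X\to Y\to \hat X$ yields $P_e(X|Y) \ge e_\calI(\px,\theta)$, which is the claim.

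There is essentially no obstacle here; the only substantive ingredient is the DPI for $\calI$, which is built into Definition \ref{defn:informationMeasure}, and the invariance of the induced marginal joint law $p_{X,\hat X}$ under whether one parameterizes via $p_{\hat X\mid Y}$ or $p_{\hat X\mid X}$. The lemma should therefore appear as a one-line corollary of the DPI together with the definition of $e_\calI$.
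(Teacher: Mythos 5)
Your proof is correct and follows essentially the same route as the paper's: you note that any estimator $\hat X$ formed via $X\to Y\to \hat X$ induces a marginal channel $p_{\hat X\mid X}$ that is feasible for the optimization defining $e_\calI(\px,\theta)$ by the DPI, and then take the infimum. Your write-up merely unpacks the composition $p_{\hat X\mid X}(\hat x\mid x)=\sum_y \pygx(y\mid x)p_{\hat X\mid Y}(\hat x\mid y)$ explicitly, which the paper's one-line proof leaves implicit.
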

\begin{proof}
  Observe that $P_e(X|Y)=\min_{X\to Y\to\Xh} \EE{d_H(X,\Xh)}$, where the minimum is over all distributions $p_{\Xh|X}$ that satisfy the Markov constraint $X\to Y\to \Xh$. Since $\calI$ satisfies the DPI, then $\calI(X;\Xh)\leq \calI(X;Y)\leq \theta$, and the result follows from Definition \ref{defn:distRate}.
\end{proof}

The previous lemma shows that the characterization of $e_\calI(\px,\theta)$ for different measures of information $\calI$ is particularly relevant for applications in privacy and security, where $X$ is a variable that should remain hidden (e.g. plaintext) and $Y$ is an adversary's observation (e.g. ciphertext). Knowing $e_\calI$ allows us to translate an upper bound  $\calI(X;Y)\leq \theta$ into an estimation guarantee: regardless of an adversary's computational resources, given  only access to $Y$ he will not be able to estimate $X$ with an average error probability $P_e(X|Y)$ smaller than $e_\calI(\px,\theta)$. Therefore, by simply estimating $\theta$ and calculating $e_\calI(\px,\theta)$ we are able to evaluate the security threat incurred by an adversary that has access to $Y$.

\begin{example}[Error-rate function for mutual information.]
Using the expression for the rate-distortion function under Hamming
distortion for mutual information (\cite[(9.5.8)]{gallager_information_1968}), for $\calI(X;Y)=I(X;Y)$ and $\calX=[m]$, the error-rate function is given by $e_I(p_X,\theta)=d^*$, where $d^*$ is the solution of
    \begin{equation}
    \label{eq:rateDist}
      h_{b}(d^*)+d^*\log(m-1)=H(X)-\theta,
    \end{equation}
and $h_b(x)\defined -x\log x-(1-x)\log (1-x)$. Denoting $X\to Y\to \Xh$ and $p_e \defined P_e(X|Y)$, note that \eqref{eq:rateDist} implies Fano's inequality \cite[2.140]{cover_elements_2006}:
\begin{equation}
      h_b(p_e)+p_e\log(m-1)\geq H(X)-I(X;Y)=H(X|Y).
\end{equation}
\end{example}

\subsection{A Lower Bound for Error Probability Based on the PICs}
  \label{sec:fanoinertia}
Throughout the rest of the section, we assume without loss of generality that
$\px$ is sorted in decreasing order, i.e. $\px(1)\geq\px(2)\geq \dots\geq
\px(m)$.

\begin{defn}
  Let $\bLambda(\pxy)$ denote the vector of PICs of a joint
  distribution $\pxy$ sorted in decreasing order, i.e.
  $\bLambda(\pxy)=(\lambda_1(X;Y),\dots,\lambda_d(X;Y))$.  We denote $\bLambda(\pxy)\leq
  \blambdat\defined(\lambdat_1,\dots,\lambdat_d) $ if $\lambda_k(X;Y)\leq \lambdat_k$ for $k\in [d]$
  \begin{equation}
    \mathcal{R}(q,\blambdat)\triangleq\left\{\pxy \big| \px=q\mbox{ and } \bLambda(\pxy)\leq
  \blambdat   \right\}.
\end{equation}
  \end{defn}

In the next theorem we present a Fano-style bound for the estimation error
probability of $X$ that depends on the marginal distribution $\px$ and on the
principal inertias. 
\begin{thm}
  \label{thm:Bound}
 For $\blambda = (\lambda_1,\dots,\lambda_d)$ and fixed $\px$, let
  \begin{align}
    k^* \defined \max \left\{ k\in [m] ~\Big|~\px(k)\geq
     \sum_{i\in[m]}\px(i)^2\right\}. \label{eq:kstar}
   \end{align}
    In addition, let $\bp = (\px(1),\dots,\px(m))$ and $\blambda_{k^*} = \left(\lambda_1,\dots,\lambda_{k^*},\lambda_{k^*},\lambda_{k^*+1},\dots,\lambda_{m-1}\right)$ (where $\lambda_m\defined 0$ and $\blambda_{m}=(\lambda_1,\dots,\lambda_{m})$). Defining
\begin{equation*}
u(\px,\blambda) \defined \min_{0\leq\beta \leq \px(2)} \beta+\sqrt{\bp^T\blambda_{k^*}-\lambda_{k^*} \|\bp\|_2^2+\left\|\left[\bp-\beta \right]^+\right\|_2^2}~,
\end{equation*}
 then for any $(X,Y)\sim q_{X,Y}\in \mathcal{R}(\px,\blambda)$, 
\begin{equation}
   P_e(X|Y) \geq 1-  u(\px,\blambda).
   \label{eq:mainBound}
\end{equation}
\end{thm}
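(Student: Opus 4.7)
The plan is to upper-bound $\Pr\{X=\hat X\}$ by $u(p_X,\blambda)$ for any joint law $q_{X,Y}\in\mathcal{R}(p_X,\blambda)$ and any estimator $X\to Y\to\hat X$; since the MAP rule is optimal, it suffices to control $\EE{\max_x q_{X|Y}(x|Y)}$. For any $\beta\in[0,p_X(2)]$, I would chain the elementary inequalities
\[
\max_x q_{X|Y}(x|y)\;\le\;\beta+\max_x[q_{X|Y}(x|y)-\beta]^+\;\le\;\beta+\sqrt{\textstyle\sum_x[q_{X|Y}(x|y)-\beta]^{+2}},
\]
take expectation over $Y$, and move the expectation inside the square root via Jensen's inequality applied to the concave $\sqrt{\cdot}$. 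This reduces the task to bounding $\EE{\sum_x[q_{X|Y}(x|Y)-\beta]^{+2}}$ from above by the quantity under the square root in the statement.

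Next, I would treat each index $x$ separately. Setting $F_x(Y):=q_{X|Y}(x|Y)$, which has mean $p_X(x)$, I combine the decomposition $(t-\beta)^2=[t-\beta]^{+2}+[\beta-t]^{+2}$ with convexity of $t\mapsto[\beta-t]^{+2}$ (Jensen gives $\EE{[\beta-F_x]^{+2}}\ge[\beta-p_X(x)]^{+2}$) to obtain the pointwise bound $\EE{[F_x-\beta]^{+2}}\le\mathrm{Var}(F_x)+[p_X(x)-\beta]^{+2}$. Summing over $x$ produces precisely the $\|[\bp-\beta]^+\|_2^2$ term from the statement, plus the aggregate $\sum_x\mathrm{Var}(q_{X|Y}(x|Y))$.

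To rewrite the variance sum in terms of the PICs, I would invoke the spectral expansion implied by Theorem~\ref{thm:PIC_Charac}: writing $q_{X|Y}(x|y)-p_X(x)=p_X(x)\sum_{k\ge1}\sqrt{\lambda_k(X;Y)}\,f_k(x)w_k(y)$ and using orthonormality of $\{w_k\}$ in $\calL_2(p_Y)$ gives
\[
\sum_x\mathrm{Var}(q_{X|Y}(x|Y))=\sum_{k\ge 1}\lambda_k(X;Y)\,a_k,\qquad a_k:=\sum_x p_X(x)^2 f_k(x)^2.
\]
Since $\lambda_k(X;Y)\le\lambda_k$ and $a_k\ge 0$, it remains to upper-bound $\sum_{k\ge 1}\lambda_k a_k$ in terms of $p_X$ and $\blambda$.

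This last step will be the main obstacle. The key observation is that with $G:=\mathbf{D}_X^{1/2}[f_0|\cdots|f_d]$, $G$ is orthogonal (since the principal functions are orthonormal in $\calL_2(p_X)$) with first column $\sqrt{\bp}$, and $a_k=(G^T\mathbf{D}_X G)_{k,k}$. Therefore by the Schur-Horn theorem $(a_0,\ldots,a_d)$ is majorized by the spectrum $(p_X(1),\ldots,p_X(m))$ of $\mathbf{D}_X$, subject to the forced value $a_0=\|\bp\|_2^2$ coming from $f_0\equiv 1$. Maximizing $\sum_{k\ge 1}\lambda_k a_k$ over vectors satisfying these constraints is a rearrangement problem with Monge cost $\lambda_k p_X(i)$; an exchange argument centered on the threshold $k^*$ (where $p_X$ first drops below $\|\bp\|_2^2$) identifies the maximizer as $a_k=p_X(k)$ for $k<k^*$, $a_{k^*}=p_X(k^*)+p_X(k^*+1)-\|\bp\|_2^2$, and $a_k=p_X(k+1)$ for $k>k^*$, with optimal value $\bp^T\blambda_{k^*}-\lambda_{k^*}\|\bp\|_2^2$. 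Chaining all the estimates and minimizing over $\beta\in[0,p_X(2)]$ then delivers the stated lower bound $P_e(X|Y)\ge 1-u(p_X,\blambda)$.
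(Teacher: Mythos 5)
Your proof is correct and takes a genuinely different route from the paper's. The paper starts from $P_c=\mathrm{tr}(\bP\mathbf{F})$ for an arbitrary row-stochastic estimator $\mathbf{F}$, decomposes the resulting joint law of $(X,\hat X)$ through \emph{its own} PICs, applies Cauchy--Schwarz twice so that the output marginal $\mathbf{p}_{\hat X}$ appears as a free variable, invokes the DPI for the PICs together with Von Neumann's trace inequality and Cauchy interlacing to set up a linear program (whose dual yields $f_0(\alpha,\px,\blambda)$), and finally minimizes a nonlinear program over $\mathbf{p}_{\hat X}$; the parameter $\beta$ in the statement enters as the Lagrange multiplier of that last program. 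You instead pass directly to the MAP rule and bound $\mathbb{E}[\max_x q_{X|Y}(x|Y)]$ using $\|\cdot\|_\infty\le\beta+\|[\cdot-\beta]^+\|_2$ and Jensen, so that $\beta$ shows up at the outset with a transparent meaning (a floor on the posterior); the pointwise estimate $\mathbb{E}\big[([F_x-\beta]^+)^2\big]\le\mathrm{Var}(F_x)+[p_X(x)-\beta]^{+2}$ then isolates $\sum_x\mathrm{Var}(q_{X|Y}(x|Y))=\sum_{k\ge1}\lambda_k(X;Y)\,a_k$ via the PIC expansion of the posterior, and Schur--Horn with $a_0$ pinned to $\|\bp\|_2^2$ plays the role that Von Neumann plus interlacing play in the paper. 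Both pipelines produce the same quantity $\bp^T\blambda_{k^*}-\lambda_{k^*}\|\bp\|_2^2$: your greedy argument over the slice of the majorization polytope is exactly what Lemma~\ref{lem:f0tight} extracts via LP duality. Your route is cleaner in that it never introduces an intermediate $\hat X$, so it needs no DPI step to transfer PICs from $(X,\hat X)$ back to $(X,Y)$. Two points to tighten in a full write-up: pad the principal system to $d=m-1$ (the paper makes exactly this wlog assumption) so that $G$ is an honest $m\times m$ orthogonal matrix; and carry the exchange argument out explicitly --- the binding facts are $p_X(k^*+1)<\|\bp\|_2^2\le p_X(k^*)$ and that, once $a_1,\dots,a_{k^*-1}$ are saturated at $p_X(1),\dots,p_X(k^*-1)$, the top-$(k^*{+}1)$ majorization constraint (not the box bound $a_{k^*}\le p_X(k^*)$) is what caps $a_{k^*}$ at $p_X(k^*)+p_X(k^*+1)-\|\bp\|_2^2$.
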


\begin{proof}
The proof of the theorem is presented in  Appendix \ref{app:proofsPIC_EP}. 
\end{proof}

\begin{remark}
  If  $\lambda_i=1 $ for all $1\leq i \leq d$,   \eqref{eq:mainBound} reduces to $P_e(X|Y) \geq 0$. Furthermore, if  $\lambda_i=0$ for all $1\leq i \leq d$,  \eqref{eq:mainBound} simplifies to $P_e(X|Y)\geq 1- \px(1)$.
\end{remark}

We now present a few direct but, as we shall show in the next section, useful corollaries of the result in Theorem
\ref{thm:Bound}. We note that a  bound with the same square-root order dependence on $\chi^2$-divergence as Eq. \eqref{eq:gunt} below has
appeared in the context of bounding the minmax decision risk in
\cite[Eq. (3.4)]{guntuboyina_minimax_2011}. However, the proof technique used in
\cite{guntuboyina_minimax_2011} does not seem to lead to the general bound presented in
Theorem \ref{thm:Bound}.

\begin{cor}
  If $X$ is uniformly distributed in $[m]$, then 
  \begin{equation}
    P_e(X|Y) \geq 1-\frac{1}{m}-\frac{\sqrt{(m-1)\chi^2(X;Y)}}{m}~.
    \label{eq:gunt}
  \end{equation}
  Furthermore, for $\rho_m(X;Y)=\sqrt{\lambda_1}$
    \begin{align*}
      P_e(X|Y) &\geq 1- \frac{1}{m} -\sqrt{\lambda_1}\left(1-\frac{1}{m}\right)\\
                &=1-\frac{1}{m}-\rho_m(X;Y)\left(1-\frac{1}{m}\right).
    \end{align*}
\end{cor}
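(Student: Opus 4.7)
The plan is to specialize Theorem \ref{thm:Bound} to the uniform case $\px(i)=1/m$ and then relax to obtain the maximal-correlation form. First I would compute the constants in \eqref{eq:kstar}: since $\sum_{i\in[m]}\px(i)^2=1/m=\px(k)$ for every $k\in[m]$, the definition gives $k^{\star}=m$, and by the stated convention $\lambda_{k^{\star}}=\lambda_m=0$ and $\blambda_{m}=(\lambda_1,\dots,\lambda_{m-1},0)$. Using the identity $\KC_{m-1}(X;Y)=\chis(X;Y)$ from Section \ref{sec:measureMoments}, the two pieces under the square root in Theorem \ref{thm:Bound} simplify to
\[
\bp^{T}\blambda_{m}=\frac{1}{m}\sum_{i=1}^{m-1}\lambda_i=\frac{\chis(X;Y)}{m},\qquad \lambda_{k^{\star}}\|\bp\|_2^2=0.
\]

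Since $\px(2)=1/m$, for every $0\leq\beta\leq 1/m$ we have $[\bp-\beta]^{+}=(1/m-\beta)\ones$ and $\|[\bp-\beta]^{+}\|_2^2=m(1/m-\beta)^2$. Hence the outer minimization in Theorem \ref{thm:Bound} reduces to the one-dimensional convex problem
\[
u(\px,\blambda)=\min_{0\leq\beta\leq 1/m}\ \beta+\sqrt{\tfrac{\chis(X;Y)}{m}+m\bigl(\tfrac{1}{m}-\beta\bigr)^{2}}.
\]
Differentiating in $\beta$ and solving the resulting stationarity equation yields $\beta^{\star}=\tfrac{1}{m}\bigl(1-\sqrt{\chis(X;Y)/(m-1)}\bigr)$. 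I would then verify feasibility: $\beta^{\star}\in[0,1/m]$ iff $\chis(X;Y)\leq m-1$, which always holds because $\chis(X;Y)=\sum_{i=1}^{m-1}\lambda_i$ and each $\lambda_i\leq 1$. Substituting $\beta^{\star}$ back, the square-root term collapses to $\sqrt{\chis(X;Y)/(m-1)}$ and the expression telescopes to
\[
u(\px,\blambda)=\frac{1}{m}+\frac{\sqrt{(m-1)\chis(X;Y)}}{m}.
\]
Plugging this into \eqref{eq:mainBound} gives the first inequality $P_e(X|Y)\geq 1-\tfrac{1}{m}-\tfrac{1}{m}\sqrt{(m-1)\chis(X;Y)}$.

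For the maximal-correlation form I would use the elementary bound $\chis(X;Y)=\sum_{i=1}^{m-1}\lambda_i\leq(m-1)\lambda_1$ (since $\lambda_i\leq\lambda_1$ for all $i$), so $\sqrt{(m-1)\chis(X;Y)}\leq(m-1)\sqrt{\lambda_1}$. This relaxation turns the first bound into $P_e(X|Y)\geq 1-\tfrac{1}{m}-\sqrt{\lambda_1}\bigl(1-\tfrac{1}{m}\bigr)$, and recalling $\sqrt{\lambda_1}=\rho_m(X;Y)$ from Theorem \ref{thm:PIC_Charac} finishes the corollary. The only non-routine step is verifying feasibility of the interior stationary point $\beta^{\star}$; once the uniformity of $\px$ reduces the problem to a single-variable square-root minimization with constant coefficients, everything else is bookkeeping.
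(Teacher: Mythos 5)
Your proposal is correct and follows the approach the paper implicitly takes: specialize Theorem~\ref{thm:Bound} to $\px$ uniform, solve the resulting one-variable minimization, and then relax $\chi^2(X;Y)\le(m-1)\lambda_1$ to obtain the maximal-correlation form. The arithmetic checks out (with $k^*=m$, $\lambda_{k^*}=0$, $\bp^T\blambda_m=\chi^2(X;Y)/m$, and $\beta^*=\tfrac{1}{m}(1-\sqrt{\chi^2(X;Y)/(m-1)})$ feasible since $\chi^2\le m-1$), and the square-root terms telescope exactly as you claim.
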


\begin{cor}
  \label{cor:coolBounds}
    For any pair of variables $(X,Y)$ with marginal distribution in $X$ equal to $\px$ and maximal correlation (largest principal inertia)   $\rho_m(X;Y)^2= \lambda_1$, we have for all $\beta\geq 0$
    \begin{equation}
      P_e(X|Y) \geq 1-  \beta -
      \sqrt{\lambda_1\left(1-\sum_{i=1}^m\px(i)^2\right)+\sum_{i=1}^m\left(\left[\px(i)-\beta\right]^+\right)^2}~.
    \label{eq:coolBound1}
    \end{equation}
In particular, setting $\beta = \px(2)$,
     \begin{align}
      P_e(X|Y) &\geq  1-  \px(2) -
      \sqrt{\lambda_1\left(1-\sum_{i=1}^m\px(i)^2\right)+\left(\px(1)-\px(2)\right)^2}
      \label{eq:coolint} \\
      &\geq 1-  \px(1)-
      \rho_m(X;Y)\sqrt{\left(1-\sum_{i=1}^m\px(i)^2\right)}~,
      \label{eq:coolBound2}
    \end{align}
    where \eqref{eq:coolBound2} follows from \eqref{eq:coolint} being decreasing in $p_X(2)$.
\end{cor}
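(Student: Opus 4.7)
The plan is to obtain \eqref{eq:coolBound1} as a direct specialization of Theorem \ref{thm:Bound} in which we feed in only the information that the largest PIC is at most $\lambda_1$. Since the PICs are ordered in decreasing magnitude, $\lambda_k(X;Y)\le\lambda_1$ for all $k$, so the joint distribution lies in $\mathcal{R}(\px,\blambdat)$ with $\blambdat=(\lambda_1,\dots,\lambda_1)$. I would then substitute this all-equal vector into the bound of Theorem \ref{thm:Bound} and simplify.

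With all entries of $\blambdat$ equal to $\lambda_1$, the vector $\blambdat_{k^*}$ appearing in Theorem \ref{thm:Bound} has all $m$ entries equal to $\lambda_1$ as well. Consequently $\bp^T\blambdat_{k^*}=\lambda_1\sum_i\px(i)=\lambda_1$ and $\lambda_{k^*}\|\bp\|_2^2=\lambda_1\sum_i\px(i)^2$, so their difference collapses to $\lambda_1\bigl(1-\sum_i\px(i)^2\bigr)$, which is precisely the first term under the square root in \eqref{eq:coolBound1}. Because the bound of Theorem \ref{thm:Bound} holds for every $\beta\in[0,\px(2)]$ (not just the minimizer), this already gives \eqref{eq:coolBound1} on that interval. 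For $\beta>\px(2)$ only the $i=1$ term in $\sum_i([\px(i)-\beta]^+)^2$ can be nonzero, and a short derivative check using $\sqrt{\lambda_1(1-\sum \px^2)+[\px(1)-\beta]^{+2}}\ge[\px(1)-\beta]^+$ shows that the right-hand side of \eqref{eq:coolBound1} is non-increasing on $[\px(2),\infty)$; hence the bound at $\beta=\px(2)$ automatically certifies the inequality for every larger $\beta$.

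Plugging $\beta=\px(2)$ into \eqref{eq:coolBound1} and noting that $[\px(i)-\px(2)]^+=\px(1)-\px(2)$ for $i=1$ and zero otherwise immediately gives \eqref{eq:coolint}. To obtain the cleaner form \eqref{eq:coolBound2}, I would view the right-hand side of \eqref{eq:coolint} as a function of $\px(2)$ with $\sum_i\px(i)^2$ held symbolically fixed, and observe that its derivative $-1+(\px(1)-\px(2))/\sqrt{\lambda_1(1-\sum\px^2)+(\px(1)-\px(2))^2}$ is non-positive on $[0,\px(1)]$. Relaxing $\px(2)$ up to $\px(1)$ in the expression therefore weakens \eqref{eq:coolint} into exactly \eqref{eq:coolBound2}, using $\rho_m(X;Y)=\sqrt{\lambda_1}$.

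There is no serious obstacle once the reduction to an all-equal $\blambdat$ is recognized; the only steps requiring any care are the monotonicity argument that extends the bound from $\beta\le\px(2)$ to all $\beta\ge 0$, and the final symbolic relaxation of $\px(2)$ to $\px(1)$ that simplifies \eqref{eq:coolint} into \eqref{eq:coolBound2}.
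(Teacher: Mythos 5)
Your proposal is correct and follows the same route the paper intends: substitute the all-equal vector $\blambdat=(\lambda_1,\dots,\lambda_1)$ into Theorem~\ref{thm:Bound} to get \eqref{eq:coolBound1}, set $\beta=\px(2)$ for \eqref{eq:coolint}, and relax $\px(2)\to\px(1)$ using monotonicity for \eqref{eq:coolBound2}; the extension to $\beta>\px(2)$ via the derivative check is a clean way to handle the fact that the theorem's minimum is only stated over $[0,\px(2)]$. One minor imprecision: when $\px$ is uniform, $k^*=m$ and $\lambda_{k^*}=\lambda_m=0$, so $\blambdat_{k^*}$ has last entry $0$ rather than $\lambda_1$ and neither $\bp^T\blambdat_{k^*}$ nor $\lambda_{k^*}\|\bp\|_2^2$ equals what you wrote — but their difference still collapses to $\lambda_1\bigl(1-\sum_i\px(i)^2\bigr)$, so the final bound is unaffected.
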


\begin{remark} 
  The bounds \eqref{eq:coolBound1} and  \eqref{eq:coolBound2}   are particularly helpful for showing how the error probability scales with the input distribution and the maximal correlation. For a given $\pxy$, recall that  \[\Adv(X|Y) \defined 1-\px(1)-P_e(X|Y),\] defined in \eqref{eq:advGuess}, is the advantage of correctly estimating $X$ from an observation of $Y$ over a random guess of $X$ when $Y$ is unknown.  Then, from equation \eqref{eq:coolBound2}
  \begin{align*}
        \Adv(X|Y) &\leq \rho_m(X;Y)\sqrt{\left(1-\sum_{i=1}^m\px(i)^2\right)} \\
        &\leq \rho_m(X;Y)= \sqrt{\lambda_1(X;Y)}.
  \end{align*}
  Therefore, the advantage of estimating $X$ from $Y$ decreases at least linearly with the maximal correlation between $X$ and $Y$.
\end{remark}

We present next results on the extremal properties of the error-rate function. This analysis will be particularly useful for determining how to bound the probability of error of estimating functions of a random variable.

\subsection[ Bounding the Estimation Error of Functions of a Hidden Random Variable]{Extremal Properties of the Error-Rate Function and Bounding the Estimation Error of Functions of a Hidden Random Variable}
\label{sec:BoundingFunctionsU}
%


Owing to convexity of $\calI(\px,p_{\Xh|X})$ in $p_{\Xh|X}$, it follows directly that $e_{\calI}(\px,\theta) $ is convex in $\theta$ for a fixed $\px$. 
%
%
%
We will now prove that, for a fixed $\theta$, $e_\calI(\px,\theta)$  is \textit{Schur-concave} in $\px$ if $\calI(\px,p_{\Xh|X})$ is concave in $p_X$ for a fixed $p_{\Xh|X}$. Ahlswede \cite[Theorem 2]{ahlswede_extremal_1990} proved this result for the particular case where $\calI(X;Y)=I(X;Y)$  by investigating the properties of the explicit characterization of the rate-distortion function under Hamming distortion. The proof presented here is  simpler and more general, and is based on a proof technique used by Ahlswede in \cite[Theorem 1]{ahlswede_extremal_1990}.

\begin{thm}
  \label{thm:schur}
If $\calI(\px,p_{\Xh|X})$ is concave in $\px$ for a fixed $p_{\Xh|X}$, then
$e_\calI(\px,\theta)$ is Schur-concave in $\px$ for a
fixed $\theta$.
\end{thm}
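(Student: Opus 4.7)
My plan is to prove Schur-concavity via the T-transform characterization. Since a symmetric function on the simplex is Schur-concave if and only if it does not decrease under any T-transform, and every majorization can be realized by a finite sequence of T-transforms (a standard fact about majorization), it suffices to verify two things: (i) that $e_\calI(\cdot,\theta)$ is symmetric and (ii) that $e_\calI(\mu\px+(1-\mu)\sigma\px,\theta)\geq e_\calI(\px,\theta)$ for every transposition $\sigma$ of $\calX$ and every $\mu\in[0,1]$. Symmetry follows immediately from the invariance axiom (iii) of a dependence measure: jointly relabeling the alphabet $\calX$ by any bijection in both $\px$ and $p_{\Xh|X}$ leaves both $\calI$ and the Hamming error $\Pr\{X\neq\Xh\}$ unchanged, so the feasible set and objective of the program defining $e_\calI$ match under permutation.

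For the T-transform step, fix a transposition $\sigma$, a mixing weight $\mu$, and set $\bar{\px}\defined \mu\px+(1-\mu)\sigma\px$. Starting from any conditional $p\defined p_{\Xh|X}$ feasible for $\bar{\px}$ (that is, $\calI(\bar{\px},p)\leq\theta$), I will construct a feasible conditional $p'$ for $\px$ whose Hamming error matches that of $(\bar{\px},p)$; taking infima then yields $e_\calI(\px,\theta)\leq e_\calI(\bar{\px},\theta)$. The construction is to define the ``twisted'' kernel $\tilde p(\xh\mid x)\defined p(\sigma(\xh)\mid\sigma(x))$ and set
\[
p'\defined \mu\, p + (1-\mu)\,\tilde p.
\]
By invariance applied to the bijection $(x,\xh)\mapsto(\sigma x,\sigma\xh)$, one obtains both $\calI(\px,\tilde p)=\calI(\sigma\px,p)$ and the analogous identity for the Hamming error.

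The heart of the argument is chaining the two opposite curvature hypotheses on $\calI$. Convexity of $\calI(\px,\cdot)$ in the conditional (built into the definition of a dependence measure) gives $\calI(\px,p')\leq \mu\calI(\px,p)+(1-\mu)\calI(\sigma\px,p)$; then the hypothesized concavity of $\calI(\cdot,p)$ in the marginal upper-bounds this same $\mu$-weighted average by $\calI(\bar{\px},p)\leq\theta$, certifying feasibility. Because $\Pr\{X\neq\Xh\}$ is linear in $\px$ for fixed $p$, the error identity
\[
\Pr_{\px,p'}\{X\neq\Xh\} = \mu\,\Pr_{\px,p}\{X\neq\Xh\} + (1-\mu)\,\Pr_{\sigma\px,p}\{X\neq\Xh\} = \Pr_{\bar{\px},p}\{X\neq\Xh\}
\]
follows automatically from linearity and the invariance identity for $\tilde p$.

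The key obstacle, and the reason the proof works at all, is the choice to make the mixing weight of $p'$ equal to the T-transform weight $\mu$ rather than, say, $1/2$: this is precisely what causes the convexity-in-the-conditional bound and the concavity-in-the-marginal bound to meet at the very same weighted average of $\calI(\px,p)$ and $\calI(\sigma\px,p)$, so that the $\theta$-constraint cleanly transfers from $\bar{\px}$ back to $\px$. Any other weighting decouples the two bounds and the argument collapses. Once this is in place, iterating over a sequence of T-transforms reduces general majorizations to this single step, and the Schur-concavity conclusion is immediate.
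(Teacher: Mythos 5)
Your proof is correct, but it takes a genuinely different route from the paper's. The paper works with a general majorization directly: it writes $\qx=\sum_i a_i(\px\circ\pi_i)$ via Birkhoff, uses concavity of $\calI$ in $\px$ and invariance to lower-bound $\calI(\qx,\pxhgx)$ by $\sum_i a_i\,\calI(\px,\pi_i\circ\pxhgx)$, then relaxes the optimization to allow a \emph{different} kernel $\pxhgx^i$ for each permutation, arrives at $\inf\{\sum_i a_i\,e_\calI(\px,\theta_i):\sum_i a_i\theta_i=\theta\}$, and finally invokes convexity of $e_\calI(\px,\cdot)$ in $\theta$ (itself a consequence of convexity of $\calI$ in the conditional) to reassemble the bound. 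You instead reduce to a single T-transform, and rather than splitting the optimization you directly construct one kernel $p'=\mu\,p+(1-\mu)\,\tilde p$ that is feasible for $\px$ and reproduces the Hamming error exactly; the two curvature hypotheses meet at the same $\mu$-weighted average, so the $\theta$-constraint transfers in one step without ever needing convexity of $e_\calI$ in $\theta$ as a separate fact. Both proofs use the same three ingredients (convexity in $p_{\Xh|X}$, concavity in $\px$, invariance), but your assembly is more constructive and arguably cleaner, at the cost of an appeal to the T-transform decomposition theorem; the paper's version handles general doubly stochastic mixing in one shot but needs the auxiliary convexity-in-$\theta$ lemma to close the loop.
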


\begin{proof}
The proof is presented in Appendix \ref{app:proofsPIC_EP}.
\end{proof}

For a given integer $1\leq M\leq |\calX|$, we define
\begin{equation}
\label{eq:calFM}
  \calF_{M} \defined \left\{ f:\calX\rightarrow \calU~ \big|~f\mbox{ is
  surjective and }|\calU|\geq M
\right\}
\end{equation}
and
\begin{equation}
  P_{e,M}(X|Y) \defined \min_{f\in \calF_M} P_e(f(X)|Y).
  \label{eq:PeM}
\end{equation}
$P_{e,|\calX|}(X|Y)$ is simply the error probability of estimating $X$ from
$Y$, i.e.  $P_{e,|\calX|}(X|Y)=P_e(X|Y)$. The surjectivity condition in the definition of $\calF_M$ is mostly technical, and was added to (i) avoid the constant function being in $\calF_M$ (which would render the estimation error trivial) and (ii) enable the use of Schur-concavity results to derive bounds on estimation error. Note that, in the discrete setting considered here, by varying $M$ we span the set of all  functions of $X$, so there is no loss of generality. Nevertheless, there are practical settings where this condition naturally arises.  In classification problems, for example, the surjectivity condition would correspond to the number of classes used to classify $X$.

The next theorem shows that a lower bound for $P_{e,M}$ can be derived for any dependence measure $\calI$ as long as  $e_\calI(\px,\theta)$ or a lower bound for $e_\calI(\px,\theta)$ is Schur-concave in $p_X$.

\begin{thm}
  \label{thm:boundPeM}
  For a given $M\in [m]$ and $\px$ with $\calX=[m]$ and $p_X(1)\geq p_X(2)\geq\dots\geq p_X(m)$, let $U=g_M(X)$, where $g_M:\{1,\dots,m\}\rightarrow
  \{1,\dots,M\}$ is defined as
  \begin{equation*}
    g_M(x) \defined 
        \begin{cases}
            1& 1\leq x \leq m-M+1\\
            x-m+M & m-M+2\leq x \leq m~.
        \end{cases}
  \end{equation*}
  Let $p_U$ be the marginal distribution\footnote{The pmf of $U$ is
    $p_U(1)=\sum_{i=1}^{m-M+1} \px(i)$ and $p_U(k)=\px(m-M+k)$ for
  $k=2,\dots,M$.} of $U$. Assume that, for a given dependence measure
  $\calI$, there exists a function $L_{\calI}(\cdot,\cdot)$ such that for all
  distributions $q_X$ and any $\theta$, 
  $e_\calI(\qx,\theta)\geq
  L_\calI(\qx,\theta)$.  If $L_\calI(\px,\theta)$ is Schur-concave in $\px$, then for $X\sim \px$ and $\calI(X;Y)\leq \theta$,
  \begin{equation}
    P_{e,M}(X|Y)\geq L_\calI(p_U,\theta).
  \end{equation} 
In addition\footnote{We thank Dr. Nadia Fawaz (nadia.fawaz@gmail.com) for pointing out this extension.}, for any $S\to X\to Y$ such that $p_U$ majorizes $p_S$,
\begin{equation}
  P_e(S|Y)\geq L_\calI(p_U,\theta).
\end{equation}
\end{thm}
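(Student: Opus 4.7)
My plan is to reduce the lower bound on $P_{e,M}(X|Y)$ to the Schur-concavity hypothesis by showing that $p_U$ is majorized by every $p_V$ that can arise as $V=f(X)$ for $f\in\calF_M$, and then invoking $L_\calI$ at the ``worst-case'' point $p_U$. The argument splits cleanly into a DPI/error-rate reduction and a purely combinatorial majorization claim about the specific construction of $g_M$.

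For the reduction step, I would fix any $f\in\calF_M$ and set $V=f(X)$. Since $V\to X\to Y$, any estimator $\hat V$ also satisfies $V\to Y\to \hat V$, so the DPI property of $\calI$ yields $\calI(V;\hat V)\leq \calI(V;Y)\leq \calI(X;Y)\leq \theta$. By the definition of the error-rate function, $P_e(V|Y)\geq e_\calI(p_V,\theta)\geq L_\calI(p_V,\theta)$. If I can show $p_U$ majorizes $p_V$, then Schur-concavity of $L_\calI$ in its first argument immediately gives $L_\calI(p_V,\theta)\geq L_\calI(p_U,\theta)$, and taking the infimum over $f\in\calF_M$ concludes the main statement.

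The main obstacle is therefore establishing $p_U\succeq p_V$ for every such $V$. My approach is to let $k=|f(\calX)|\geq M$ and enumerate the preimages of $f$ as nonempty blocks $B_1,\dots,B_k$ ordered so that $p_V(j)=\sum_{i\in B_j}p_X(i)$ is nonincreasing in $j$. For any $j\in[k]$, the remaining $k-j$ blocks each contain at least one element, so $B_1\cup\dots\cup B_j$ covers at most $m-(k-j)$ elements of $\calX$. Since $p_X$ is sorted in decreasing order,
\begin{equation*}
\sum_{i=1}^{j}p_V(i)\;\leq\;\sum_{i=1}^{m-k+j}p_X(i)\;\leq\;\sum_{i=1}^{m-M+j}p_X(i)\;=\;\sum_{i=1}^{j}p_U(i),
\end{equation*}
where the second inequality uses $k\geq M$ and the decreasing ordering of $p_X$, and the equality is exactly the defining property of $g_M$: merging the $m-M+1$ largest $p_X$-values into one block and leaving the remaining $M-1$ values as singletons. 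For $j>M$ the cumulative $p_U$-sum equals $1$ and the inequality is trivial. This is precisely the majorization $p_U\succeq p_V$.

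The extension to $S\to X\to Y$ with $p_U$ majorizing $p_S$ then costs nothing: the DPI gives $\calI(S;Y)\leq \theta$, so $P_e(S|Y)\geq e_\calI(p_S,\theta)\geq L_\calI(p_S,\theta)$, and Schur-concavity of $L_\calI$ yields $L_\calI(p_S,\theta)\geq L_\calI(p_U,\theta)$, completing the theorem.
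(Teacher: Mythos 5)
Your argument matches the paper's proof: a chain of inequalities from the DPI to the error-rate function $e_\calI$, then to the Schur-concave lower bound $L_\calI$, and finally invoking the majorization $p_U \succeq p_{f(X)}$, with the extension to $S$ handled identically. The one thing you add is an actual proof of that majorization claim (via the count that the $k-j$ unused blocks each contain at least one element, giving $\sum_{i\leq j}p_V(i)\leq\sum_{i\leq m-k+j}p_X(i)\leq\sum_{i\leq m-M+j}p_X(i)=\sum_{i\leq j}p_U(i)$), whereas the paper merely asserts that $p_U$ majorizes $p_{f(X)}$ without argument; your step is correct and fills a small gap in the exposition.
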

\begin{proof}
  The result follows from the following chain of inequalities:
  \begin{align*}
    P_{e,M}(X|Y) & \stackrel{(a)}{\geq}   \min_{f\in \calF_M,\ttheta}
    \left\{e_\calI\left(p_{f(X)},\ttheta \right): 
    \ttheta\leq \theta \right\}\\
    & \stackrel{(b)}{\geq} \min_{f\in \calF_M}
    \left\{e_\calI\left(p_{f(X)},\theta\right)\right\}\\
    &  \stackrel{(c)}{\geq}  \min_{f\in \calF_M}
    \left\{L_\calI\left(p_{f(X)},\theta\right)\right\}\\
    & \stackrel{(d)}{\geq}  L_\calI(p_U,\theta),
  \end{align*}
  where (a) follows from the DPI, (b) follows from  $e_\calI(\qx,\theta)$, being decreasing in $\theta$, $(c)$ follows from
  $e_\calI(\qx,\theta)\geq L_\calI(\qx,\theta)$ for all $\qx$, and $\theta$ and (d)
  follows from the Schur-concavity of the lower bound and by observing that  $p_U$
  majorizes $p_{f(X)}$ for every $f\in \calF_M$. In the case of $P_e(S|X)$, the same inequalities hold with $S$ playing the role of $f(X)$ in (a) and (b), and the last inequality also following from  Schur-concavity of $L_\calI(p_S,\theta)$ in $p_S$.
\end{proof}

\begin{remark}
The function $g_M(X)=U$ in Theorem  \ref{thm:boundPeM} is formed by adding the most likely symbols of $X$, and, consequently, $p_U$ majorizes any other distribution $p_{f(X)}$ for $f\in \calF_M$. The function $g_M$ can thus be regarded as the ``least uncertain'' function of $X$ in $\calF_M$ in the following sense: since R\'enyi entropy\footnote{The R\'enyi entropy of a discrete random variable $X$ is given by $H_\alpha(X)\defined \frac{1}{1-\alpha}\log\left(\sum_{x\in\calX} \px(x)^\alpha\right)$.} is Schur-concave, $H_\alpha(g_M(X))\leq H_\alpha(f(X))$ for all $f\in \calF_M$ and $\alpha\geq 0$.
\end{remark}

The following results illustrates how Theorem \ref{thm:boundPeM} can be
used for mutual information and maximal correlation.
\begin{cor}
  \label{cor:PeMboundI}
    Let $I(X;Y)\leq \theta$. Then
    \begin{equation*}
        P_{e,M}(X|Y)\geq d^* 
    \end{equation*}
    where $d^*$ is the solution of
    \begin{equation*}
      h_{b}(d^*)+d^*\log(m-1)=\min\{H(U)-\theta,0 \},
    \end{equation*}
    and $h_b(\cdot)$ is the binary entropy function.
\end{cor}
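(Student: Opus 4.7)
The plan is to obtain the corollary as a direct specialization of Theorem \ref{thm:boundPeM} with $\calI$ taken to be mutual information. To do this I need to verify three ingredients, each of which is already essentially available in the paper.

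First, I would verify that $I(X;Y)$ is a dependence measure in the sense of Definition \ref{defn:informationMeasure}: convexity in $\pygx$ for fixed $\px$, the DPI, and invariance under bijective relabelings of the alphabets are all classical properties of mutual information. In addition, Theorem \ref{thm:schur} requires concavity of $\calI(\px,p_{\Xh|X})$ in $\px$ for fixed channel, which is also a standard property of $I$.

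Second, I would identify the error-rate function $e_I(q_X,\theta)$. By Definition \ref{defn:ei}, $e_I(q_X,\theta)=D_{I,d_H}(q_X,\theta)$ is precisely the distortion-rate function of the source $q_X$ under Hamming distortion. The ``Error-rate function for mutual information'' example preceding this section shows that $e_I(q_X,\theta)$ equals the solution $d^*$ of
\begin{equation*}
h_b(d^*)+d^*\log(|\mathrm{supp}(q_X)|-1)=H(q_X)-\theta,
\end{equation*}
with the convention that $d^*=0$ when the right-hand side is non-positive.

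Third, I would invoke Theorem \ref{thm:schur}: since $I$ is concave in $\px$ for fixed $\pygx$, the function $e_I(\px,\theta)$ is Schur-concave in $\px$ for every fixed $\theta$. So the hypothesis of Theorem \ref{thm:boundPeM} holds with $L_I \defined e_I$ itself. Applying Theorem \ref{thm:boundPeM} with the random variable $U=g_M(X)$ defined there yields $P_{e,M}(X|Y)\geq e_I(p_U,\theta)$, and unrolling the definition of $e_I$ gives the equation for $d^*$ in the corollary. Replacing the exact support size of $p_U$ by $m-1$ in the logarithm is permissible because $h_b(d)+d\log(\cdot)$ is increasing in the argument of the logarithm, so $\log(m-1)$ yields a (possibly looser) valid lower bound.

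There is no substantive obstacle here beyond assembling these three pieces: the work was done in Theorem \ref{thm:boundPeM} and Theorem \ref{thm:schur}, plus the explicit Fano-type form of $e_I$. The one point of care is the support-size convention in Fano's equation, which I would handle via the monotonicity remark above.
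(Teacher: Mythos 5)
Your proposal is correct and follows essentially the same route as the paper's proof: invoke Theorem~\ref{thm:boundPeM} with $\calI=I$, use Theorem~\ref{thm:schur} to obtain Schur-concavity of the error-rate function from concavity of mutual information in $\px$, and then instantiate $e_I$ via the Fano/Gallager expression for the Hamming distortion-rate function. The paper's own proof is terser (it just cites the rate-distortion lower bound and Theorem~\ref{thm:schur} without spelling out the application of Theorem~\ref{thm:boundPeM} or the passage from $\log(M-1)$ to $\log(m-1)$), so your extra care on the support-size monotonicity point is a welcome clarification rather than a deviation.
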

\begin{proof}
Let $R_I(\px,\delta)\defined \min_{p_{\Xh|X}}\{I(X;\hat{X})|\mathbb{E}[d_H(X,\hat{X})]\leq \delta \}$ be  the well known rate-distortion function under Hamming
distortion. Then $R_I(\px,\delta)$ satisfies (\cite[(9.5.8)]{gallager_information_1968})
$R_I(\px,\delta)\geq H(X)-h_{b}(d^*)-d^*\log(m-1)$. The result follows from
Theorem \ref{thm:schur}, since mutual information is concave in $\px$.
\end{proof}







\begin{cor}
    \label{cor:PeMboundrho}
    Let $\calJ_1(X;Y)=\rho_m(X;Y)\leq \theta$. Then  
    \begin{equation*} 
      P_{e,M}(X|Y) \geq 1-p_U(1)- \theta\sqrt{\left(1-\sum_{i=1}^M p_U(i)^2\right)}~,
   \end{equation*}
   where $P_{e,M}(X|Y)$ is defined in \eqref{eq:PeM} and $U$ is defined as in Theorem \ref{thm:boundPeM}.
\end{cor}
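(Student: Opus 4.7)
The plan is to apply the general framework of Theorem~\ref{thm:boundPeM} to the dependence measure $\calI=\calJ_1$, using the hypothesis $\rho_m(X;Y)\le\theta$ (equivalently $\calJ_1(X;Y)\le\theta^2$). Theorem~\ref{thm:boundPeM} reduces the job of lower-bounding $P_{e,M}(X|Y)$ to exhibiting a Schur-concave lower bound $L_{\calJ_1}(p,\theta)$ on the error-rate function $e_{\calJ_1}(p,\theta)$, and then evaluating it at $p=p_U$.

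First I would pin down the pointwise lower bound. By inequality~\eqref{eq:coolBound2} of Corollary~\ref{cor:coolBounds}, for every joint distribution with $\rho_m\le\theta$,
\[
P_e(X|Y)\;\ge\;1-p_X(1)-\theta\sqrt{\textstyle 1-\sum_i p_X(i)^2}.
\]
Taking the infimum over all conditional distributions compatible with the constraint then gives
\[
L_{\calJ_1}(p,\theta)\;\defined\;1-p(1)-\theta\sqrt{\textstyle 1-\sum_i p(i)^2}\;\le\;e_{\calJ_1}(p,\theta),
\]
which already matches the form of the bound in the statement when evaluated at $p=p_U$.

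The hard part will be verifying Schur-concavity of $L_{\calJ_1}(\cdot,\theta)$, which is precisely the step the paper flags as ``a particular form of the bound stated in Theorem~\ref{thm:Bound} is Schur-concave.'' The verification is delicate because the two summands pull in opposite directions under majorization: $-p(1)$ is Schur-concave (the maximum is Schur-convex), whereas $-\theta\sqrt{1-\sum_i p(i)^2}$ is Schur-convex (since $\sum_i p(i)^2$ is Schur-convex and $t\mapsto\sqrt{1-t}$ is decreasing). My strategy would be to exploit the freedom in the parameter $\beta$ in the sharper inequality~\eqref{eq:coolBound1} and pick $\beta$ so that the resulting composition is demonstrably Schur-concave, or, failing that, to verify the one-sided majorization inequality $L_{\calJ_1}(p_U,\theta)\le L_{\calJ_1}(p_{f(X)},\theta)$ directly on the restricted family $\{p_{f(X)}:f\in\calF_M\}$, every member of which is majorized by $p_U$ (and for which one may additionally use that $g_M$ simultaneously maximizes both $p_{f(X)}(1)$ and $\sum_i p_{f(X)}(i)^2$ over $f\in\calF_M$).

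Once Schur-concavity is in hand, Theorem~\ref{thm:boundPeM} finishes the argument. Its derivation already absorbs the DPI for maximal correlation (which forces $\rho_m(f(X);Y)\le\theta$), the monotonicity of $e_{\calJ_1}$ in $\theta$, and the majorization $p_U\succeq p_{f(X)}$ for every $f\in\calF_M$, so one immediately obtains
\[
P_{e,M}(X|Y)\;\ge\;L_{\calJ_1}(p_U,\theta)\;=\;1-p_U(1)-\theta\sqrt{\textstyle 1-\sum_{i=1}^M p_U(i)^2},
\]
which is exactly the claimed bound.
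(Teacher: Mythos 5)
Your plan and the paper's one-line proof take the same route: apply Theorem~\ref{thm:boundPeM} with the lower bound $L(p,\theta)=1-p(1)-\theta\sqrt{1-\sum_i p(i)^2}$ read off from~\eqref{eq:coolBound2}, after certifying $\calJ_1$ as a valid dependence measure via Theorems~\ref{thm:convex} and~\ref{lem:dataProc}. You are right to single out the Schur-concavity of $L(\cdot,\theta)$ as the crux, and the difficulty you flag is in fact a genuine gap shared with the paper: $L$ is \emph{not} Schur-concave on the $(M-1)$-simplex once $M\geq 3$. Writing $L$ as the symmetric function $1-\max_i p_i-\theta\sqrt{1-\sum_i p_i^2}$ and applying the Schur--Ostrowski test between two \emph{non-maximal} coordinates $i,j\ge 2$ gives $(p_i-p_j)(\partial_i L-\partial_j L)=\theta(p_i-p_j)^2/\sqrt{1-\sum_k p_k^2}\ge 0$, which is the Schur-\emph{convexity} sign in that plane. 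A concrete instance: $(0.5,0.4,0.1)$ majorizes $(0.5,0.3,0.2)$, yet $L(0.5,0.3,0.2)=0.5-\theta\sqrt{0.62}<0.5-\theta\sqrt{0.58}=L(0.5,0.4,0.1)$ for every $\theta>0$. So the hypothesis of Theorem~\ref{thm:boundPeM} is not satisfied by this choice of $L$, and one cannot simply invoke it; the paper's assertion that ``\eqref{eq:coolBound2} is Schur-concave'' is made without justification and is false in general.

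Your proposed fallbacks do not close the gap as stated. Minimizing over $\beta$ in~\eqref{eq:coolBound1} produces a \emph{tighter} pointwise lower bound on $P_e$, which is the wrong direction for recovering Schur-concavity (the envelope of functions that are individually not Schur-concave need not be). Your second idea --- comparing $L(p_U,\theta)$ and $L(p_{f(X)},\theta)$ directly on $\{p_{f(X)}:f\in\calF_M\}$ --- is the right place to look, but your own (correct) observation that $g_M$ simultaneously maximizes $p_{f(X)}(1)$ and $\sum_i p_{f(X)}(i)^2$ is precisely what recreates the opposing-forces tension you identified: $1-p_U(1)\leq 1-p_{f(X)}(1)$ while $-\theta\sqrt{1-\sum_i p_U(i)^2}\geq -\theta\sqrt{1-\sum_i p_{f(X)}(i)^2}$, so the two terms of $L$ move in opposite directions and no termwise comparison follows. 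A correct proof must somehow bound the gain in the square-root term by the loss in the linear term on the restricted family, or else restrict attention to the regime where $L(p_U,\theta)\geq 0$ (outside it the claimed bound is vacuous, and the Schur-concavity violations I found all occur where $L<0$ in the $M=2$ case, though not for $M\ge3$). Neither you nor the paper supplies such an argument, so your proposal, like the paper's proof, leaves the key step open.
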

\begin{proof}
  The proof follows directly from Theorems \ref{thm:convex}, \ref{lem:dataProc}
  and Corollary \ref{cor:coolBounds}, by noting that \eqref{eq:coolBound2}
  is Schur-concave in $\px$.
  \end{proof}

The previous result leads to the next theorem, which states that the probability of guessing \textit{any} function of a hidden variable $X$ from an observation $Y$ is upper bounded by the maximal correlation of $X$ and $Y$.

\begin{thm}
\label{thm:AdvPeM}
  Let $p_X$ be fixed, $|\calX| < \infty$ and $\calF_M$ be given in \eqref{eq:calFM}. Define (cf. \eqref{eq:advGuess})
  \begin{equation*}
    \Adv_M(X|Y)\defined \max\left\{ 1-\max_{k\in [M]} p_{f(X)}(k)-P_e(f(X)|Y) \suchthat f\in \calF_M\right\}.
  \end{equation*}
  Then
  \begin{equation}
    \Adv_M(X|Y)\leq \rho_m(X;Y)\sqrt{1-\frac{1}{M}}\leq \rho_m(X;Y).
  \end{equation}
\end{thm}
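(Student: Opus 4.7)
The plan is to combine the data processing inequality for the principal inertia components (Theorem \ref{lem:dataProc}) with the maximal-correlation bound on error probability from Corollary \ref{cor:coolBounds}, applied pointwise to each candidate $f \in \calF_M$, and then take the supremum.

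First, I would fix any $f \in \calF_M$ and set $W = f(X)$, so that $W \to X \to Y$ forms a Markov chain. Applying the DPI to the largest PIC gives $\rho_m(W;Y)^2 = \lambda_1(W;Y) \leq \lambda_1(X;Y) = \rho_m(X;Y)^2$, hence $\rho_m(W;Y)\leq \rho_m(X;Y)$. Next, relabel the target set so that the pmf of $W$ is sorted in decreasing order, and apply inequality \eqref{eq:coolBound2} to the pair $(W,Y)$:
\begin{equation*}
P_e(W\,|\,Y) \;\geq\; 1 - p_W(1) - \rho_m(W;Y)\sqrt{1 - \sum_{k} p_W(k)^2}\,.
\end{equation*}
Rearranging and substituting the DPI bound on $\rho_m(W;Y)$ yields
\begin{equation*}
1 - \max_k p_W(k) - P_e(W\,|\,Y) \;\leq\; \rho_m(X;Y)\sqrt{1 - \sum_{k} p_W(k)^2}\,.
\end{equation*}

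The final step is to lower bound $\sum_k p_W(k)^2 \geq 1/M$ by Cauchy--Schwarz (equivalently, the power-mean inequality), using the support constraint built into $\calF_M$: summing over a support indexed by $M$ outcomes, $\sum_{k=1}^{M} p_W(k)^2 \geq \tfrac{1}{M}\bigl(\sum_{k=1}^{M} p_W(k)\bigr)^2 = 1/M$. Substituting produces $1 - \max_k p_W(k) - P_e(W\,|\,Y) \leq \rho_m(X;Y)\sqrt{1-1/M}$, and taking the supremum over $f\in\calF_M$ gives the first inequality of the theorem; the second inequality $\rho_m(X;Y)\sqrt{1-1/M}\leq \rho_m(X;Y)$ is immediate.

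The main subtlety I anticipate is the interplay between the support size of $W$ and the $1/M$ factor in the final estimate: the Cauchy--Schwarz step is clean precisely when the relevant sum ranges over $M$ terms, and correctly interpreting the support-size convention in $\calF_M$ (together with the fact that $\max_k p_W(k)$ is taken after sorting) is what makes the $\sqrt{1-1/M}$ factor appear in place of the trivial $1$. Modulo that bookkeeping, the proof is a short chain of DPI, application of \eqref{eq:coolBound2}, and a power-mean inequality.
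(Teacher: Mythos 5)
Your proof follows exactly the paper's chain: the DPI for the PICs (Theorem \ref{lem:dataProc}) gives $\rho_m(f(X);Y)\leq\rho_m(X;Y)$, then inequality \eqref{eq:coolBound2} from Corollary \ref{cor:coolBounds} is applied to the pair $(f(X),Y)$, and a power-mean/Cauchy--Schwarz bound on $\sum_k p_{f(X)}(k)^2$ produces the $\sqrt{1-1/M}$ factor, after which one maximizes over $f\in\calF_M$. The support-size subtlety you flag is genuine and is handled the same way in the paper: the lower bound $\sum_{k=1}^{M}p_{f(X)}(k)^2\geq 1/M$ uses $\sum_{k=1}^{M}p_{f(X)}(k)=1$, i.e.\ it treats $f$ as having range of size exactly $M$, which is also what the paper's phrase ``minimized when $p_{f(X)}$ is uniform'' implicitly assumes even though $\calF_M$ nominally allows $|\calU|\geq M$.
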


\begin{proof}
  For $f\in \calF_M$
    \begin{align*}
      \Adv(f(X)|Y) &\leq \rho_m(f(X);Y) \sqrt{1-\sum_{i\in[M]} p_{f(X)}(i)^2}\\
                  &\leq  \rho_m(X;Y) \sqrt{1-\sum_{i\in[M]} p_{f(X)}(i)^2}\\
                  &\leq \rho_m(X;Y) \sqrt{1-\frac{1}{M}},
    \end{align*}
    where the first inequality follows from \eqref{eq:coolBound2} and the definition \eqref{eq:advGuess}, the second inequality follows by combining Theorem  \ref{lem:dataProc} (DPI for the PICs) and the fact that $\lambda_1(f(X);X)\leq 1$, which leads to $\rho_m(f(X);Y)\leq \rho_m(X;Y)$, and the last inequality follows from the fact that $\sum_{i\in[M]} p_{f(X)}(i)^2$ is minimized when $p_{f(X)}$ is uniform. The result follows by maximizing over all $f\in \calF_M$.
\end{proof}

The results presented in this section  demonstrate that the PICs are a  useful information measure that can shed light on  fundamental limits of estimation. In particular, Theorem \ref{thm:AdvPeM} connects the largest PIC, namely the maximal correlation, with the probability of correctly guessing \textit{any} function of a hidden, discrete random variable. The PICs also provide a  characterization of the functions of a hidden variable that can (or cannot) be estimated with small mean-squared error (Theorem \ref{thm:PIC_Charac}).  In the next section, we explore applications of the PICs to privacy and security.

\section[Applications of the PICs to Security and Privacy]{Applications of the PICs to Security and Privacy}
\label{chap:PIC_Priv}

In this section, we present a few applications of the principal inertia components to problems in security and privacy. We adopt the privacy against statistical inference framework presented in \cite{du2012privacy}. This setup, called the \textit{Privacy Funnel}, was introduced in \cite{Makhdoumi2014funnel}. Consider two  communicating parties, namely Alice and Bob. Alice's goal is to disclose to Bob information about a set of measurement points, represented by the random variable $X$. Alice discloses this information in order to receive some utility from Bob. Simultaneously, Alice wishes to limit the amount of information revealed about a private random variable $S$ that is dependent on $X$. For example, $X$ may represent Alice's movie ratings, released to Bob in order to receive movie recommendations, whereas $S$ may represent Alice's political preference or yearly income.  Bob will try to extract the maximum amount of information about $S$ from the data disclosed by Alice.

Instead of revealing $X$ directly to Bob, Alice releases a new random variable, denoted by $Y$. This random variable is produced from $X$ through a random mapping $p_{Y|X}$, called the \textit{privacy-assuring mapping}. We assume that $p_{S,X}$ is fixed and known by both Alice and Bob, and $S\rightarrow X \rightarrow Y$. Alice's goal is to find a mapping $p_{Y|X}$ that minimizes $I(S;Y)$, while guaranteeing that the information disclosed about $X$ is above a certain threshold $t$, i.e. $I(X;Y)\geq t$. We refer to the quantity $I(S;Y)$ as the \textit{disclosed private information}, and $I(X;Y)$ as the \textit{disclosed useful information}. As discussed in Section \ref{sec:generalPICs}, when $I(S;Y)=0$,  we say that \textit{perfect privacy} is achieved, i.e. $Y$ does not reveal any information about $S$.  We consider here the non-interactive, one-shot regime, where Alice discloses information once, and no additional information is released. We also assume that  Bob knows the privacy-assuring mapping $p_{Y|X}$ chosen by Alice, and no side information is available to Bob about $S$ besides $Y$.

\subsection{The Privacy Funnel}
\label{sec:funnel}

We define next the privacy funnel function, which captures the smallest amount of disclosed private information for a given threshold on the amount of disclosed useful information. We then characterize properties of the privacy funnel function in the rest of this section. 

\begin{defn}
  For $0\leq t \leq H(X)$ and a joint distribution $p_{S,X}$ over $\calS\times \calX$, we
  define the \textit{privacy funnel function} $G_I(t,p_{S,X})$ as
  \begin{equation}
    G_I(t,p_{S,X})\defined \inf\left\{ I(S;Y)\middle| I(X;Y)\geq t, S\rightarrow
    X\rightarrow Y \right\},
    \label{eq:defGI}
  \end{equation}
  where the infimum is over all mappings $p_{Y|X}$ such that $\calY$ is finite.
  For a fixed $p_{S,X}$ and $t\geq 0$, the set of  pairs $\{\left(t,
    G_I(t,p_{S,X})\right)\}$ is called the
  \textit{privacy region} of $p_{S,X}$.
\end{defn}
We now enunciate a few useful properties of $G_I(t,p_{S,X})$ and the privacy region.

\begin{lem}
  \label{lem:increase}
   \begin{align}
     G_I(t,p_{S,X})=\min_{p_{Y|X}}\left\{ I(S;Y)\middle|  I(X;Y)\geq t, S\to X\to
     Y,~|\calY|\leq |\calX|+1\right\}.
   \end{align}
In addition, for a fixed $p_{S,X}$, the mapping $t\mapsto \frac{G_I(t,p_{S,X})}{t}$ is  non-decreasing, and $G_I(t,p_{S,X})$ is convex in $t$.
\end{lem}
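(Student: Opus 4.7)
The plan is to establish the three statements in the natural order: cardinality bound first (which also upgrades the infimum to a minimum), then convexity in $t$ via a time-sharing construction, and finally the monotonicity of $t \mapsto G_I(t,p_{S,X})/t$ as a simple corollary of convexity and $G_I(0,p_{S,X})=0$.

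For the cardinality bound I plan to invoke the Fenchel--Eggleston--Carath\'eodory support lemma. Each symbol $y\in \calY$ induces a posterior $p_{X|Y=y}$ living in the $(|\calX|-1)$-dimensional simplex; moreover, since $S\to X\to Y$, the posterior $p_{S|Y=y}$ is the linear image $\Pygx^{\mathrm{T}} p_{X|Y=y}$ (using $p_{S|X}$ in place of $p_{Y|X}$). I would then identify $|\calX|+1$ continuous functionals of $p_{X|Y=y}$ whose averages under $p_Y$ must be preserved: the $|\calX|-1$ coordinates of $p_{X|Y=y}$ (to preserve the marginal $\px$, one coordinate being redundant), the entropy $H(p_{X|Y=y})$ (which determines $H(X|Y)$ and hence $I(X;Y)$), and the entropy $H(\Pygx^{\mathrm{T}} p_{X|Y=y})$ (which determines $I(S;Y)$). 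The support lemma then furnishes a $Y'$ with $|\calY'|\le |\calX|+1$ matching all of these values, so the infimum in \eqref{eq:defGI} may be restricted to $|\calY|\le|\calX|+1$. Since the resulting feasible set (a closed subset of a product of simplices) is compact and $I(X;Y),I(S;Y)$ are continuous in $\pygx$ on the relevant set, the infimum is attained, which is the first assertion.

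For convexity in $t$, let $t_1,t_2\in[0,H(X)]$ and $\lambda\in(0,1)$, and let $Y_1,Y_2$ be minimizers (guaranteed by the first part) with $I(X;Y_i)\ge t_i$ and $I(S;Y_i)=G_I(t_i,p_{S,X})$. Introduce an independent time-sharing variable $Q\in\{1,2\}$ with $\Pr(Q=1)=\lambda$, independent of $(S,X,Y_1,Y_2)$, and set $Y=(Y_Q,Q)$; the resulting channel $p_{Y|X}$ still satisfies the Markov chain $S\to X\to Y$. Because $Q\independent X$ and $Q\independent S$, the chain rule gives $I(X;Y)=\lambda I(X;Y_1)+(1-\lambda)I(X;Y_2)\ge \lambda t_1+(1-\lambda)t_2$ and $I(S;Y)=\lambda I(S;Y_1)+(1-\lambda)I(S;Y_2)$, so $Y$ is feasible for $t=\lambda t_1+(1-\lambda)t_2$ and
\begin{equation*}
G_I(\lambda t_1+(1-\lambda)t_2,p_{S,X})\le \lambda G_I(t_1,p_{S,X})+(1-\lambda)G_I(t_2,p_{S,X}),
\end{equation*}
yielding convexity.

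Finally, observing that taking $Y$ to be a constant gives $I(X;Y)=I(S;Y)=0$, we have $G_I(0,p_{S,X})=0$. Combining this with the convexity just established, for any $0<s\le t$ we can write $s=(s/t)\,t+(1-s/t)\cdot 0$ and obtain $G_I(s,p_{S,X})\le (s/t)\,G_I(t,p_{S,X})$, i.e.\ $G_I(s,p_{S,X})/s\le G_I(t,p_{S,X})/t$, which is the claimed monotonicity. The only technically delicate step is the support-lemma argument in part one: it requires choosing the right set of continuous functionals so that both constraints and the objective are simultaneously preserved, and verifying that the attained infimum indeed coincides with the original one over arbitrary $\calY$; the rest is essentially routine information-theoretic bookkeeping.
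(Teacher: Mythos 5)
Your cardinality-bound argument is the same as the paper's: both apply the Fenchel--Eggleston--Carath\'eodory support lemma to an $(|\calX|+1)$-dimensional vector of continuous functionals built from the posterior $p_{X|Y=y}$, and then invoke compactness and continuity to upgrade the infimum to a minimum. For the other two claims, however, you take a genuinely different route. The paper proves monotonicity of $t\mapsto G_I(t)/t$ directly by a channel-side construction: it appends an erasure symbol so that $\tilde Y$ equals $Y$ with probability $\lambda$ and is erased otherwise, which scales both $I(X;\tilde Y)$ and $I(S;\tilde Y)$ by exactly $\lambda$ and immediately yields $G_I(\lambda t)/(\lambda t)\le G_I(t)/t$; it then observes convexity separately from the fact that the achievable set of points $(\bw_i,a_i,b_i)$ is convex and the graph of $G_I$ is its lower boundary. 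You instead establish convexity first via the standard independent time-sharing variable $Q$ with $Y=(Y_Q,Q)$, and then deduce the ratio monotonicity as a one-line corollary of convexity together with $G_I(0,p_{S,X})=0$. Both routes are correct; yours is slightly more modular and reuses attainment from part one, while the paper's erasure construction is more self-contained (it proves the ratio monotonicity without needing the minimizer to exist) and is the device that directly yields the upper bound \eqref{eq:Gupperb} used later in Lemma \ref{lem:region}. One small wording caveat in your plan: when you write ``$\Pygx^{\mathrm{T}} p_{X|Y=y}$'' you of course mean the adjoint of the $S|X$ channel, $\bP_{S|X}^{\mathrm{T}}$, as your parenthetical already acknowledges.
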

\begin{proof}
The proof is in Appendix \ref{app:PICpriv}.
\end{proof}

\begin{lem}
  \label{lem:region}
  For $0\leq t \leq H(X)$,
    \begin{equation}
      \max\{t-H(X|S),0\}\leq G_I(t,p_{S,X})\leq \frac{tI(X;S)}{H(X)}.
      \label{eq:Bounded_Region}
    \end{equation}
\end{lem}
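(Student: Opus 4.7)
The plan is to prove the two inequalities separately, one by a data-processing argument and the other by exhibiting an explicit time-sharing mapping.

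For the lower bound, I would begin from the chain rule of mutual information applied to the pair $(S,X)$ and the output $Y$:
\[
I(S,X;Y) = I(X;Y) + I(S;Y\mid X) = I(S;Y) + I(X;Y\mid S).
\]
Because $S\to X\to Y$, the conditional independence gives $I(S;Y\mid X)=0$, so the left-hand side equals $I(X;Y)$, yielding the identity $I(X;Y)=I(S;Y)+I(X;Y\mid S)$. Bounding $I(X;Y\mid S)\leq H(X\mid S)$ then produces $I(S;Y)\geq I(X;Y)-H(X\mid S)\geq t-H(X\mid S)$, which combined with the trivial bound $I(S;Y)\geq 0$ gives $G_I(t,p_{S,X})\geq \max\{t-H(X\mid S),0\}$.

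For the upper bound, I would construct a feasible $p_{Y\mid X}$ by time-sharing between revealing $X$ fully and revealing nothing. Assume first $H(X)>0$ (the case $H(X)=0$ forces $t=0$ and the bound is trivial). Set $\alpha=t/H(X)\in[0,1]$ and let $Q\sim \mathrm{Bernoulli}(\alpha)$ be independent of $(S,X)$. Define $Y=(Q,X)$ when $Q=1$ and $Y=(Q,\star)$ for a fixed symbol $\star$ when $Q=0$; this respects the Markov chain $S\to X\to Y$. A direct computation shows $I(X;Y)=\alpha H(X)=t$ and $I(S;Y)=\alpha I(S;X)=tI(S;X)/H(X)$, making this mapping feasible and certifying $G_I(t,p_{S,X})\leq tI(X;S)/H(X)$.

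There is no real obstacle here; the argument is mostly bookkeeping. The only points to be careful about are (i) invoking the Markov chain $S\to X\to Y$ at the right place to collapse $I(S,X;Y)$ to $I(X;Y)$ in the lower bound, and (ii) verifying that the time-sharing construction satisfies the Markov constraint and has the claimed mutual informations, so that it is a legitimate competitor in the infimum defining $G_I$. One could alternatively derive the upper bound from the convexity of $G_I(t,p_{S,X})$ in $t$ established in Lemma~\ref{lem:increase} --- except convexity would give the wrong direction, so the explicit time-sharing construction is what one actually needs, and the line $tI(X;S)/H(X)$ arises as the secant from $(0,0)$ to $(H(X),I(X;S))$ realized by mixing.
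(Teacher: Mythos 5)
Your proof is correct, and the lower bound is verbatim the paper's argument (chain rule plus $I(S;Y\mid X)=0$ plus $I(X;Y\mid S)\leq H(X\mid S)$). For the upper bound you re-derive the erasure/time-sharing construction explicitly; the paper instead observes that $G_I(H(X),p_{S,X})=I(X;S)$ (since $I(X;Y)=H(X)$ forces a one-to-one $p_{Y|X}$) and then invokes the already-established fact from Lemma~\ref{lem:increase} that $t\mapsto G_I(t,p_{S,X})/t$ is non-decreasing, which immediately gives $G_I(t)/t\leq G_I(H(X))/H(X)=I(X;S)/H(X)$. Since the proof of Lemma~\ref{lem:increase} is precisely the erasure construction you wrote out, the two routes are really the same argument packaged differently; the paper's version is shorter because it reuses the lemma. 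One side remark in your proposal is mistaken: convexity of $G_I$ in $t$ does \emph{not} point the wrong way. Since $G_I(0,p_{S,X})=0$ (take $Y$ constant) and $G_I(H(X),p_{S,X})=I(X;S)$, convexity gives $G_I(t)\leq \tfrac{t}{H(X)}\,I(X;S)$ for $0\leq t\leq H(X)$ directly via the secant through the endpoints, which is equivalent to the monotone-ratio property the paper actually uses. So both the explicit time-sharing and the convexity/secant argument yield the claimed bound; you do not need to reject the latter.
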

\begin{proof}
Observe that $G_I(H(X),p_{S,X})=I(X;S)$, since $I(X;Y)=H(X)$ implies that
$p_{Y|X}$ is a one-to-one mapping of $X$. The upper bound then
follows directly from \eqref{eq:Gupperb}. 

Clearly $G_I(t,p_{S,X})\geq 0$. In addition, for any $p_{Y|X}$,
\begin{align*}
    I(S;Y)&=I(X;Y)-I(X;Y|S)\\
          &\geq I(X;Y) - H(X|S)\\
          &\geq t - H(X|S),
\end{align*}
proving the lower bound. 
\end{proof}

\begin{figure}[tb]
  \centering
  \psfrag{A}[r][c]{\footnotesize $I(S;X)$}
  \psfrag{B}[cc]{\footnotesize $H(X|S)$}
  \psfrag{C}[cc]{\footnotesize $H(X)$}
  \psfrag{D}[bc]{ }
  \psfrag{X}[b][B]{$t$}
  \psfrag{Y}[c][c]{$G_I(t,p_{S,X})$}
  \includegraphics[width=0.5\textwidth]{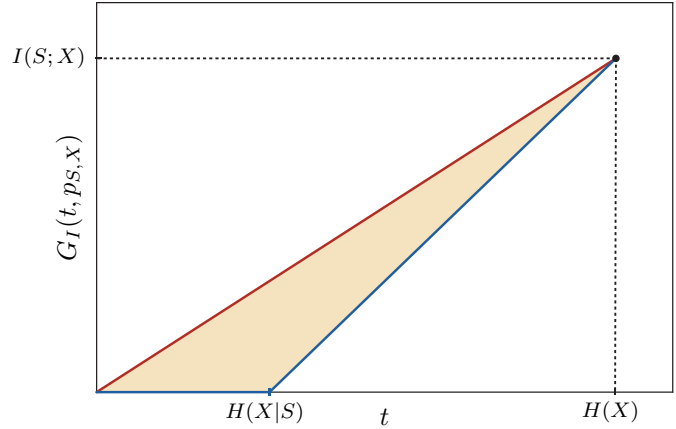}
  \caption[The Privacy Funnel.]{For a fixed $p_{S,X}$, the privacy region is contained within the
shaded area. The red and the blue lines correspond, respectively, to the upper
and lower bounds presented in Lemma \ref{lem:region}.}
  \label{fig:funnel}
\end{figure}

Figure \ref{fig:funnel} illustrates the bounds from Lemma \ref{lem:region}. The
privacy region is contained withing the shaded area. The next two examples
illustrate that both the upper bound (red line) and the
lower bound (blue line) of the privacy region can be achieved for particular
instances of $p_{S,X}$.

\begin{example}
   Let $X=(S,W)$, where $W\independent S$. Then by setting $Y=W$, we have
   $I(S;Y)=0$ and $I(X;Y)=H(W)=H(X|S)$. Consequently, from Lemmas
   \ref{lem:increase} and \ref{lem:region}, $G_I(t,p_{S,X})=0$ for
   $t\in[0,H(X|S)]$. By letting $Y=W$ w.p. $\lambda$ and $Y=(S,W)$ w.p.
   $1-\lambda$ for $\lambda\in [0,1]$, the lower-bound $G_I(t,p_{S,X})=t-H(X|S)$ can be
   achieved for $H(X|S)=H(W)\leq t\leq H(X)$. Consequently, the lower bound in
   \eqref{eq:Bounded_Region} is sharp.
\end{example}

\begin{example}
    Now let $X=f(S)$. Then $I(X;S)=H(X)$ and
    \begin{align*}
        I(S;Y)=I(X;Y)-I(X;Y|S)=I(X;Y).
    \end{align*}
    Consequently, $G_I(t,p_{S,X})=t$, and the upper bound in
    \eqref{eq:Bounded_Region} is sharp.
\end{example}   

\subsection{The Optimal Privacy-Utility Coefficient and the Smallest PIC}
\label{sec:PIC}

We now study the smallest possible ratio between disclosed private and
useful information, defined next.
\begin{defn}
  The \textit{optimal privacy-utility coefficient} for a given distribution
  $p_{S,X}$ is given by
  \begin{equation}
    \vs \defined \inf_{p_{Y|X}} \frac{I(S;Y)}{I(X;Y)}.
  \end{equation}
\end{defn}
It follows directly from Lemma \ref{lem:increase} that
\begin{equation}
    \vs = \lim_{t\to 0}  \frac{G_I(t,p_{S,X})}{t}.
\end{equation}

We show in Section \ref{sec:perfect} that the value of $\vs$ is  related to the
smallest PIC of $p_{S,X}$ (i.e. the smallest eigenvalue
of the spectrum of the conditional expectation operator, defined below).
We also prove that $\vs=0$ is a necessary and sufficient condition for
achieving perfect privacy while disclosing a non-trivial amount of useful
information. Before introducing these results, we present an alternative
characterization of $\vs$ (Lemma \ref{thm:var}), and introduce a measure based on the smallest PIC  (Definition \ref{defn:PIC}) and an auxiliary  result
(Lemma \ref{lem:delta}).
\begin{remark}
    The proofs of Lemma \ref{thm:var} and Lemma \ref{thm:Ineq1} in this section are closely related
    to \cite{anantharam_maximal_2013}. We
    acknowledge that their proof techniques inspired some of the results presented here.
\end{remark}

The next result provides a characterization of the optimal privacy-utility coefficient.

\begin{lem}
  \label{thm:var}
  Let $q_S$ denote the distribution of $S$ when $p_{S|X}$ is fixed and $X\sim
  q_X$. Then 
  \begin{equation}
    \vs =  \inf_{q_X\neq p_X} \frac{D(q_S||p_S)}{D(q_X||p_X)}.
    \label{eq:divergence}
  \end{equation}
\end{lem}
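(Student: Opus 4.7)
The plan is to establish the identity by proving two matching inequalities. For the $\geq$ direction, I will exploit the standard decomposition of mutual information as an average of KL divergences from the conditional to the marginal: for any choice of $p_{Y|X}$ we have $I(X;Y)=\sum_y p_Y(y)\, D(p_{X|Y=y}\|p_X)$ and likewise $I(S;Y)=\sum_y p_Y(y)\, D(p_{S|Y=y}\|p_S)$. Since $S\to X\to Y$ and $p_{S|X}$ is fixed, each conditional slice $p_{X|Y=y}$ plays the role of an admissible $q_X$ and induces the corresponding $q_S$ in the statement. Writing $q_X^{(y)}\defined p_{X|Y=y}$ and $q_S^{(y)}\defined p_{S|Y=y}$, the ratio $I(S;Y)/I(X;Y)$ becomes a weighted average of the ratios $D(q_S^{(y)}\|p_S)/D(q_X^{(y)}\|p_X)$, and the elementary inequality $\sum a_i / \sum b_i \geq \min_i a_i/b_i$ (valid for $a_i,b_i\geq 0$, discarding indices with $b_i=0$ since then $q_X^{(y)}=p_X$ forces $q_S^{(y)}=p_S$) yields $I(S;Y)/I(X;Y)\geq \inf_{q_X\neq p_X} D(q_S\|p_S)/D(q_X\|p_X)$. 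Taking the infimum over $p_{Y|X}$ gives one direction.

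For the $\leq$ direction, I will fix an arbitrary admissible $q_X\neq p_X$ with $D(q_X\|p_X)<\infty$ (so $q_X\ll p_X$) and exhibit a sequence of channels whose ratio $I(S;Y)/I(X;Y)$ tends to $D(q_S\|p_S)/D(q_X\|p_X)$. Specifically, take $Y\in\{0,1\}$ with $p_Y(1)=\epsilon$, $p_{X|Y=1}=q_X$, and $p_{X|Y=0}=(p_X-\epsilon q_X)/(1-\epsilon)$; for $\epsilon$ small enough (namely $\epsilon \leq \min_x p_X(x)/q_X(x)$ over the support of $q_X$) this is a valid conditional distribution and reproduces the marginal $p_X$. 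Then $I(X;Y)=\epsilon D(q_X\|p_X)+(1-\epsilon)D(p_{X|Y=0}\|p_X)$, and since $p_{X|Y=0}-p_X=\frac{\epsilon}{1-\epsilon}(p_X-q_X)$, the bound $D(p_{X|Y=0}\|p_X)\leq \sum_x (p_{X|Y=0}(x)-p_X(x))^2/p_X(x)$ shows the second term is $O(\epsilon^2)$. Hence $I(X;Y)=\epsilon D(q_X\|p_X)+O(\epsilon^2)$, and by the same argument applied through the fixed channel $p_{S|X}$, $I(S;Y)=\epsilon D(q_S\|p_S)+O(\epsilon^2)$. Letting $\epsilon\downarrow 0$ gives $\vs\leq D(q_S\|p_S)/D(q_X\|p_X)$, and infimizing over $q_X$ completes the proof.

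The main subtlety will be verifying that no $q_X$ contributing to the infimum is lost: distributions with $D(q_X\|p_X)=\infty$ give an infinite denominator and thus a ratio of zero only if $D(q_S\|p_S)$ is finite, so they cannot strictly improve the infimum without already trivializing $\vs=0$; and distributions with $D(q_X\|p_X)$ positive but $q_X$ not absolutely continuous with respect to $p_X$ can be excluded for the same reason. The rest is the chi-squared upper bound on KL divergence used to justify the $O(\epsilon^2)$ remainder, which is routine. The resulting identity exposes $\vs$ as a best constant in a local strong data-processing inequality, which is the handle used in the subsequent section to relate $\vs$ to the smallest PIC.
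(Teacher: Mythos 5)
Your proposal is correct and follows essentially the same route as the paper's own proof: the same mutual-information-as-averaged-KL decomposition together with the mediant/weighted-average inequality for the $\geq$ direction, and the same binary "$\epsilon$-perturbed" channel construction ($p_Y(1)=\epsilon$, $p_{X|Y=1}=q_X$, $p_{X|Y=0}$ chosen to reproduce the marginal $p_X$) with an $o(\epsilon)$ remainder for the $\leq$ direction. The only cosmetic differences are that you make the $\chi^2$ bound explicit to justify the $O(\epsilon^2)$ remainder (the paper just asserts $D((1-\epsilon)p_X+\epsilon r_X\|p_X)=o(\epsilon)$) and that you explicitly flag the absolute-continuity requirement $q_X\ll p_X$ needed for the construction to be valid, which the paper leaves implicit.
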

\begin{proof}
The proof is in Appendix \ref{app:PICpriv}.
\end{proof}



The smallest PIC is of particular interest
for privacy, and upper bounds the value of $\vs$.  In particular, we will be interested in the coefficient $\delta(p_{S,X})$, defined bellow

\begin{defn}
  \label{defn:PIC}
  Let  $d\defined \min\{|\calS|,|\calX|\}-1$,  and $\lambda_d(S;X)$ the smallest PIC of $p_{S,X}$. We define
  \begin{equation}
    \dps\defined 
        \begin{cases} 
          \lambda_d(S;X)&\mbox{ if } |\calX|\leq |\calS|,\\
          0&\mbox{ otherwise.}
          \end{cases}
  \end{equation}
  \end{defn}
The following lemma provides a useful characterization of $\dps$, related to the interpretation of the PICs as the spectrum of the conditional expectation operator given in Theorem \ref{thm:PIC_Charac}. This result is a direct consequence of Theorem \ref{thm:PIC_Charac}, and we present a self-contained proof in Appendix \ref{app:PICpriv}.
\begin{lem}
  \label{lem:delta}
For a given $p_{S,X}$,
  \begin{align} 
    \dps = \min\left\{ \|\EE{f(X)|S}\|_2^2\middle| ~ f:\calX\to \Reals,
    \EE{f(X)}=0, \|f(X)\|_2=1 \right\}.
\end{align}    
\end{lem}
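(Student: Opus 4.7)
The plan is to recast the minimization in terms of the conditional expectation operator $T_S:\calL_2(p_X)\to \calL_2(p_S)$ defined in \eqref{eq:Tdefn}, and then read off the answer from characterization (2) of Theorem \ref{thm:PIC_Charac}. The key identity is that $\|\EE{f(X)|S}\|_2^2 = \|T_S f\|_2^2$, and the constraint $\EE{f(X)}=0$ is precisely orthogonality (in the $\calL_2(p_X)$ inner product) to the constant principal function $f_0\equiv 1$. Since by Theorem \ref{thm:PIC_Charac}(2) the singular values of $T_S$ are $1,\sqrt{\lambda_1(S;X)},\sqrt{\lambda_2(S;X)},\dots$, with the singular value $1$ corresponding to $f_0$, the Courant--Fischer min-max characterization of singular values says that the minimum of $\|T_S f\|_2^2$ over unit-norm $f\perp f_0$ equals the square of the smallest singular value of $T_S$ restricted to the orthogonal complement of $f_0$.

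From here I would split into the two cases in the definition of $\dps$. If $|\calX|\le |\calS|$, then $\calL_2(p_X)$ has dimension $|\calX|$, and $T_S$ has exactly $|\calX|$ singular values $1,\sqrt{\lambda_1(S;X)},\dots,\sqrt{\lambda_d(S;X)}$ with $d=|\calX|-1$. Removing the one associated with $f_0$ leaves the smallest being $\sqrt{\lambda_d(S;X)}$, giving the minimum objective value $\lambda_d(S;X)=\dps$, as claimed. If instead $|\calX|>|\calS|$, then by rank--nullity $\dim\ker T_S \ge |\calX|-|\calS|\ge 1$, so there is a non-zero $f\in \calL_2(p_X)$ with $T_S f=0$. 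Such an $f$ automatically satisfies $\EE{f(X)}=\EE{\EE{f(X)|S}}=0$, so normalizing to unit norm yields a feasible point of objective value $0$; combined with the trivial lower bound $\|T_S f\|_2^2\ge 0$, this gives $\dps=0$, matching the definition.

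I do not anticipate a serious obstacle. The only delicate point is confirming that the ``trivial'' singular value $1$ of $T_S$ is attached to the $f_0$-direction (so that the zero-mean constraint excludes exactly this direction and nothing more); this is immediate from $T_S f_0 = f_0$ together with the recursive definition of the PICs in Definition \ref{def:PIC_4}, which starts the list at $f_0,g_0$ before producing $\lambda_1,\lambda_2,\dots$. Once that is noted, the two cases combine into the single formula in the statement.
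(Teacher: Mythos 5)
Your proof is correct and follows essentially the same route as the paper: both reduce the minimization to the spectrum of the conditional expectation operator, which the paper computes via the quadratic form $\fb^T\bD_X^{1/2}\bQ^T\bQ\bD_X^{1/2}\fb$ and the smallest eigenvalue of $\bQ^T\bQ$, while you work with $T_S$ via Theorem \ref{thm:PIC_Charac}(2); your version additionally makes explicit the achievability direction and the case split on $|\calX|$ vs.\ $|\calS|$ that the paper's displayed chain (which stops at ``$\geq\dps$'') leaves implicit. One small slip: $T_S f_0 = g_0$, the constant function in $\calL_2(p_S)$, not $f_0$; the fact you actually need is that $f_0$ is the \emph{right} singular vector for singular value $1$, which is correct.
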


\subsection{Information Disclosure with Perfect Privacy}
\label{sec:perfect}

If $\vs=0$, then it may be possible to disclose some information about $X$
without revealing any information about $S$. However, since $G_I(0,p_{X,S})=0$,
it is not immediately clear that $\vs=0$ implies that there exists $t$ strictly
bounded away from 0 such $G_I(t,p_{X,S})=0$.  This would represent the ideal
privacy setting, since, from Lemma \ref{lem:increase}, there would exist a
privacy-assuring mapping that allows the disclosure of some non-negligible
amount of useful information while guaranteeing $I(S;Y)=0$. This, in turn, would mean that
perfect privacy is achievable with non-negligible utility \textit{regardless of
the specific privacy metric used}, since $S$ and $Y$ would be independent.  

In this section, we prove that if the optimal privacy-utility coefficient is 0,
then there indeed exists a privacy-assuring mapping that allows the disclosure
of a non-trivial amount of useful information while guaranteeing perfect
privacy. We also show that the value of $\dps$ is closely related to $\vs$.
This relationship is analogous to the one between the hypercontractivity
coefficient $s^*$, defined in \cite{ahlswede_spreading_1976} and \cite{kamath2012non}, and the maximal
correlation $\rho_m$. In particular, as shown in the next two lemmas, $\vs
\leq \dps$ and $\vs=0 \iff \dps = 0$.  
\begin{lem}
  \label{thm:Ineq1}
  For any $p_{S,X}$ with finite support $\calS\times \calX$,
    \begin{equation}
        \vs \leq \dps.
    \end{equation}
    and 
    \begin{equation}
\inf_{p_X} \vs =
\inf_{p_X} \dps. 
    \end{equation}
\end{lem}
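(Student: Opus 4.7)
The core tool is the variational characterization of $v^*$ from Lemma~\ref{thm:var},
$$v^*(p_{S,X}) = \inf_{q_X \neq p_X} \frac{D(q_S \| p_S)}{D(q_X \| p_X)},$$
combined with a local (small-perturbation) expansion that converts this KL ratio into a Rayleigh quotient of the conditional expectation operator, whose minimum is $\delta(p_{S,X})$ by Lemma~\ref{lem:delta}.

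\textbf{First inequality.} For a zero-mean $f \in \calL_2(p_X)$ with $\|f(X)\|_2 = 1$ and $\epsilon > 0$ small, I would set $q_X^{(\epsilon)}(x) = p_X(x)(1 + \epsilon f(x))$, which is a valid pmf for $\epsilon$ sufficiently small. A second-order Taylor expansion of $u \log u$ around $u = 1$ gives
$$D(q_X^{(\epsilon)}\|p_X) = \tfrac{\epsilon^2}{2} + O(\epsilon^3).$$
Since $p_{S|X}$ acts linearly, $q_S^{(\epsilon)}(s) = p_S(s)(1 + \epsilon\,\EE{f(X)\mid S=s})$, and the zero-mean property is preserved under conditional expectation, so the same expansion yields
$$D(q_S^{(\epsilon)}\|p_S) = \tfrac{\epsilon^2}{2}\,\|\EE{f(X)\mid S}\|_2^2 + O(\epsilon^3).$$
Inserting these into Lemma~\ref{thm:var} and letting $\epsilon \to 0$ yields $v^*(p_{S,X}) \leq \|\EE{f(X)\mid S}\|_2^2$ for every admissible $f$. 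Minimizing the right side over such $f$ and invoking Lemma~\ref{lem:delta} gives $v^*(p_{S,X}) \leq \delta(p_{S,X})$. The degenerate case $|\calX| > |\calS|$ is handled by choosing a nontrivial zero-mean $f$ lying in the kernel of the conditional expectation operator, which makes the numerator $o(\epsilon^2)$ and delivers $v^* = 0 = \delta$.

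\textbf{Equality of infima.} One direction, $\inf_{p_X} v^* \leq \inf_{p_X} \delta$, is immediate. For the reverse, I would exchange the order of the two infima to write
$$\inf_{p_X} v^*(p_{S,X}) = \inf_{\substack{p_X,\, q_X \\ p_X \neq q_X}} \frac{D(q_S\|p_S)}{D(q_X\|p_X)}.$$
Given any such pair, connect them by the straight-line path $p_X^{(t)} = (1-t)p_X + t q_X$ and apply Cauchy's mean-value theorem to $F(t) \defined D(q_S \| p_S^{(t)})$ and $G(t) \defined D(q_X \| p_X^{(t)})$: since $F(1) = G(1) = 0$, there exists $t^* \in (0,1)$ with $F(0)/G(0) = F'(t^*)/G'(t^*)$. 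A direct computation identifies this derivative ratio with $\|\EE{h(X)\mid S}\|_2^2/\|h(X)\|_2^2$ taken in $\calL_2(p_X^{(t^*)})$, where $h(x) \defined (q_X(x) - p_X(x))/p_X^{(t^*)}(x)$ is zero-mean under $p_X^{(t^*)}$. Lemma~\ref{lem:delta} then bounds the ratio below by $\delta(p_X^{(t^*)} \cdot p_{S|X})$, hence by $\inf_{p_X} \delta$. Taking the infimum over pairs completes the proof.

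\textbf{Main obstacle.} The Taylor-expansion step underlying the first inequality is routine. The crux is the Cauchy mean-value argument for the reverse direction: one must verify that $G$ is strictly monotone along the path so that the MVT applies, and then carefully perform the algebraic identification of the derivative ratio with the advertised Rayleigh quotient. Edge cases in which $p_X$ or $q_X$ has vanishing entries may require restricting to the relative interior of the simplex or choosing a perturbed path, and continuity of $\delta$ as a function of the input marginal must be invoked to take limits cleanly.
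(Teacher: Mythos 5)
Your proposal is correct, and it follows a genuinely different route from the paper's. The paper works with the curvature function $g_\lambda(p_X) = H(S) - \lambda H(X)$: it computes the second directional derivative along perturbations $p_X(1+\epsilon f)$, obtains $\log_2(e)\bigl(\lambda - \|\EE{f(X)|S}\|_2^2\bigr)$, and deduces $v^*(p_{S,X}) \le \delta(p_{S,X})$ via an upper-concave-envelope argument inherited from the hypercontractivity/strong-DPI literature (a value $\lambda \le v^*$ forces $g_\lambda$ to touch its concave envelope at $p_X$, hence to have NSD Hessian there). For the equality of the infima, the paper identifies both $\inf_{p_X} v^*$ and $\inf_{p_X}\delta$ with the concavity threshold of $g_\lambda$ over the whole simplex. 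You, instead, start from the divergence-ratio characterization in Lemma~\ref{thm:var} and prove $v^*\le\delta$ by a routine second-order Taylor expansion of both divergences along the same multiplicative perturbation $p_X(1+\epsilon f)$; the two second-order coefficients are the same quantities that the paper's Hessian calculation produces, but the envelope machinery is replaced by a direct limit. For the reverse inequality on the infima, your Cauchy mean-value argument along the segment $p_X^{(t)}=(1-t)p_X+tq_X$ — with $F'(t)/G'(t) = \|\EE{h(X)|S}\|_2^2/\|h(X)\|_2^2$ under $p^{(t)}$ for $h=(q_X-p_X)/p_X^{(t)}$ — is a clean replacement for the paper's global-concavity-threshold identification, and the monotonicity of $G$ that you flag is in fact automatic since $G'(t)=-(1-t)\|h(X)\|_{2,p^{(t)}}^2<0$ for $t\in(0,1)$ whenever $q_X \neq p_X$. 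What each approach buys: the paper's argument slots directly into the $s^*$--$\rho_m^2$ parallel (Anantharam et al.) and reuses the concave-envelope vocabulary throughout the privacy section, whereas yours is more elementary, self-contained, and avoids the envelope characterization entirely at the cost of handling boundary/continuity caveats explicitly — both proofs tacitly assume $p_X$ (and the path $p_X^{(t)}$) stays in the interior of the simplex, so this is not a disadvantage unique to your route.
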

\begin{proof}
The proof is in Appendix \ref{app:PICpriv}.
\end{proof}
The next theorem proves that $\dps$ can serve as a proxy for perfect privacy,
since the optimal privacy-utility coefficient is 0 if and only if $\dps$ is also
0.

\begin{lem}
  \label{thm:iff}
  Let $p_{S,X}$ be such that $H(X)>0$ and $\calS$ and $\calX$ are finite.
  Then\footnote{If $S$ is binary, then \eqref{eq:iff} implies
that perfect privacy is achievable iff $S$ and $X$ are independent (since
$\dps=\rho_m(S;X)^2$), recovering
\cite[Thm. 2]{asoodeh_notes_2014}.} 
    \begin{equation}
      \label{eq:iff}
        \vs = 0 \iff \dps = 0.
    \end{equation}
\end{lem}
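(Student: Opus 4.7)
The implication $\dps=0\Rightarrow v^*(p_{S,X})=0$ is immediate from Lemma~\ref{thm:Ineq1}, which gives $v^*(p_{S,X})\le \dps$, combined with $v^*(p_{S,X})\ge 0$. The main task is the reverse direction $v^*(p_{S,X})=0\Rightarrow \dps=0$.

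The plan rests on a unifying observation: both conditions are equivalent to the existence of a non-zero function $f\colon\calX\to\Reals$ satisfying $\EE{f(X)}=0$ and $\EE{f(X)\mid S=s}=0$ for every $s\in\calS$. For $\dps$, this equivalence is exactly Lemma~\ref{lem:delta} when $|\calX|\le|\calS|$; when $|\calX|>|\calS|$, $\dps=0$ by definition, and such an $f$ exists by dimension counting, since $\EE{f(X)\mid S=s}=0$ for all $s$ is $|\calS|$ linear equations in $|\calX|>|\calS|$ unknowns. It therefore suffices to construct such an $f$ from the hypothesis $v^*(p_{S,X})=0$.

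Starting from $v^*(p_{S,X})=0$, I invoke Lemma~\ref{thm:var} to obtain $\{q_X^{(n)}\}$ with $q_X^{(n)}\neq p_X$ and $D(q_S^{(n)}\|p_S)/D(q_X^{(n)}\|p_X)\to 0$, where $q_S^{(n)}$ is the $S$-marginal induced by $q_X^{(n)}$ through the fixed $p_{S|X}$. By compactness of the simplex, I extract $q_X^{(n_k)}\to q_X^\star$ and split into two cases. In Case~1, $q_X^\star\neq p_X$; continuity of KL divergence gives $D(q_S^\star\|p_S)=0$, so $q_S^\star=p_S$, and $f(x)\defined(q_X^\star(x)-p_X(x))/p_X(x)$ is a non-zero function satisfying $\EE{f(X)}=\sum_x(q_X^\star(x)-p_X(x))=0$ and $\EE{f(X)\mid S=s}=(q_S^\star(s)-p_S(s))/p_S(s)=0$, as required. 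In Case~2, $q_X^\star=p_X$; write $q_X^{(n)}=p_X+\epsilon_n r^{(n)}$ with $\epsilon_n\to 0$, $\sum_x r^{(n)}(x)=0$, and normalize so that $\sum_x r^{(n)}(x)^2/p_X(x)=1$. A second-order Taylor expansion of KL divergence yields
\begin{equation*}
D(q_X^{(n)}\|p_X)=\tfrac{\epsilon_n^2}{2}+O(\epsilon_n^3),\qquad D(q_S^{(n)}\|p_S)=\tfrac{\epsilon_n^2}{2}\sum_s\frac{(Tr^{(n)})(s)^2}{p_S(s)}+O(\epsilon_n^3),
\end{equation*}
where $(Tr)(s)\defined\sum_x p_{S|X}(s|x)\,r(x)$. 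The vanishing ratio forces $\sum_s (Tr^{(n)})(s)^2/p_S(s)\to 0$. The set $\{r\in\Reals^{|\calX|}:\sum_x r(x)=0,\ \sum_x r(x)^2/p_X(x)=1\}$ is compact, so I extract a further convergent subsequence $r^{(n)}\to r^\star$ with $r^\star\neq 0$, zero sum, and $Tr^\star=0$; setting $f(x)\defined r^\star(x)/p_X(x)$ produces the desired function and yields $\dps=0$.

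I expect the main difficulty to lie in the degenerate Case~2, where the witness $q_X^{(n)}$ collapses onto $p_X$. There the infinitesimal analysis requires a careful choice of normalization on $r^{(n)}$ so that both the numerator and the denominator of the ratio admit non-degenerate limits, and control of the cubic Taylor remainders uniformly in $n$; the latter follows because the normalized $r^{(n)}$ lie in a bounded set, so the cubic terms are $O(\epsilon_n^3)$ with a constant uniform in $n$. Potential support issues, e.g.\ if $p_X(x)=0$ for some $x\in\calX$, are handled by restricting $\calX$ to $\supp(p_X)$ at the outset, since any $q_X$ with $q_X\not\ll p_X$ contributes $D(q_X\|p_X)=\infty$ and is irrelevant to the infimum in Lemma~\ref{thm:var}.
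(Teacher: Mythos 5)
Your proposal is correct, and the overall strategy mirrors the paper's: both start from Lemma~\ref{thm:var} and both pass from the vanishing KL ratio to the existence of a nonzero zero-mean function annihilated by the conditional expectation operator, which is exactly what $\dps=0$ means via Lemma~\ref{lem:delta}. The technical routes differ in how that passage is made. The paper works directly with a witnessing sequence: it applies Pinsker's inequality to the numerator to control $\|\bq_S^k-\bp_S\|_2$, forms $\bx^k=\bq_X^k-\bp_X$, and then lower-bounds $\|\bP_{S|X}\bx^k\|_2^2/\|\bx^k\|_2^2$ via the singular-value characterization of $\dps$ (through $\bQ=\bD_S^{-1/2}\bP_{S,X}\bD_X^{-1/2}$), reaching a contradiction if $\dps>0$. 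You instead extract a convergent subsequence, split on whether the limit equals $p_X$, and in the infinitesimal case do a second-order Taylor expansion of both KL divergences to reduce to the $\chi^2$ ratio $\sum_s (Tr^{(n)})(s)^2/p_S(s)$, then pass to a limit direction $r^\star$. Your Case~2 in fact makes explicit a step that the paper glosses over: concluding $\|\bP_{S|X}\bx^k\|_2^2/\|\bx^k\|_2^2\to 0$ from $\|\bP_{S|X}\bx^k\|_2\to 0$ is not automatic when $\|\bx^k\|_2$ may also vanish, and filling that gap requires precisely the reverse-Pinsker or $\chi^2$ comparison that your Taylor expansion performs. The price you pay is a case analysis, a compactness/subsequence argument, and uniform control of the cubic remainders, which you correctly note follows from the normalization $\sum_x r^{(n)}(x)^2/p_X(x)=1$. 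Two small points to nail down when writing this up: (i) after restricting $\calX$ to $\supp(p_X)$, also restrict $\calS$ to $\supp(p_S)$ so that the KL expansions and the division by $p_S(s)$ are well-defined; (ii) in the dimension-count justification of the equivalence when $|\calX|>|\calS|$, note that $\EE{f(X)}=0$ is already implied by $\EE{f(X)\mid S=s}=0$ for all $s\in\supp(p_S)$, so you really have at most $|\calS|$ homogeneous linear constraints in $|\calX|$ unknowns.
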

\begin{proof}
The proof can be found in Appendix \ref{app:PICpriv}.
\end{proof}

We are now ready to prove that a non-trivial amount of useful information can be disclosed
without revealing any private information if and only if  $\vs=0$ (or equivalently,
$\dps=0$). This result follows naturally from Theorem \ref{thm:iff}, since
$\vs=0$ implies that $\dps=0$, which means that the matrix $\bQ$ and,
consequently, $\bP_{S|X}$, is either not full rank or has more columns than rows
(i.e. $|\calX|>|\calS|$). This, in turn, can be exploited in order to find a
mapping $p_{Y|X}$ such that $Y$ reveals some information about $X$, but no
information about $S$. This argument is made precise in the next theorem.

\begin{remark}
When  $\bP_{S|X}$ is not full rank or has more columns than rows, then
$S$ and $X$ are weakly independent. As shown in \cite[Thm.
4]{berger_multiterminal_1989} and \cite{asoodeh_notes_2014}, this implies that a privacy-assuring mapping that
achieves perfect privacy while disclosing a non-trivial amount of useful
information can be found. Theorem \ref{thm:ntrivial} recovers this result in terms of the
smallest PIC, and Corollary \ref{cor:explicit} provides
an  estimate of the amount of useful information that can be revealed
with perfect privacy. \end{remark}

\begin{thm}
  \label{thm:ntrivial}
  For a given $p_{S,X}$, there exists a privacy-assuring mapping $p_{Y|X}$
  such that $S\to X\to Y$, $I(X;Y)>0$ and $I(S;Y)=0$ if and only if $\dps=0$ (equivalently
  $\vs=0$). In particular,
  \begin{equation}
    \exists t_0>0 :G_I(t_0,p_{S,X})=0 \iff \dps=0.
    \label{eq:ntrivial}
  \end{equation}
\end{thm}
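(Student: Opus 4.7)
The plan is to handle the two implications separately. For the easy direction ($\Rightarrow$), if a mapping $p_{Y|X}$ with $I(X;Y)>0$ and $I(S;Y)=0$ exists, then directly $\vs \leq I(S;Y)/I(X;Y) = 0$, so $\vs=0$, and by Lemma~\ref{thm:iff} this is equivalent to $\dps=0$. The ``in particular'' statement \eqref{eq:ntrivial} also follows: setting $t_0 = I(X;Y) > 0$ gives $G_I(t_0,p_{S,X}) \leq I(S;Y) = 0$, and conversely $G_I(t_0,p_{S,X})=0$ with $t_0>0$ produces (via the min characterization in Lemma~\ref{lem:increase}) a mapping with $I(X;Y)\geq t_0$ and $I(S;Y)=0$.

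For the nontrivial direction ($\Leftarrow$), the idea is to convert $\dps=0$ into an explicit binary privacy-assuring mapping. By Lemma~\ref{lem:delta}, the minimum defining $\dps$ is attained (finite support, compact feasible set), so $\dps=0$ yields a function $f:\calX\to\Reals$ with $\EE{f(X)}=0$, $\|f(X)\|_2=1$, and $\EE{f(X)\mid S}=0$ almost surely. I then take $\calY=\{0,1\}$ and define
\begin{equation*}
  p_{Y|X}(1\mid x) \defined \alpha f(x) + \tfrac{1}{2}, \qquad \alpha \defined \frac{1}{2\max_{x\in\calX}|f(x)|},
\end{equation*}
which is a valid conditional distribution since $\alpha f(x)+\tfrac{1}{2}\in[0,1]$.

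The verification has two parts. First, for any $s\in\calS$,
\begin{equation*}
  p_{S,Y}(s,1) = \sum_x p_{S,X}(s,x)\bigl(\alpha f(x) + \tfrac{1}{2}\bigr) = \alpha\, p_S(s)\,\EE{f(X)\mid S=s} + \tfrac{1}{2}p_S(s) = \tfrac{1}{2}p_S(s),
\end{equation*}
where the vanishing conditional expectation kills the first term. Summing over $s$ gives $p_Y(1)=\tfrac{1}{2}$, so $p_{S,Y}(s,y)=p_S(s)p_Y(y)$ for all $(s,y)$; hence $S\independent Y$ and $I(S;Y)=0$. Second, since $\|f(X)\|_2=1$ and $\EE{f(X)}=0$, $f$ must be non-constant on the support of $p_X$, so there exists $x^*$ with $p_X(x^*)>0$ and $p_{Y|X}(1\mid x^*)\neq\tfrac{1}{2}=p_Y(1)$. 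Therefore $X\not\independent Y$ and $I(X;Y)>0$. Combining with step one closes the biconditional and establishes \eqref{eq:ntrivial} with $t_0 \defined I(X;Y)$.

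The main (minor) obstacle is purely bookkeeping: one must ensure that the constructed $h(x)=\alpha f(x)+1/2$ both takes values in $[0,1]$ and is non-constant on the support of $p_X$. The first is handled by the explicit choice of $\alpha$; the second follows from $\|f(X)\|_2=1$. Everything else is a routine consequence of Lemma~\ref{lem:delta} (which packages the needed spectral content of $\bQ$), Lemma~\ref{thm:iff} (for the equivalence $\vs=0\iff\dps=0$), and Lemma~\ref{lem:increase} (for translating between the mapping existence statement and the privacy funnel function).
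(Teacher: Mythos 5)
Your proof is correct and takes essentially the same approach as the paper: the same appeal to Lemmas~\ref{lem:delta} and~\ref{thm:iff}, and the same explicit binary privacy-assuring mapping $p_{Y|X}(1\mid x) = \tfrac{1}{2}+\alpha f(x)$ with $\alpha = 1/(2\max_x|f(x)|)$; only the labeling of $\calY$ and a sign convention differ.
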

\begin{proof}
   The direct part of the theorem follows directly from the definition of $\vs$
   and Lemma \ref{thm:iff}. Assume that $\dps=0$. Then, from Lemma
   \ref{lem:delta}, there exists  $f:\calX\to \Reals$ such that
   $\|f(X)\|_2=1$, $\EE{f(X)=0}$, and $\|\EE{f(X)|S}\|_2=0$. Consequently,
   $\EE{f(X)|S=s}=0$ for all $s\in \calS$. 

   Fix $\calY=[2]$, and, for $\epsilon>0$ and $\epsilon$ appropriately small,
   \begin{equation*}
     p_{Y|X}(y|x)= 
       \begin{cases}
         \frac{1}{2}-\epsilon f(x),&y=1,\\
         \frac{1}{2}+\epsilon f(x),&y=2.  
       \end{cases}
   \end{equation*}
   Note that it is sufficient to choose
   $\epsilon=(2\max_{x\in\calX}|f(X)|)^{-1}$, so $\epsilon$ is strictly bounded
   away from 0. In addition, $p_Y(1)=1/2$. Therefore, 
   \begin{equation}
     \label{eq:t0}
   I(X;Y)=1-\sum_{x\in\calX} p_X(x)h_b\left(
     \frac{1}{2}+\epsilon f(x) \right)>0.
   \end{equation}
     Since $S\to X\to Y$,
     \begin{align*}
       p_{Y|S}(y|s)&=\sum_{x\in \calX} p_{Y|X}(y|x)p_{X|S}(x|s)\\
                   &= \sum_{x\in \calX} \left(\frac{1}{2}+(-1)^y\epsilon
                   f(x)\right) p_{X|S}(x|s)\\
                   &= 1/2 + (-1)^y\epsilon \EE{f(X)|S=s} \\
                   &=1/2,
     \end{align*}
     and, consequently, $S$ and $Y$ are independent. Then $I(S;Y)=0$, and the
     result follows.
\end{proof}

The previous result proves that if either $|\calX|>|\calS|$ or the smallest
principal inertia component  of $p_{S,X}$ is 0 (i.e. $\dps=0$), then it is
possible to achieve perfect privacy while disclosing some useful information.
In particular, the value of $t_0$ in \eqref{eq:Hess} is
lower-bounded by the expression in \eqref{eq:t0}.  We  note that this result
would not necessarily hold if $\calS$ and $\calX$ are not  finite sets.

Since $I(S;Y)=0$ implies that $S$ and $Y$ are independent, Theorem \ref{thm:ntrivial} holds not only for mutual information, but also for \textit{any} dependence measure $\calI$, defined in  Definition \ref{defn:informationMeasure}, that satisfies $\calI(X;Y)=0$ if and only if $X$ and $Y$ are independent. 
This leads to the following result.

\begin{cor}
  Let $p_{S,X}$ be given, and $\calI$ be a non-negative dependence measure (e.g. total variation or
 maximal correlation, cf. Definition \ref{defn:informationMeasure}) such that for any two random variable $A$ and $B$, 
$\calI(A;B)=0 \iff A\independent B$. Then there exists $p_{Y|X}$ such that
 $S\to X\to Y$, $\calI(X;Y)>0$ and $\calI(S;Y)=0$ if and only if $\dps=0$ .
 \end{cor}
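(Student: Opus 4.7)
The plan is to reduce the corollary to Theorem \ref{thm:ntrivial} by using the defining property of $\calI$, namely that $\calI(A;B) = 0 \iff A \independent B$. The key observation is that this property makes $\calI$ and $I$ (mutual information) interchangeable for the specific task of detecting independence: both vanish exactly when the two variables are independent, and both are strictly positive otherwise.

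For the backward (``if'') direction, I would start from $\dps = 0$. By Theorem \ref{thm:ntrivial} there exists a mapping $p_{Y|X}$ with $S \to X \to Y$ such that $I(X;Y) > 0$ and $I(S;Y) = 0$. The second equality means $S \independent Y$, so by the hypothesis on $\calI$ we get $\calI(S;Y) = 0$. The strict inequality $I(X;Y) > 0$ means $X$ and $Y$ are not independent, hence $\calI(X;Y) > 0$ by the same hypothesis. This produces the desired mapping.

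For the forward (``only if'') direction, suppose such a mapping $p_{Y|X}$ exists with $\calI(X;Y) > 0$ and $\calI(S;Y) = 0$. Invoking the hypothesis in the opposite direction, $\calI(S;Y) = 0$ yields $S \independent Y$ and hence $I(S;Y) = 0$; similarly $\calI(X;Y) > 0$ forces $X$ and $Y$ not to be independent, so $I(X;Y) > 0$. The mapping now witnesses the right-hand side of \eqref{eq:ntrivial} in Theorem \ref{thm:ntrivial}, which in turn gives $\dps = 0$.

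There is essentially no technical obstacle here: the corollary is a transparent consequence of Theorem \ref{thm:ntrivial} once one notes that the characterization of independence is the only property of mutual information actually used in the two directions. The one point worth flagging is that the corollary crucially relies on the biconditional $\calI(A;B) = 0 \iff A \independent B$ rather than on any quantitative bound, and in particular requires no convexity, DPI, or tensorization properties of $\calI$ beyond the ones already assumed in Definition \ref{defn:informationMeasure}.
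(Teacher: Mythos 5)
Your proof is correct and follows essentially the same route as the paper: both reduce to Theorem~\ref{thm:ntrivial} by observing that $\calI(X;Y)>0 \iff I(X;Y)>0$ and $\calI(S;Y)=0 \iff I(S;Y)=0$, since both measures vanish exactly at independence. The paper states this in a single line; you simply spell out the two directions.
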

 \begin{proof}
    This is a direct consequence of Theorem  \ref{thm:ntrivial}, since $\calI(X;Y)>0 \iff I(X;Y)>0$ and $\calI(S;Y)=0 \iff I(S;Y)=0$.
 \end{proof}

 \begin{remark}
As long as privacy is measured in terms of
statistical dependence (with perfect privacy implying statistical independence)
and some utility can be derived when $Y$ is not independent of $X$, then
$\dps$ fully characterizes when perfect privacy is achievable with non-trivial
utility. 
\end{remark}


We present next an explicit lower bound for the largest amount of useful
information that can be disclosed  while guaranteeing perfect privacy. The
result follows directly from the construction used in the proof of Theorem
\ref{thm:ntrivial}, and is presented in Appendix \ref{app:PICpriv}.

\begin{cor}
  \label{cor:explicit}
  For fixed $p_{S,X}$, let 
  \begin{align*}
    \calF_0\defined \left\{ f:\calX\to
    \Reals\middle|\EE{f(X)}=0,~\|f(X)\|_2= 1,~\|\EE{f(X)|S}\|_2=0 \right\}\cup w_0,
\end{align*}
  where $w_0$ is the trivial
  function that maps $\calX$ to $\{0\}$. Then $G_I(t,p_{S,X})=0$ for $t\in[0,t^*]$,
  where
  \begin{equation}
    \label{eq:tstar}
    t^*\geq1- \max_{f\in\calF_0} \EE{h_b \left(\frac{1}{2}+  \frac{f(X)}{2
    \|f\|_\infty}
  \right)}.
  \end{equation}
  Furthermore, the lower bound for $t^*$ is sharp when $\dps=0$, i.e. there
  exists a $p_{S,X}$ such that $t^*>0$ and $G_I(t,p_{S,X})=0$ if and
  only if $t\in [0,t^*]$.
\end{cor}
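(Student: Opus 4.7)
The plan is to decompose the corollary into two claims: (i) the lower bound on $t^*$, which is a direct parameterization of the construction from Theorem \ref{thm:ntrivial}, and (ii) the sharpness statement, which requires exhibiting a concrete $p_{S,X}$ whose perfect-privacy regime is a non-degenerate interval $[0,t^*]$ with $t^*>0$.

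For the lower bound, I would pick any $f \in \calF_0\setminus\{w_0\}$ (which is non-empty precisely when $\dps=0$, by Lemma \ref{lem:delta}) and apply the binary-output construction of Theorem \ref{thm:ntrivial} with $\calY=\{1,2\}$ and the largest admissible perturbation $\epsilon = (2\|f\|_\infty)^{-1}$, i.e.\ the largest value keeping $\tfrac{1}{2}\pm\epsilon f(x)$ a valid transition probability for every $x\in\calX$. Reusing the computation in that proof, $p_Y(1)=p_Y(2)=1/2$, $p_{Y|S}(y|s)=1/2$ for all $s$, so $I(S;Y)=0$, while
\[
I(X;Y) \;=\; 1 \;-\; \EE{h_b\!\left(\frac{1}{2} + \frac{f(X)}{2\|f\|_\infty}\right)}.
\]
Hence this value of $I(X;Y)$ is achievable with $I(S;Y)=0$, so $G_I(I(X;Y),p_{S,X})=0$. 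Taking the supremum over $f\in\calF_0$, combined with the fact (Lemma \ref{lem:increase}) that $G_I(\cdot,p_{S,X})$ is convex in $t$ with $G_I(0,p_{S,X})=0$ — which makes $\{t : G_I(t,p_{S,X})=0\}$ an interval $[0,t^*]$ — gives exactly the lower bound claimed in \eqref{eq:tstar}. The presence of $w_0$ ensures $\calF_0\neq \emptyset$ even when the additional constraints are infeasible, so the maximum is always well-defined.

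For sharpness, I would exhibit an explicit $p_{S,X}$ with $\dps=0$, $I(S;X)>0$, and verify that the threshold $t^*$ defined by $\{t:G_I(t,p_{S,X})=0\}=[0,t^*]$ satisfies $0<t^*<H(X)$. The lower bound just proved gives $t^*>0$ automatically. For $t^*<H(X)$, note that $G_I(H(X),p_{S,X})=I(S;X)$: any $p_{Y|X}$ attaining $I(X;Y)=H(X)$ must be a bijection of $X$, so $I(S;Y)=I(S;X)>0$. Convexity of $G_I(\cdot,p_{S,X})$ then forces $t^*<H(X)$. A concrete witness is $|\calS|=2$, $|\calX|=3$, with two distinct posterior columns $p_{X|S=0}\neq p_{X|S=1}$; then $\dps=0$ holds vacuously since $|\calX|>|\calS|$ (Definition \ref{defn:PIC}) and $I(S;X)>0$ since the posteriors differ.

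The main obstacle is the sharpness step, specifically justifying that the binary-output construction is a genuine characterization of the zero-privacy regime rather than just a lower bound. Achievability is largely bookkeeping on top of Theorem \ref{thm:ntrivial}; the subtlety is showing that for the exhibited example the threshold is neither $0$ nor $H(X)$, which reduces to verifying the two conditions $\dps=0$ and $I(S;X)>0$ simultaneously. These are compatible exactly in the ``weakly independent but not independent'' regime identified in \cite{berger_multiterminal_1989}, which the rank-deficient example above realizes in the simplest non-trivial way.
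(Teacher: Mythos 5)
Your lower bound argument is correct and essentially identical to the paper's: both reuse the binary-output construction of Theorem \ref{thm:ntrivial}, push $\epsilon$ to its largest admissible value $1/(2\|f\|_\infty)$, compute $I(X;Y)=1-\EE{h_b(\tfrac12+f(X)/(2\|f\|_\infty))}$, take the supremum over $\calF_0$, and invoke Lemma \ref{lem:increase} to conclude that the zero-privacy set is an interval.

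The sharpness step, however, has a genuine gap. You read ``sharp'' as merely ``there is a $p_{S,X}$ with $0<t^*<H(X)$,'' but what actually has to be shown is that \eqref{eq:tstar} can hold with \emph{equality}: one needs an example in which the quantity $1-\max_{f\in\calF_0}\EE{h_b(\cdot)}$ is not only a lower bound but coincides with $t^*$. Proving $t^*<H(X)$ via $G_I(H(X),p_{S,X})=I(S;X)>0$ and convexity establishes that the perfect-privacy interval is proper, but gives no \emph{matching upper bound} on $t^*$, so it says nothing about whether the inequality \eqref{eq:tstar} is tight. A generic rank-deficient $2\times 3$ example with two distinct posteriors will typically have a strict gap between the RHS of \eqref{eq:tstar} and the true $t^*$. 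The paper's proof instead picks a very specific example — $S$ a uniform bit, $X$ the output of an erasure channel with erasure probability $1/2$ on $\calX=\{1,2,3\}$ — and the function $f(x)=1$ for $x\in\{1,2\}$, $f(x)=-1$ for $x=3$. One checks $f\in\calF_0$ and, crucially, $h_b(\tfrac12+f(x)/(2\|f\|_\infty))=h_b(\{0,1\})=0$ for every $x$, so the RHS of \eqref{eq:tstar} equals $1$. The matching upper bound $t^*\le 1$ comes from the \emph{lower} bound of Lemma \ref{lem:region}, $G_I(t,p_{S,X})\ge t-H(X|S)$, together with $H(X|S)=1$. That is the piece your argument is missing: you never produce an upper bound on $t^*$ that can meet the lower bound, and you never choose $f$ so that the expectation on the right of \eqref{eq:tstar} collapses to zero.
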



The previous bound for $t^*$ can be loose, especially if  $|\calX|$ is large.
In addition,  the right-hand side of \eqref{eq:tstar} can be made arbitrarily
small by decreasing $\min_{x\in \calX} p_X(x)$.  Nevertheless, \eqref{eq:tstar}
is an explicit bound on the amount of useful information that can be disclosed
with perfect privacy.

When $S^n=(S_1,\dots,S_n)$ and $X^n=(X_1,\dots,X_n)$, where
$(S_i,X_i)\sim p_{S,X}$ are i.i.d. random variables, the next proposition states
that $\delta(p_{S^n,X^n})=\dps^n$. Consequently, as long as $\dps<1$, it is
possible to disclose a non-trivial amount of useful information while disclosing
an arbitrarily small amount of private information by making $n$ sufficiently
large. Loosely speaking, this is similar to hiding a needle in a haystack: As the number of available samples of $S$ and $X$ increases, we can use the additional randomness to better hide the private variables $S_i$.
\begin{prop}
   Let $S^n=(S_1,\dots,S_n)$ and $X^n=(X_1,\dots,X_n)$, where
$(S_i,X_i)\sim p_{S,X}$ are i.i.d. random variables. Then
\begin{equation}
    v^*(p_{S^n,X^n})\leq \delta(p_{S^n,X^n})=\dps^n.
\end{equation}
\end{prop}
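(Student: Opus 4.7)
The plan is to establish two separate claims: (i) the inequality $v^*(p_{S^n,X^n}) \leq \delta(p_{S^n,X^n})$, and (ii) the tensorization identity $\delta(p_{S^n,X^n}) = \dps^n$. The first claim requires nothing new: Lemma \ref{thm:Ineq1} was proved for an arbitrary joint distribution on a finite product space, and applying it with $S$ replaced by $S^n$ and $X$ replaced by $X^n$ (which live on the finite product sets $\calS^n\times\calX^n$) gives the inequality immediately. No further work is needed here.

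The substance of the proposition is therefore the equality $\delta(p_{S^n,X^n}) = \dps^n$, and the plan is to split on cases according to the comparison between $|\calX|$ and $|\calS|$ that appears in Definition \ref{defn:PIC}. Case 1: $|\calX|>|\calS|$. Then $|\calX|^n>|\calS|^n$, so by definition $\delta(p_{S^n,X^n})=0$; on the other hand $\dps=0$ by definition, so $\dps^n=0$ as well. Case 2: $|\calX|\leq|\calS|$. Then $|\calX|^n\leq|\calS|^n$, so $\delta(p_{S^n,X^n})$ equals the smallest PIC of $p_{S^n,X^n}$, namely $\lambda_{|\calX|^n-1}(S^n;X^n)$, while $\dps=\lambda_d(S;X)$ with $d=|\calX|-1$.

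For Case 2 the main tool is the tensorization Lemma \ref{lem:tensor}, iterated $n-1$ times (or proved directly by noting that the matrix $\bQ$ associated with $p_{S^n,X^n}$ is the $n$-fold Kronecker product of the $\bQ$ associated with $p_{S,X}$, and singular values tensorize under Kronecker products). This yields that the $|\calX|^n$ PICs of $p_{S^n,X^n}$ (including the trivial one $\lambda_0=1$) are exactly the products
\begin{equation*}
\Big\{\,\prod_{i=1}^n \lambda_{j_i}(S;X)\ :\ (j_1,\dots,j_n)\in[0,d]^n\,\Big\},
\end{equation*}
with the convention $\lambda_0(S;X)=1$. Since $0\leq\lambda_d(S;X)\leq\lambda_{j}(S;X)\leq 1$ for all $j\in[0,d]$, each product is bounded below by $\lambda_d(S;X)^n=\dps^n$, with equality when $j_1=\cdots=j_n=d$. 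Hence the smallest element of the multiset above is $\dps^n$, which by definition is $\lambda_{|\calX|^n-1}(S^n;X^n)=\delta(p_{S^n,X^n})$, completing Case 2.

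There is no real obstacle here; the only points requiring care are bookkeeping: matching the index of the smallest PIC against the size of the product alphabet so that Definition \ref{defn:PIC} applies cleanly, and checking that the smallest product is achieved by the all-$d$ tuple (which follows from monotonicity of $\lambda_k(S;X)$ in $k$). Combining (i) and the equality from (ii) yields the statement as written.
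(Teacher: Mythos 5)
Your proposal is correct and follows essentially the same route as the paper, whose entire proof is the remark that the result is a direct consequence of the tensorization property in Lemma \ref{lem:tensor}. You simply unpack that one-liner: the inequality is Lemma \ref{thm:Ineq1} applied to $p_{S^n,X^n}$, and the equality follows because tensorization makes the PICs of $p_{S^n,X^n}$ the $n$-fold products of the PICs of $p_{S,X}$, whose minimum is $\lambda_d(S;X)^n = \dps^n$ (with the $|\calX|>|\calS|$ case degenerate).
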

\begin{proof}
    The result is a direct consequence of the tensorization property of the principal inertia components, presented in Lemma \ref{lem:tensor}.
    \end{proof}

\section{Final Remarks}
\label{sec:finalremarks}

The PICs are powerful information-theoretic metrics that provide both (i) a measure of dependence between two random variables $X$ and $Y$, and (ii) a complete characterization of which functions of $X$ can be reliably estimated (in terms of mean-squared error) given an observation of $Y$. As shown here, the PICs play can be used for deriving  bounds on one-bit functions of a channel input given a channel output. Furthermore, in privacy applications, we proved that perfect privacy can be achieved if and only if the smallest PIC is zero.
The PICs were also used to derive bounds on estimation error probability. In particular, we demonstrated that the largest PIC (equivalently, the maximal correlation $\rho_m(X;Y)$) plays a key role in estimation: \[\Adv(f(X)|Y)\leq \rho_m(X;Y), \]i.e. the advantage over a random guess of estimating any function of $X$ given $Y$ is at most $\rho_m(X;Y)$.

Information theoretic security and privacy applications provide fertile ground for the use of PICs, specially when privacy is measured in terms of how well an adversary can estimate a secret (private) variable. The principal functions (cf. Definition \ref{def:PIC_4}) provide a basis  for the  finite-variance functions of a random variable, and the PICs measure the MMSE of estimating each of these functions. Consequently, the PICs provide a  characterization of which functions of $X$ can or cannot be inferred reliably (in terms of MMSE) from an observation of $Y$. This property can be used in privacy applications: For example, in order to quantify how well an adversary can estimate a private function $S=f(X)$ given a disclosed variable $Y$, it is sufficient to express $f(X)$ in terms of the principal functions of $p_{X,Y}$. The adversary's ability of correctly estimating $f(X)$ is then entirely determined by the PICs of $p_{X,Y}$.

More precisely, for $f:\calX\rightarrow \Reals$, the mean squared-error $f(X)$ given $Y$ can be expressed as
\begin{align}
  \mmse(f(X)|Y) &= \EE{f(X)^2-\EE{f(X)|Y}^2} \nonumber\\
  &= \normEuc{f(X)}^2\left( 1-\frac{\normEuc{\EE{f(X)|Y}}^2}{
    \normEuc{f(X)}^2} \right), \label{eq:MMSE}
\end{align}
Consequently, the MMSE depends on the spectrum of the conditional expectation operator $(T_Yf)(y)\defined \EE{f(X)|Y=y}$ which, in turn, is composed by the principal inertia components (cf. Theorem \ref{thm:PIC_Charac}). When   $\EE{f(X)}=0$ and $\EE{f(X)^2}=1$, one can determine functions $f_1,f_2,\dots$ as in Theorem \ref{thm:PIC_Charac}, such that $f_i$ is given by
\begin{align*}
  f_i = \argmax \left\{ \normEuc{\EE{f(X)|Y}}^2\mid \right. &
    f:\calX\rightarrow\Reals,~\EE{f(X)}=0,~\EE{f(X)^2}=1,\\
    &\left.\EE{f(X)f_j(X)}=0 \mbox{ for } 1\leq
  j\leq i-1 \right\}.
\end{align*}
Then $$\|\EE{f_i(X)|Y}\|_2^2=\lambda_i(X;Y).$$ 

It follows directly that, for any non-trivial function $f:\calX\rightarrow
\Reals$ with $\EE{f(X)}=0$,
\begin{align}
    \mmse(f(X)|Y) \geq  \normEuc{f(X)}^2\left(1-\rho_m(X;Y)^2\right),
\end{align}
with equality if $f(X)=cf_1(X)$, where $c=\normEuc{f(X)}$. Therefore, for a
fixed  variance $c$, $cf_1(X)$ is the function of $X$ that can be most reliably
estimated (in terms of mean-squared error) from all possible mappings
$\calX\rightarrow \Reals$. Furthermore,
\begin{align}
\label{eq:mmsedecomp}
  \mmse(f(X)|Y) =  \normEuc{f(X)}^2\left(1-\sum_{i} c_i^2\lambda_i(X;Y)\right),
\end{align}
where $c_i\defined \EE{f(X)f_i(X)}/\normEuc{f(X)}$ and $\sum_i{c_i^2}=1$. Consequently,
functions that are closely ``aligned'' with $f_i$ for small $i$ cannot be inferred with small mean squared-error.

In privacy applications with estimation constraints, this result sheds light on the nature of the fundamental tradeoff between privacy and utility. If $X$ and $Y$ correspond, respectively, to the input and output of a privacy-assuring mapping, then the PICs and corresponding principal functions of $p_{X,Y}$ will determine which functions (features) of $X$ remain private. If, for example, the principal functions corresponding to small PICs also span functions of $X$ that should be reliably estimated from $Y$ for utility purposes, then the privacy-assuring mapping $p_{Y|X}$ will provide an unfavorable tradeoff between privacy an utility. 

As another example, assume that we wish to design a privacy-assuring mapping where the secret $S=(h_1(X),\dots,h_t(X))$ is composed by a certain set of functions (features) $h_1,\dots,h_t$ of $X$ that are supposed to remain private. The privacy-assuring mapping $\pygx$ should then assure that the principal functions that span the subspace formed by $(h_1(X),\dots,h_t(X))$ must have small PICs.  These examples, together with the results presented here, motivate the future use of PICs to  drive the design of privacy-assuring mappings that achieve a favorable tradeoff between privacy and utility.

\section*{Acknowledgments}

The authors gratefully acknowledge Stefano Tessaro (University of California Santa Barbara), Nadia Fawaz (LinkedIn) and Yury Polyanskiy (Massachusetts Institute of Technology) for helpful and insightful discussions and feedback on the results contained in this paper. We also thank the anonymous reviewers and the Associate Editor for many helpful comments and suggestions.

\begin{appendices}

\section{Proofs from Section \ref{chap:PICs}}
\label{app:proofsPIC}

\subsection*{Lemma \ref{lem:DPI_MMSE}}
\begin{proof}
Let $f\in \calL_2(p_X)$, $\EE{f(X)}=0$ and $g\in \calL_2(Z)$, $\EE{g(Z)}=0$, $\|g(Z)\|_2=1$. Then
\begin{align*}
  \EE{f(X)g(Z)} &=\EE{ \EE{f(X)g(Z)|Y} }\\
                &\stackrel{\mathrm{(a)}}{=}\EE{ \EE{f(X)|Y}\EE{g(Z)|Y} }\\
                &\stackrel{\mathrm{(b)}}{\leq} \| \EE{f(X)|Y}\|_2 \| \EE{g(Z)|Y} \|_2\\
                &\stackrel{\mathrm{(c)}}{\leq} \sqrt{\lambda_1(Z;Y)}\| \EE{f(X)|Y}\|_2,
\end{align*}
where (a) follows from  the assumption that $X\to Y\to Z$, (b) follows from the Cauchy-Schwarz inequality, and (c) follows from characterization (3) in Theorem \ref{thm:PIC_Charac}. By choosing $g(z)=\EE{f(X)|Z=z}/\|\EE{f(X)|Z}\|_2$ and using the last inequality, we have
\begin{align*}
\EE{f(X)g(Z)}  = \EE{\EE{f(X)|Z}g(Z)}= \|\EE{f(X)|Z}\|_2\leq \sqrt{\lambda_1(Z;Y)}\| \EE{f(X)|Y}\|_2.
\end{align*}
Squaring both sides, we arrive at \eqref{eq:MMSE_DPI}.
\end{proof}

\section{Proofs from Section \ref{chap:PIC_IT}}
\label{app:proofsPIC_IT}

\subsection*{Lemma \ref{thm:AB}}
 \begin{proof}
     Let $Y^n = X^n \oplus Z^n$ for some
     $Z^n$ distributed over $\{-1,1\}^n$ and independent of $X^n$. Thus
        \begin{align*}
          \EE{\chi_\calS(Y^n)|X^n}&=\EE{\chi_\calS(Z^n\oplus X^n) \mid X^n}\\
          &=\EE{\chi_\calS(X^n)\chi_\calS(Z^n) \mid X^n}\\
          &=\chi_\calS(X^n) \EE{\chi_\calS(Z^n)},
        \end{align*}
     where the last equality follows from the assumption that $X^n\independent Z^n$. By
     letting $c_\calS =  \EE{\chi_\calS(Z^n)}$, it follows that $p_{Y^n|X^n}\in
     \calA_n$ and, consequently, $\mathcal{B}_n\subseteq \calA_n$.

     Now let $y_n$ be fixed and $\delta_{y^n}:\{-1,1\}^n\rightarrow \{0,1\}$ be given by 
        \begin{align*}
          \delta_{y^n}(x^n)=
            \begin{cases}
                1,& x^n=y^n,\\
                0,& \mbox{otherwise.}
            \end{cases}
        \end{align*}
        Since the function $\delta_{y^n}$ has Boolean inputs, it  can be expressed
        in terms of its Fourier expansion  \cite[Prop. 1.1]{odonnell_topics_2008} as
       \begin{equation*}
         \delta_{y^n}(x^n) = \sum_{\calS\subseteq [n]} \widehat{d}_\calS
         \chi_\calS(x^n)
       \end{equation*}
       for some set of coefficients $\widehat{d}_\calS\in \Reals$, $\calS\subseteq[n]$. Now let $p_{Y^n|X^n}\in \calA_n$. Observe that $p_{Y^n|X^n}(y^n|x^n) =
       \EE{\delta_{y^n}(Y^n)\mid X^n=x^n}$ and, for $z^n\in \{-1,1\}^n$,
     \begin{align*}
       p_{Y^n|X^n}(y^n\oplus z^n|x^n\oplus z^n) &= \EE{\delta_{y^n\oplus z^n}(Y^n)\mid
     X^n=x^n\oplus z^n}\\
     & =\EE{\delta_{y^n}(Y^n\oplus z^n)\mid X^n=x^n\oplus z^n}\\
     & = \EE{\sum_{\calS\subseteq [n]} \widehat{d}_\calS
         \chi_\calS(Y^n\oplus z^n)\mid X^n=x^n\oplus z^n}\\
     &=  \EE{\sum_{\calS\subseteq [n]} \widehat{d}_\calS
         \chi_\calS(Y^n)\chi_\calS(z^n)\mid X^n=x^n\oplus z^n}\\
         &\stackrel{(a)}{=}  \sum_{\calS\subseteq [n]} c_\calS \widehat{d}_\calS
         \chi_\calS(x^n\oplus z^n)\chi_\calS(z^n)\\
     &= \sum_{\calS\subseteq [n]} c_\calS \widehat{d}_\calS
         \chi_\calS(x^n)\\
         &\stackrel{(b)}{=} \EE{\sum_{\calS\subseteq [n]} \widehat{d}_\calS
         \chi_\calS(Y^n)|X^n=x^n}\\
         &= \EE{\delta_{y^n}(Y^n)\mid X^n=x^n}\\
         &=p_{Y^n|X^n}(y^n|x^n).
     \end{align*}
    Equalities $(a)$ and $(b)$ follow from the definition of $\calA_n$.
    By defining the distribution of $Z^n$ as
    $p_{Z^n}(z^n)\defined p_{Y^n|X^n}(z^n|\ones^n)$, where $\ones^n$ is the vector with
    all entries equal to 1, it follows that $Z^n=X^n\oplus Y^n$,
    $Z^n\independent X^n$ and
    $p_{Y^n|X^n}\subseteq \mathcal{B}_n$.

  \end{proof}

\subsection*{Lemma \ref{lem:zupper}}
\begin{proof}
    Let $\xb \in \calC^m(a,\bP^T)$ and $\yb \in \calC^n(b,\bP)$. Then, for $\bP$
     decomposed as $\bP=\Dx^{1/2}\bQ\Dy^{1/2}$ where $\bQ$ given in \eqref{eq:Qdefn} and denoting
    $\bSigma^{-}=\diag{0,\sigma_1,\dots,\sigma_d}$,
    \begin{align} 
      \xb^T\bP\yb & = ab + \xb^T\bD_X^{1/2}\bU\bSigma^{-}\bV^T\bD_Y^{1/2}\yb
      \nonumber\\
      & = ab+\hat{\xb}^T\bSigma^{-}\hat{\yb}, \label{eq:zExpand}            
    \end{align}
    where $\hat{\xb}\defined \bU^T\bD_X^{1/2}\xb$ and $\hat{\yb}\defined
    \bV^T\bD_Y^{1/2}\yb$. Since  $\hat{x}_1=\normEuc{\hat{\xb}}=a$ and
    $\hat{y}_1=\normEuc{\hat{\yb}}=b$, then
    \begin{align*}
      \hat{\xb}^T\bSigma^{-}\hat{\yb} &= \sum_{i=2}^{d+1}\sigma_{i-1}
      \hat{\xb}_i\hat{\yb}_i\\
      &\leq
      \sigma_1\sqrt{\left(\normEuc{\hat{\xb}}^2-\hat{\xb}_1^2
      \right)\left(\normEuc{\hat{\yb}}^2-\hat{\yb}_1^2\right)}\\
      &= \sigma_1\sqrt{(a-a^2)(b-b^2)}.
    \end{align*}
    The result  follows by noting that $\sigma_1=\rho_m(X;Y)$.
\end{proof}

\section{Proofs from Section \ref{sec:boundEP}}
\label{app:proofsPIC_EP}

\subsection*{Theorem \ref{thm:Bound}}
Consider the matrix $\bQ=\bU\bSigma\bV^T$ given in \eqref{eq:Qdefn}, and define
\begin{equation*}
\At \defined \Dx^{1/2}\bU,~\Bt \defined \Dy^{1/2}\bV.
\end{equation*}
Then 
\begin{equation}
 \bP = \At \bSigma \Bt^T,
 \label{eq:comactDecompPxy}
\end{equation} 
where $\At^T \Dx^{-1} \At=\Bt^T \Dy^{-1} \Bt = \bI$. 

It follows directly from Theorem \ref{thm:PIC_Charac} that $\At$, $\Bt$ and $\bSigma$ have the form
\begin{align}
  \At = \left[ \Px~~\bA \right],~ \Bt = \left[ \Py~\bB \right],~
  \bSigma  =
  \diag{1,\sqrt{\lambda_1},\dots,\sqrt{\lambda_d}}, \label{eq:defAB}
\end{align}
and, consequently, the joint distribution can be written as 
\begin{equation} 
  \pxy(x,y) = \px(x)\py(y)+\sum_{k=1}^d
  \sqrt{\lambda_k}b_{y,k}a_{x,k},
\end{equation}
where $a_{x,k}$ and $b_{y,k}$ are the entries of $\bA$ and $\bB$  in
\eqref{eq:defAB}, respectively.


%

Theorem  \ref{thm:Bound} follows directly from the next two lemmas.

\begin{lem}
  \label{lem:fanobound1}
Let the marginal distribution $\Px$ and the PICs
$\blambda=(\lambda_1,\dots,\lambda_d)$ be
given, where   $d=m-1$. Then
for any $\pxy\in \mathcal{R}(\Px,\blambda)$, $0\leq \alpha \leq 1$ and $0\leq \beta \leq \px(2)$
\begin{equation*} 
  P_e(X|Y) \geq 1-\beta -
  \sqrt{f_0(\alpha,\Px,\blambda)+\sum_{i=1}^m\left(\left[\px(i)-\beta\right]^+\right)^2},
\end{equation*}
where
\begin{align}
  f_0(\alpha,\Px,\blambda)=&  \sum_{i=2}^{d+1}
  \px(i)(\lambda_{i-1}+c_{i}-c_{i-1})\nonumber\\
  &+\px(1)(c_1 + \alpha) - \alpha\Px^T\Px~,
  \label{eq:f0_def}
\end{align}
and $c_i = \left[ \lambda_{i}-\alpha \right]^+$ for $i=1,\dots,d$ and
$c_{d+1}=0$.
\end{lem}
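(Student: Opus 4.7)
The plan is to reformulate the bound as an inequality on $(1-P_e(X|Y)-\beta)^2$ and then control it through a sequence of three ingredients: the MAP identity, a Cauchy--Schwarz step that introduces the positive-part structure, and the spectral decomposition of $\bP$ to expose the PICs. Writing $P_e(X|Y)=1-\sum_y\max_x\pxy(x,y)$ by optimality of MAP, I first note $\max_x\pxy(x,y)\leq\beta\py(y)+\py(y)[\max_x\pxgy(x|y)-\beta]^+$, sum over $y$, and apply Cauchy--Schwarz with the weighting $\py$ to obtain
\begin{equation*}
(1-P_e(X|Y)-\beta)^2\leq \EE{([\max_x\pxgy(x|Y)-\beta]^+)^2}\leq \sum_{x}\EE{([\pxgy(x|Y)-\beta]^+)^2},
\end{equation*}
where the second step uses $([\max_x a_x]^+)^2\leq\sum_x([a_x]^+)^2$. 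Note that since $\beta\leq\px(2)\leq\px(1)\leq 1-P_e(X|Y)$, the left side is automatically nonnegative, so squaring is benign.

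Next, I would use the identity $([a-\beta]^+)^2=(a-\beta)^2-([\beta-a]^+)^2$ together with Jensen's inequality for the convex function $t\mapsto([\beta-t]^+)^2$ and the fact that $\EE{\pxgy(x|Y)}=\px(x)$ to deduce
\begin{equation*}
\EE{([\pxgy(x|Y)-\beta]^+)^2}\leq \EE{(\pxgy(x|Y)-\beta)^2}-([\beta-\px(x)]^+)^2.
\end{equation*}
Summing over $x$ and invoking the pointwise identity $(\px(x)-\beta)^2=([\px(x)-\beta]^+)^2+([\beta-\px(x)]^+)^2$ produces the residual cancellation
\begin{equation*}
\sum_x\EE{([\pxgy(x|Y)-\beta]^+)^2}\leq \sum_{i=1}^m([\px(i)-\beta]^+)^2+\Big(\sum_x\EE{\pxgy(x|Y)^2}-\|\Px\|_2^2\Big).
\end{equation*}
Using the decomposition \eqref{eq:comactDecompPxy}--\eqref{eq:defAB} of $\bP$ and the orthonormality $\At^T\Dx^{-1}\At=\bI$, $\Bt^T\Dy^{-1}\Bt=\bI$, the bracketed term evaluates cleanly to $\sum_{k=1}^d\lambda_k v_k$ where $v_k\defined\sum_x\px(x)u_{x,k}^2$ are the $\Dx$-weighted squared norms of the columns of $\bU$.

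The remaining task is to show $\sum_{k=1}^d\lambda_k v_k\leq f_0(\alpha,\Px,\blambda)$ for every $\alpha\in[0,1]$, which is the heart of the argument. I would split each PIC as $\lambda_k=(\lambda_k\wedge\alpha)+c_k$ with $c_k=[\lambda_k-\alpha]^+$, so that $\sum_k\lambda_k v_k\leq \alpha\sum_k v_k+\sum_k c_k v_k=\alpha(1-\|\Px\|_2^2)+\sum_{k=1}^d c_k v_k$, using $\sum_{k=0}^d v_k=1$ and $v_0=\|\Px\|_2^2$. Because the first column of $\bU$ equals $\sqrt{\Px}$, the vector $(v_0,v_1,\dots,v_d)$ is the diagonal of a matrix orthogonally similar to $\Dx$; by the Schur--Horn theorem combined with Cauchy interlacing, $(v_1,\dots,v_d)$ is majorized by a sequence $(\tilde p_1,\dots,\tilde p_d)$ that interlaces $(\px(1),\dots,\px(m))$ and satisfies the partial-sum bound $\sum_{k=1}^j\tilde p_k\leq\sum_{k=1}^{j+1}\px(k)-\|\Px\|_2^2$ for $j<d$, with equality at $j=d$. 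Applying Abel summation $\sum_{k=1}^d c_k v_k=\sum_{k=1}^d(c_k-c_{k+1})S_k$ (with $c_{d+1}=0$ and $S_k=\sum_{j=1}^k v_{(j)}^{\downarrow}$), exploiting $c_k-c_{k+1}\geq 0$, and substituting the interlacing upper bounds on each $S_k$ yields a telescoping expression that rearranges, after some algebra, into exactly $f_0(\alpha,\Px,\blambda)-\alpha(1-\|\Px\|_2^2)$, completing the proof.

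The main obstacle is this last step, in which the precise shape of $f_0$ (with the couplings $\px(i)\lambda_{i-1}$, the difference weights $c_i-c_{i-1}$ and the $-\alpha\|\Px\|_2^2$ correction) must emerge from Abel-summing a Schur--Horn/Cauchy interlacing bound against the decreasing sequence $c_k$. The role of $\alpha$ is to interpolate between the crude majorization bound $\sum c_k\px(k)$ and the sharper interlaced bound involving $\px(k+1)$, and the parameter $k^*$ in \eqref{eq:kstar} will appear naturally as the threshold index at which the tighter interlacing-based pairing takes over from the uniform $\alpha$ bound; this is handled transparently in Theorem \ref{thm:Bound} by the specific construction of $\blambda_{k^*}$.
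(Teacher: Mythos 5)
Your opening chain of inequalities is correct and quite elegant: starting from the MAP identity, the Cauchy--Schwarz step $\max_x [a_x]^+ \leq (\sum_x([a_x]^+)^2)^{1/2}$, the Jensen step on $t\mapsto([\beta-t]^+)^2$, and the algebraic cancellation leading to
\begin{equation*}
(1-P_e(X|Y)-\beta)^2 \leq \sum_{i=1}^m([\px(i)-\beta]^+)^2 + \sum_{x}\EE{\pxgy(x|Y)^2}-\|\Px\|_2^2
\end{equation*}
are all valid, as is the spectral identity $\sum_{x}\EE{\pxgy(x|Y)^2}-\|\Px\|_2^2 = \sum_{k=1}^d\lambda_k v_k$ with $v_k=[\olU^T\Dx\olU]_{k,k}$. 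This is a genuinely different route from the paper: the paper decomposes $\bP_{X,\Xh}$ for a generic estimator and uses a double Cauchy--Schwarz followed by Von Neumann's trace inequality and the data processing inequality, whereas you work directly with the $\chi^2$-type quantity attached to $\pxy$ itself. Also, your reliance on Schur--Horn in place of Von Neumann is an interesting substitution. These early steps would produce a valid (and arguably cleaner) bound if the final step went through.

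The gap is in the last paragraph. Splitting $\lambda_k=(\lambda_k\wedge\alpha)+c_k$ and then discarding the deficit $\sum_k[\alpha-\lambda_k]^+v_k$ via $\lambda_k\wedge\alpha\leq\alpha$ is too lossy: the resulting claim $\alpha(1-\|\Px\|_2^2)+\sum_k c_k v_k \leq f_0(\alpha,\Px,\blambda)$ is false for large $\alpha$, and no Abel-summation on $\sum c_k v_k$ can repair it. Concretely, take $\alpha=1$ and $\lambda_k<1$ for all $k$; then every $c_k=0$, so your upper bound equals $1-\|\Px\|_2^2$, yet $f_0(1,\Px,\blambda)=\sum_{j=1}^d\px(j+1)\lambda_j+\px(1)-\|\Px\|_2^2 < 1-\|\Px\|_2^2$ whenever some $\lambda_j<1$. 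The chain simply cannot terminate at $f_0(\alpha)$ because the first inequality already overshoots it. Two repairs are available: (i) apply Von Neumann's trace inequality to $\Tr{\bLambda\,\olU^T\Dx\olU}$, pass from the diagonal $(v_k)$ to the eigenvalues $(\sigma_k)$ of $\olU^T\Dx\olU$, and then invoke the paper's linear-program/dual argument directly on $\sum_k\lambda_k\sigma_k$; or (ii) keep $\sum_k\lambda_k v_k$ intact and Abel-sum against the decrements $\lambda_k-\lambda_{k+1}\geq 0$ rather than $c_k-c_{k+1}$, using the two partial-sum constraints from Schur--Horn and interlacing with the binding one chosen at each $k$. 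Either route gives the LP optimum $f_0(\lambda_{k^*},\Px,\blambda)=\min_\alpha f_0(\alpha,\Px,\blambda)$, which is enough: since $u\mapsto 1-\beta-\sqrt{u+\cdot}$ is decreasing, a bound at the optimal $\alpha$ implies the lemma for every $\alpha\in[0,1]$.
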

    
\begin{proof}
Let $X$ and $Y$ have a joint distribution matrix $\bP$ with marginal $\px$ and
principal inertias individually bounded by $\blambda =
(\lambda_1,\dots,\lambda_d)$. We assume without loss of generality that
$d=m-1$, where $|\calX|=m$. This can always be achieved by adding inertia components equal to
0.
  
Consider $X\rightarrow Y\rightarrow \hat{X}$, where $\hat{X}$ is the estimate of $X$
from $Y$.  The mapping from $Y$ to $\hat{X}$ can be described without loss of
generality by a
$|\calY|\times|\calX|$ row stochastic matrix, denoted by $\Emat$, where the
$(i,j)$-th entry is the probability $p_{\hat{X}|Y}(j|i)$. The probability of
correct estimation $P_c$ is then
\begin{equation*} 
	P_c = 1-P_e(X|Y)=\Tr{\bP_{X,\Xh}},
\end{equation*}
where $\bP_{X,\Xh} \defined \bP \mathbf{F}$.

The matrix $\Pxxp$ can be decomposed according to \eqref{eq:comactDecompPxy},
resulting in 
\begin{align}
  P_c  = \Tr{\Dx^{1/2}\bU \Sigmat \bV^T \bD_{\Xh}^{1/2}}= \Tr{\Sigmat  \bV^T \bD_{\Xh}^{1/2} \Dx^{1/2}\bU}, \label{eq:PxxpDecomp}
\end{align}
where  
\begin{align*}
  \bU &= \left[ \Px^{1/2}~~\bu_2~\cdots~\bu_m \right],\\
  \bV &= \left[ \Pxp^{1/2}~~\bv_2~\cdots~\bv_m  \right],\\
  \Sigmat &= \diag{1,\sqrt{\lambdat_1},\dots,\sqrt{\lambdat_d}},\\
  \bD_{\Xh} & = \diag{\Pxp},
\end{align*}
and $\Ut$ and $\Vt$ are orthogonal matrices. The probability of correct
detection can be written as
\begin{align*}
  P_c &= \Px^T\Pxp + \sum_{k=2}^m\sum_{i=1}^m
  \left(\lambdat_{k-1}\px(i)\pxp(i)\right)^{1/2}u_{k,i}v_{k,i} \nonumber\\
    &= \Px^T\Pxp + \sum_{k=2}^m\sum_{i=1}^m
    \lambdat_{k-1}^{1/2} \tu_{k,i}\tv_{k,i}
\end{align*}
where $u_{k,i}=[\bu_k]_i$, $v_{k,i}=[\bv_k]_i$, $\tu_{k,i}=\sqrt{\px(i)}u_{k,i}$ and
$\tv_{k,i}= \sqrt{\pxp(i)}v_{k,i}$. Applying the Cauchy-Schwarz inequality twice, we
obtain
\begin{align}
  P_c &\leq \Px^T\Pxp + \sum_{i=1}^m  \left(\sum_{k=2}^m \tv_{k,i}^2
  \right)^{1/2}\left(\sum_{k=2}^m  \lambdat_{k-1} \tu_{k,i}^2
  \right)^{1/2}\nonumber\\
  & =  \Px^T\Pxp +  \sum_{i=1}^m \left(\pxp(i)(1-\pxp(i))\sum_{k=2}^m
  \lambdat_{k-1} \tu_{k,i}^2
  \right)^{1/2}\nonumber\\
  &\leq  \Px^T\Pxp +\left(1-\sum_{i=1}^m \pxp(i)^2 \right)^{1/2}\left(
  \sum_{i=1}^m \sum_{k=2}^m  \lambdat_{k-1} \tu_{k,i}^2 \label{eq:LeftBound}
  \right)^{1/2}.
\end{align}
Let $\olU=[\bu_2 \cdots \bu_m]$ and $\Sigmat =
\diag{\lambdat_1,\dots,\lambdat_d}$. Then
\begin{align}
  \sum_{i=1}^m \sum_{k=2}^m  \lambdat_{k-1} \tu_{k,i}^2 &= \Tr{\bSigma\olU^T
  \Dx\olU}\nonumber \\
  &\leq \sum_{k=1}^d\sigma_k \lambdat_k, \nonumber \\
  &\leq \sum_{k=1}^d\sigma_k \lambda_k. \label{eq:prodVonNeum}
\end{align}
where $\sigma_k$ are the singular values of $\olU^T \Dx\olU$. The first inequality follows from the
application of Von-Neumman's trace inequality \cite[Thm. 7.4.1.1]{horn_matrix_2012} and the fact that $\olU^T \Dx\olU$ is symmetric and positive
semi-definite. The second inequality follows by observing that the PICs satisfy the DPI and, therefore, $\lambdat_k\leq \lambda_k$.

We will now find an upper bound for \eqref{eq:prodVonNeum} by bounding the
eigenvalues $\sigma_k$. First, note that $\olU
~\olU^T =\eye-\Px^{1/2}\left(\Px^{1/2}\right)^T$ and consequently
\begin{align}
  \sum_{k=1}^d \sigma_k& = \Tr{\olU^T \Dx\olU} \nonumber \\
  & = \Tr{\Dx\left(\eye-\Px^{1/2}\left(\Px^{1/2}\right)^T\right)} \nonumber\\
  & = 1-\sum_{i=1}^m\px(i)^2~.\label{eq:sumbound}
\end{align}
Second, note that $\olU^T \Dx\olU$ is a principal submatrix of $\bU^T \Dx \bU$,
formed by removing the first row and columns of $\bU^T \Dx \bU$. It then follows
from Cauchy's interlacing theorem \cite[Theorem 4.3.17]{horn_matrix_2012} that
\begin{equation}
  \px(m)\leq \sigma_{m-1} \leq \px(m-1)\leq \dots\leq \px(2)\leq
  \sigma_1 \leq \px(1). \label{eq:cauchyLace}
\end{equation}

Combining  \eqref{eq:sumbound} and \eqref{eq:cauchyLace}, an
upper bound for \eqref{eq:prodVonNeum} can be found by solving the following
linear program
\begin{align}
  \max_{s_i}~~& \sum_{i=1}^d \lambda_i s_i \label{eq:boundLP}\\
  \mbox{subject to}~~&\sum_{i=1}^d s_i =1-\Px^T\Px, \nonumber\\
                        & \px(i+1)\leq s_i \leq
                        \px(i),~i=1,\dots,d~.\nonumber
\end{align}

Let $\delta_i \defined \px(i)-\px(i+1)$ and $\gamma_i \defined \lambda_i\px(i+1)$. The dual of \eqref{eq:boundLP} is
\begin{align}
  \min_{y_i,\mu}~~& \mu\left(\px(1)-\Px^T\Px\right)+
  \sum_{i=1}^{m-1}\delta_i y_i+\gamma_i  \label{eq:DualboundLP}\\
  \mbox{subject to}~~&y_i\geq \left[ \lambda_i-\mu\right]^+,~i=1,\dots,d~.\nonumber
\end{align}
For any given value of $\mu$, the optimal values of the dual variables $y_i$
in  \eqref{eq:DualboundLP} are
\begin{equation*}
y_i =  \left[ \lambda_i-\mu\right]^+=c_i,~i=1,\dots,d~.
\end{equation*}
Therefore the linear program \eqref{eq:DualboundLP} is equivalent to
\begin{equation}
  \min_{\mu}   f_0(\mu,\Px,\blambda), \label{eq:DualImproved}
\end{equation}
where $f_0(\mu,\Px,\blambda)$ is defined in the statement of the lemma.

Denote the solution of \eqref{eq:boundLP} by $f_P^*(\Px,\blambda)$ and of
\eqref{eq:DualboundLP} by $f_D^*(\Px,\blambda)$. It follows that
\eqref{eq:prodVonNeum} can be bounded 
\begin{align}
\sum_{k=1}^d\sigma_k \lambda_k &\leq f_P^*(\Px,\blambda)  =f_D^*(\Px,\blambda) \leq f_0(\alpha,\Px,\blambda)~\forall~\alpha
                                \in \Reals. \label{eq:boundSumProd}
\end{align}
We may consider $0\leq\alpha \leq 1$ in
\eqref{eq:boundSumProd} without loss of generality.

Using \eqref{eq:boundSumProd} to bound \eqref{eq:LeftBound}, we find
\begin{equation}
  P_c\leq \Px^T\Pxp + \left[f_0(\alpha,\Px,\blambda)\left(1-\sum_{i=1}^m
  \pxp(i)^2 \right)\right]^{1/2} \label{eq:postLPBound}
\end{equation}
The previous bound can be maximized over all possible output distributions
$\pxp$ by solving:
\begin{align}
  \max_{x_i}~~& \left[f_0(\alpha,\Px,\blambda)\left(1-\sum_{i=1}^m
  x_i^2\right)\right]^{1/2} +\sum_{i=1}^m \px(i)x_i  \label{eq:NLP} \\
  \mbox{subject to}~~&\sum_{i=1}^m x_i = 1,\nonumber\\
                        & x_i \geq 0, i=1,\dots,m~.\nonumber
\end{align}
The dual function of \eqref{eq:NLP} over the constraint $\sum_{i=1}^m x_i = 1$ is
\begin{align}
  L(\beta) &=\max_{x_i\geq 0}~~\beta+ \left[f_0(\alpha,\Px,\blambda)\left(1-\sum_{i=1}^m
  x_i^2\right)\right]^{1/2} \nonumber \\
  &\hspace{.5in} +\sum_{i=1}^m (\px(i)-\beta)x_i \nonumber \\
&=\beta + \sqrt{f_0(\alpha,\Px,\blambda) +\sum_{i=1}^m \left(
\left[\px(i)-\beta\right]^+\right)^2}. 
\end{align}
Since $L(\beta)$ is an upper bound of \eqref{eq:NLP} for any $\beta$ and, therefore,
is also an upper bound of  \eqref{eq:postLPBound}, then
\begin{equation}
P_c \leq \beta + \sqrt{f_0(\alpha,\Px,\blambda) +\sum_{i=1}^m \left(
\left[\px(i)-\beta\right]^+\right)^2}. \label{eq:Pc_anyBeta}
\end{equation}
Note that we can consider $0\leq \beta\leq \px(2)$ in \eqref{eq:Pc_anyBeta},
since $L(\beta)$ is increasing for $\beta>\px(2)$. Taking $P_e(X|Y) = 1-P_c$, the result follows.
\end{proof}

The next result tightens the bound introduced in Lemma \ref{lem:fanobound1} by
optimizing over all values of $\alpha$.

\begin{lem}
  \label{lem:f0tight}
  Let $f^*_0(\Px,\blambda)\defined\min_{\alpha} f_0(\alpha,\Px,\blambda)$ and
  $k^*$ be defined as in \eqref{eq:kstar}. Then 
  \begin{align}
    f^*_0(\Px,\blambda)=&  \sum_{i=1}^{k^*}\lambda_i \px(i)+
    \sum_{i=k^* + 1}^{m}\lambda_{i-1} \px(i)  - \lambda_{k^*} \Px^T\Px~, 
  \end{align}
  where $\lambda_m=0$.
\end{lem}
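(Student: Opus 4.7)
My plan is to exploit the structure of $f_0(\alpha,\Px,\blambda)$ as a convex, piecewise linear function of $\alpha$ and identify the breakpoint at which the minimum is achieved. First I would regroup the terms by the coefficient of each $c_i = [\lambda_i - \alpha]^+$. Setting $d = m-1$ as in Lemma~\ref{lem:fanobound1}, writing out $\sum_{i=2}^{m}\px(i)(c_i - c_{i-1})$ with the convention $c_m = 0$ and combining with $\px(1) c_1$, the coefficient of $c_i$ collapses to $\px(i) - \px(i+1)$, yielding the equivalent form
\begin{equation*}
f_0(\alpha,\Px,\blambda) = \alpha(\px(1) - \Px^T\Px) + \sum_{i=1}^{m-1}(\px(i) - \px(i+1))[\lambda_i - \alpha]^+ + \sum_{i=1}^{m-1}\lambda_i \px(i+1).
\end{equation*}
Since $\px$ is sorted in decreasing order, every coefficient $\px(i) - \px(i+1) \geq 0$, so $f_0$ is a non-negative combination of the convex piecewise linear functions $[\lambda_i - \alpha]^+$ plus an affine term, hence convex with breakpoints only at $\alpha \in \{\lambda_1,\ldots,\lambda_{m-1}\}$.

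Next I would locate the minimizer via one-sided derivatives. With the conventions $\lambda_0 := +\infty$ and $\lambda_m := 0$, differentiating on the open interval $(\lambda_{k+1},\lambda_k)$ gives
\begin{equation*}
\frac{d}{d\alpha} f_0(\alpha,\Px,\blambda) = (\px(1) - \Px^T\Px) - \sum_{i=1}^{k}(\px(i) - \px(i+1)) = \px(k+1) - \Px^T\Px,
\end{equation*}
so the subdifferential at the breakpoint $\alpha = \lambda_k$ equals the interval $[\px(k+1) - \Px^T\Px,\, \px(k) - \Px^T\Px]$. By convexity, $\alpha = \lambda_k$ minimizes $f_0$ iff this subdifferential contains $0$, i.e.\ iff $\px(k+1) \leq \Px^T\Px \leq \px(k)$. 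The definition \eqref{eq:kstar} of $k^*$ as the largest index with $\px(k) \geq \Px^T\Px$ gives precisely $\px(k^*) \geq \Px^T\Px$ and $\px(k^*+1) < \Px^T\Px$, so the minimum is attained at $\alpha^* = \lambda_{k^*}$.

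Finally I would substitute $\alpha = \lambda_{k^*}$ into the rewritten expression. Since $[\lambda_i - \lambda_{k^*}]^+$ equals $\lambda_i - \lambda_{k^*}$ for $i \leq k^*$ and $0$ otherwise, Abel summation yields
\begin{equation*}
\sum_{i=1}^{k^*}(\px(i) - \px(i+1))(\lambda_i - \lambda_{k^*}) = \sum_{i=1}^{k^*}\lambda_i \px(i) - \sum_{i=1}^{k^*}\lambda_i \px(i+1) - \lambda_{k^*}(\px(1) - \px(k^*+1)).
\end{equation*}
Adding the affine term $\lambda_{k^*}(\px(1) - \Px^T\Px)$ and the constant tail $\sum_{i=1}^{m-1}\lambda_i \px(i+1)$, the $\lambda_{k^*}\px(1)$ contributions cancel and $\sum_{i=1}^{k^*}\lambda_i \px(i+1)$ is absorbed into the tail, leaving $\sum_{i=1}^{k^*}\lambda_i \px(i) + \lambda_{k^*}\px(k^*+1) + \sum_{i=k^*+1}^{m-1}\lambda_i \px(i+1) - \lambda_{k^*}\Px^T\Px$; reindexing the last sum via $j = i+1$ merges the isolated term $\lambda_{k^*}\px(k^*+1)$ into $\sum_{i=k^*+1}^{m}\lambda_{i-1}\px(i)$, producing the claimed formula. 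The main obstacle is the routine index bookkeeping in the Abel summation step, together with a boundary check for the degenerate case $k^* = m$ (which forces $\px$ to be uniform and makes the second sum empty under the convention $\lambda_m := 0$); these are verifications rather than new ideas.
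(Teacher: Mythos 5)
Your proof is correct and follows essentially the same approach as the paper: both recognize that $f_0$ is a convex, piecewise-linear function of $\alpha$ with breakpoints at the $\lambda_i$, and both locate the minimizer $\alpha=\lambda_{k^*}$ by inspecting when the slope changes sign. The only substantive difference is bookkeeping: you first collapse $f_0$ into the canonical form $\alpha(\px(1)-\Px^T\Px)+\sum_i(\px(i)-\px(i+1))[\lambda_i-\alpha]^+ + \text{const}$ and then use a one-line subdifferential argument, whereas the paper works piecewise on each interval $[\lambda_k,\lambda_{k-1}]$ directly via \eqref{eq:f0_fixaplha}; your regrouped form makes the convexity and the sorting hypothesis $\px(i)\geq\px(i+1)$ more transparent. (Incidentally, your derivation correctly identifies the condition $\px(k+1)\leq\Px^T\Px\leq\px(k)$; the paper's statement of this condition contains an apparent typo, writing $\px(k-1)\leq\Px^T\Px$ where $\px(k+1)\leq\Px^T\Px$ is intended.)
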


\begin{proof}
  Let $\Px$ and $\blambda$ be fixed, and $\lambda_k \leq \alpha \leq
  \lambda_{k-1}$, where we define for ease of notation $\lambda_0\defined 1$ and $\lambda_m \defined 0$ (recall that the PICs correspond to $\lambda_1,\dots,\lambda_{m-1}$).  Then $c_i =\lambda_i-\alpha$
  for $1\leq i\leq k-1$ and $c_i = 0$ for $k\leq i \leq d$ in \eqref{eq:f0_def}.
  Therefore
  \begin{align}
    f_0(\alpha,\Px,\blambda)=&\sum_{i=1}^{k-1} \lambda_i \px(i) +
    \alpha \px(k) 
    + \sum_{i=k+1}^m \lambda_{i-1}\px(i)-\alpha\Px^T\Px. \label{eq:f0_fixaplha}
  \end{align}

Note that \eqref{eq:f0_fixaplha} is convex in $\alpha$, and is decreasing
when $\px(k)-\Px^T\Px\leq 0$ and increasing when   $\px(k)-\Px^T\Px\geq 0$.
Therefore,  $f_0(\alpha,\Px,\blambda)$ is minimized when $\alpha=\lambda_{k}$ such
that $\px(k)\geq \Px^T\Px$ and $\px(k-1)\leq \Px^T\Px$. If $\px(k)-\Px^T\Px\geq
0$ for all $k$ (i.e. $\px$ is uniform), then we can take $\alpha=0$. Theorem \ref{thm:Bound} follows directly.
\end{proof}

\subsection*{Theorem \ref{thm:schur}}
\begin{proof}
Consider two probability distributions $\px$ and $\qx$ defined over $\calX=\{1,\dots,m\}$, and assume that $\px$ majorizes $\qx$, i.e. $\sum_{i=1}^k \qx(i)\leq \sum_{i=1}^k \px(i)$ for $1\leq k \leq m$. Therefore $q_X$ is a convex combination of permutations of $\px$ \cite{marshall_inequalities:_2011}, and can be written as $q_X  = \sum_{i=1}^l a_i  (\px\circ \pi_i)$ for some $l\geq 1$, where $a_i\geq 0$, $\sum a_i = 1$ and $\pi_i$ is a permutation of $\px$, i.e. $ \px\circ \pi_i = p_{\pi_i(X)}$. Hence, for a fixed $\pxhgx$:
\begin{align}
\calI(\qx,\pxhgx)&=\calI\left(  \sum_{i=1}^l a_i (\px\circ \pi_i) ,\pxhgx\right) \nonumber\\
             &\geq \sum_{i=1}^l a_i \calI(\px\circ \pi_i,\pxhgx),\nonumber \\
             & =  \sum_{i=1}^l a_i \calI( \px,\pi_i\circ \pxhgx), \label{eq:Iconvex}
\end{align}
where the inequality follows from the concavity assumption and the last equality from $\calI(X;\Xh)$
being invariant to one-to-one mappings of $X$ and $\Xh$. Consequently, from the definition of error-rate function in   Defn. \ref{defn:ei}, 
\begin{align*}
  e_\calI(\qx,\theta)&= \inf_{\pxhgx}\left\{ \sum_{x,x'\in [m]} d_H(x,x')q_X(x)
                        \pxhgx(x'|x) \middle| \calI( q_X,\pxhgx)\leq \theta \right\} \\
                    &\stackrel{\mathrm{(a)}}{\geq} \inf_{\pxhgx}\left\{ \sum_{i\in [l]}a_i\sum_{x,x'\in [m]} d_H(\pi_i(x),x')p_X(x)\pxhgx(x'|\pi_i(x)) \middle|  \sum_{i\in [l]} a_i \calI( p_X,\pi_i\circ \pxhgx)\leq \theta \right\} \\
                    &\stackrel{\mathrm{(b)}}{=} \inf_{\pxhgx}\left\{ \sum_{i\in [l]}a_i\sum_{x,x'\in [m]} d_H(\pi_i(x),\pi_i(x'))p_X(x)\pxhgx(\pi_i(x')|\pi_i(x)) \right.\\
                    &\hspace{2in}\left. \middle|  \sum_{i\in [l]} a_i \calI( p_X,\pi_i\circ \pxhgx\circ \pi_i)\leq \theta \right\} \\
                    &\stackrel{\mathrm{(c)}}{\geq} \inf_{\pxhgx^1,\dots,\pxhgx^l}\left\{ \sum_{i\in [l]}a_i\sum_{x,x'\in [m]} d_H(x,x')p_X(x)\pxhgx^i(x|x') \middle|  \sum_{i\in [l]} a_i \calI( p_X,\pxhgx^i)\leq \theta \right\} \\
                    & \stackrel{\mathrm{(d)}}{=} \inf_{\theta_1,\dots,\theta_l\geq 0} \left\{ \sum_{i=1}^l a_i e_\calI(\px,\theta_i) \middle| \sum_{i=1}^l a_i\theta_i = \theta \right\}\\
  &\stackrel{\mathrm{(e)}}{\geq} \inf_{\theta_1,\dots,\theta_l\geq 0} \left\{e_\calI
\left(\px, \sum a_i \theta_i \right) \middle| \sum_{i=1}^l a_i\theta_i = \theta
\right\}\\
    &= e_\calI\left(\px, \theta \right),
\end{align*}
where inequality (a) follows from \eqref{eq:Iconvex},  (b) follows from the fact that the infimum is taken over all mapping $\pxhgx$ and that $\calI(X;\Xh)$ is invariant to one-to-one mappings of $X$ and $\Xh$,  (c) follows by allowing a mapping $\pxhgx^i$ to be independently minimized for each $i$ (as opposed to minimizing the same mapping $\pxhgx$ for all $i$),  (d) is obtained by noting that the optimal choice of $\pxhgx^i$ is the one that minimizes the Hamming distortion $d_H$ for a given upperbound on $\calI(\px,\pxhgx^i)$, and (e) follows from the convexity of $e_I(\px,\theta)$ in $\theta$. 
 Since this holds for any $\qx$ that is majorized by $\px$, $ e_\calI(\px,\theta)$ is Schur-concave. 
\end{proof}

\section{Proofs from Section \ref{chap:PIC_Priv}}
\label{app:PICpriv}
\subsection*{Lemma \ref{lem:increase}}
\begin{proof}
  Let $p_{S,X}$ and $p_{Y|X}$ be given, with $S\rightarrow X \rightarrow Y$. Denote by $\bw_i$ the vector in the $|\calX|$-simplex with entries
  $p_{X|Y}(\cdot|i)$. Furthermore, let $a_i \defined H(X)-H(X|Y=i)$, and
  $b_i\defined H(S)-H(S|Y=i)$. Therefore
    \begin{equation}
    \label{eq:points}
     \sum_{i=1}^{|\calY|} p_Y(i)\left[\bw_i,a_i,b_i\right] =
     \left[\Px,I(X;Y),I(S;Y)\right].
    \end{equation}
 Since $\bw_i$ belongs to the $|\calX|$-simplex, the vector $\left[\bw_i,a_i,b_i\right]$ is taken from a connected, compact $|\calX|+1$ dimensional space. Then, from Fenchel-Eggleston strengthening of Carath\'eodory's theorem \cite[Theorem 18, pg. 35]{eggleston_convexity_2009},  the point $\left[\Px,I(X;Y),I(S;Y)\right]$ can also be achieved by at most $|\calX|+1$ non-zero values of $p_Y(i)$. It follows directly that it is sufficient to consider $|\calY|\leq |\calX|+1$ for the mappings that approach the infimum  $G_I(t,p_{S,X})$ in \eqref{eq:defGI}. The set of all mappings $p_{Y|X}$ for $|\calY|\leq |\calX|+1$ is compact, and both $p_{Y|X}\to I(S;Y)$ and $p_{Y|X}\to I(X;Y)$  are continuous and bounded when $S$, $X$ and $Y$ have finite support. Consequently, the infimum in \eqref{eq:defGI} is attainable.

 For $0<t\leq H(X)$ and $p_{S,X}$ fixed, let $G_I(t,p_{S,X})=\alpha$. From the discussion above, there exists $p_{Y|X}$ that achieves $I(S;Y)= \alpha$ for
  $I(X;Y)\geq t$. Now consider $p_{\Yt|X}$ where $\tilde{\calY}=[|\calY|+1]$ and,
  for $0< \lambda \leq 1$,
  \begin{equation*}
    p_{\Yt|X}(y|x)=(1-\lambda)\ones_{\{y=|\calY|+1\}}+\lambda\ones_{\{y\neq|\calY|+1\}}
    p_{Y|X}(y|x).
  \end{equation*}
  Note that  $\Yt$ can be understood as an erased version of $Y$, with the erasure symbol being
  $|\calY|+1$. It follows  (cf. \cite[Sec. 7.1.5]{cover_elements_2006}) that $I(S;\Yt)=\lambda I(S;Y)= \lambda
  \alpha$ and  $I(X;\Yt)=\lambda I(X;Y)\geq \lambda t$. We have thus explicitly constructed a new mapping $p_{\Yt|X}$ that satisfies $S\to X\to \tilde{Y}$ and achieves $I(S;\tilde{Y})=\lambda \alpha$ and $I(X;\tilde{Y})\geq \lambda t$. Therefore, from the definition of $G_I$ in \eqref{eq:defGI}, $G_I(\lambda t,p_{S,X})\leq \lambda \alpha=\lambda I(S;Y)$. Consequently, 
  \begin{equation}
  \label{eq:Gupperb}
    \frac{G_I(\lambda t,p_{S,X})}{\lambda t}\leq \frac{\lambda I(S;Y)}{\lambda t}
    = \frac{G_I(t,p_{S,X} )}{t}.
  \end{equation}
  Since this holds for any $0<\lambda\leq 1$, then $\frac{G_I(t,p_{S,X} )}{t}$ is non-decreasing in $t$. Finally, for a fixed $p_{S,X}$, the set of points $(\bw_i,a_i,b_i)\in \Reals^{|\calX|+2}$ that satisfies \eqref{eq:points} is convex, and thus, for a fixed $\bp_X$, it's lower-boundary, which corresponds to the graph of $(t,G_I(t,p_{S,X}))$, is convex.
\end{proof}

\subsection*{Lemma \ref{thm:var}}
\begin{proof}
     For fixed $p_{Y|X}$ and $p_{S,X}$, and assuming $I(X;Y)>0$,
    \begin{align*}
      \frac{I(S;Y)}{I(X;Y)}& = \frac{\sum_{y\in \calY} p_Y(y)D(p_{S|Y=y}||p_S)
       }{\sum_{y\in \calY} p_Y(y)D(p_{X|Y=y}||p_X) } \\
       &\geq \min_{\substack{y\in \calY:\\D(p_{X|Y=y}||p_X) >0}}  \frac{D(p_{S|Y=y}||p_S)
       }{D(p_{X|Y=y}||p_X) }\\
       &\geq \inf_{q_X\neq p_X} \frac{D(q_S||p_S)}{D(q_X||p_X)}.
    \end{align*}

   Now let $d^*$ be the infimum in the right-hand side of
   \eqref{eq:divergence}, and $q_X$ satisfy \[\frac{D(q_Y||p_Y)}{D(q_X||p_X)}=d^*
   +\delta,\] where $\delta>0$. For $\epsilon>0$ and sufficiently small, let
   $p_{Y|X}$ be such that $\calY=[2]$, $p_Y(1)=\epsilon$, $p_{X|Y}(x|1)=q_X(x)$
   and \[p_{X|Y}(x|2)
   =\frac{1}{1-\epsilon}p_X(x) - \frac{\epsilon}{1-\epsilon} q_X(x).\]  Since for
   any distribution $r_X$ with support $\calX$ we have $D\left( (1-\epsilon)p_X+\epsilon
   r_X||p_X \right)=o(\epsilon)$, we find 
   \begin{align*}
        I(S;Y)&= \epsilon D(p_{S|Y=1}||p_S) + (1-\epsilon) D(p_{S|Y=0}||p_S)  \\
              &=  \epsilon D(q_S||p_S)  +o(\epsilon),        
   \end{align*}
   and equivalently, $I(X;Y)= \epsilon D(q_X||p_X)  +o(\epsilon)$. Consequently,
   \begin{equation*}
   \frac{I(S;Y)}{I(X;Y)}=\frac{\epsilon D(q_S||p_S)  +o(\epsilon) }{\epsilon
   D(q_X||p_X) + o(\epsilon) } \to d^*+\delta,
 \end{equation*}
   where the limit is taken as $\epsilon\to 0$. Since this holds for any
   $\delta>0$, then $\vs\leq d^*$, proving the result. 
\end{proof}

\subsection*{Lemma \ref{lem:delta}}
\begin{proof}
  Let $f:\calX\to \Reals$, $\EE{f(X)}=0$ and
  $\|f(X)\|^2_2$=1, and $\fb\in \Reals^{|\calX|}$ be a vector with entries $f_i
  = f(i)$ for $i\in \calX$. Observe that 
    \begin{align*}
      \|\EE{f(X)|S}\|_2^2&=\sum_{s\in \calS} p_S(s)\EE{f(X)|S=s}^2\\
                         &= \fb^T \bP_{X|S}^T \bD_S \bP_{X|S} \fb^T\\
                         &= \fb^T \bD_X^{1/2} \bQ^T \bQ \bD_X^{1/2}\fb\\
                         &\geq \dps,
    \end{align*}
    where the last inequality follows by noting that $\bx\defined \fb^T
    \bD_X^{1/2} $ satisfies $\|\bx\|_2=1$ and that $\dps$ is the smallest
    eigenvalue of the positive semi-definite matrix $\bQ^T\bQ$, where $\bQ$ was defined in Definition \ref{def:PIC_4} as $\bQ\defined \bD_S^{-1/2} \bP_{X,S}\Dx^{-1/2}$. 
\end{proof}

\subsection*{Lemma \ref{thm:Ineq1}}
\begin{proof}
  Let $p_{S|X}$ be fixed, and define \[g_\lambda(p_X)\defined H(S)-\lambda
  H(X),\] where $H(S)$ and
    $H(X)$ are the entropy of $S$ and $X$, respectively, when $(S,X)\sim
    p_{S|X}p_X$. For $0<\epsilon\ll 1$, let \[p_\epsilon(i)\defined p_X(i)(1+\epsilon
    f(i))\] be a perturbed version of $p_X$, where $\EE{f(X)}=0$ and, w.l.o.g.,
    $\|f(X)\|_2=1$. The second derivative of $g_\lambda(p_\epsilon)$ at
    $\epsilon=0$ is\footnote{This was observed in \cite{anantharam_maximal_2013} and 
    \cite{kamath2012non}, and follows directly from $-\frac{\partial^2}{\partial
    \epsilon^2}a(1+b\epsilon)\log_2 a(1+b\epsilon)=-b^2a\log_2(e)$.}
    \begin{align}
       \frac{\partial^2 g_\lambda(p_\epsilon) }{ \partial
       \epsilon^2}\bigg|_{\epsilon=0}&=\log_2(e)\left( -\|\EE{f(X)|S} \|_2^2+\lambda\|f(X)\|_2^2
       \right) \nonumber \\
                   &=\log_2(e)\left( -\|\EE{f(X)|S} \|_2^2+\lambda\right).
                   \label{eq:Hess}
    \end{align}
    Thus, from Lemma \ref{lem:delta}, if $\lambda \leq \dps$ then for any
    sufficiently small perturbation of $p_X$, \eqref{eq:Hess} will be
    non-positive. Conversely, if $\lambda>\dps$, then we can find a perturbation
    $f(X)$ such that \eqref{eq:Hess} is positive. Therefore,
    $g_{\lambda}(p_X)$ has a negative semi-definite Hessian if and only if
    $0\leq \lambda\leq \dps$. 
    
    For any $S\to X\to Y$, we have $I(S;Y)/I(X;Y)\geq \vs$, and, consequently,
    for  $0\leq \lambda^\dag\leq \vs$, 
    \begin{align*}
        g_{\lambda^\dag}(p_X) \geq H(S|Y)-\lambda^\dag H(X|Y),
      \end{align*}
    and  $g_{\lambda^\dag}(p_X)$ touches the upper-concave envelope of
    $g_{\lambda^\dag}$ at $p_X$. Since a function must be concave at the points where it matches its concave envelope, $g_{\lambda^\dag}$ has a negative
    semi-definite Hessian at $p_X$ and, from \eqref{eq:Hess}, $\lambda^\dag \leq
    \dps$. Since this holds for any $0\leq \lambda^\dag\leq \vs$, we find $\vs \leq
    \dps$. 

      For a fixed $p_{S|X}$, the function $g_{\lambda}(p_X)$ is concave when $\lambda=0$ and convex when
$\lambda=1$. Consequently, the maximum $\lambda$
for which $g_{\lambda}(p_X)$ has a negative Hessian at $p_X$ is $\dps$.
Furthermore, Lemma \ref{thm:var} implies that a value $\lambda_1$ for which
$g_{\lambda}(p_X)$ touches its lower concave envelope at $p_X$ for all $\lambda_1 \ge \lambda$ is
$\vs$.  Therefore, both $\inf_{p_X} \vs $ and $\inf_{p_X} \dps$ equal the
maximum value of $\lambda$ such that the function $g_{\lambda}(p_X)$ is concave
at all values of $p_X$.
Therefore, we established that for a given $p_{S|X}$,  \[\inf_{p_X} \vs =
\inf_{p_X} \dps. \]
  \end{proof}

\subsection*{Lemma \ref{thm:iff}}
\begin{proof}
    Theorem \ref{thm:Ineq1} immediately gives $\dps=0\Rightarrow \vs=0$. Let $\vs=0$. Then,
    since $D(q_X||p_X)\leq -\min_{i\in \calX} \log_2 p_X(i)$ and $\calX$ is
    finite,  Lemma \ref{thm:var} implies that for any $\epsilon> 0$ there exists
    $q_X$ and $0<\delta\leq  -\min_{i\in \calX} \log_2 p_X(i)$ such
    that \[D(q_X||p_X)\geq \delta>0\] and \[D(q_S||p_S)<\epsilon.\] We can then
    construct a sequence $q_X^1,q_X^2,q_X^3,\dots$ such that $q_X^i\neq p_X$,
    $D(q_S^k||p_S)\leq \epsilon_k$ and \[\lim_{k \to \infty}\epsilon_k=0.\] Let
    $\bq_S^k$ be a vector whose entries are $q_S^k(\cdot)$. Then, from Pinsker's
    inequality,
    \begin{align}
      \label{eq:epk}
      \epsilon_k \geq \frac{1}{2} \|\bq_S^k-\bp_S\|_1^2\geq \frac{1}{2}
      \|\bq_S^k-\bp_S\|_2^2.
    \end{align}
    Defining $\bx^k =\bq_X^k-\bp_X$, observe that $0<\|\bx^k\|_2^2\leq 2$ and,
    from \eqref{eq:epk}, $\|\bP_{S|X} \bx^k\|_2 \leq \sqrt{2\epsilon_k}$.
    Hence,
    \begin{equation}
      \label{eq:limit}
      \lim_{k\to \infty} \frac{\|\bP_{S|X} \bx^k\|_2^2  }{\|\bx^k\|_2^2}=0.
    \end{equation}
    In addition, denoting $s_m\defined \min_{s\in \calS}p_S(s)$ and $x_M\defined
    \min_{x\in \calX}p_X(x)$,  for each $k$ we have
    \begin{align}
        \frac{\|\bP_{S|X} \bx^k\|_2^2  }{\|\bx^k\|_2^2} &\geq
        \min_{\|\by\|_2^2> 0}  \frac{\|\bP_{S|X} \by\|_2^2
      }{\|\by\|_2^2} \nonumber \\
      &= \min_{\|\by\|_2^2> 0}  \frac{\|\bP_{S,X}\bD_X^{-1/2} \by\|_2^2
    }{\|\bD_X^{1/2}\by\|_2^2} \label{eq:deriv1}\\
    &\geq  \min_{\|\by\|_2^2> 0}  \frac{s_m\|\bD_S^{-1/2}\bP_{S,X}\bD_X^{-1/2} \by\|_2^2
    }{x_M\|\by\|_2^2} \label{eq:deriv2}\\
    &= \frac{s_m}{x_M} \min_{\|\by\|_2^2> 0}  \frac{\|\bQ \by\|_2^2
    }{\|\by\|_2^2}  \label{eq:deriv3} \\
    &= \frac{s_m\dps}{x_M} \label{eq:deriv4}.
    \end{align}
    In the derivation above, \eqref{eq:deriv1} follows from $\bD_X$ being invertible (by
    definition), \eqref{eq:deriv2} is a direct consequence of
    $\|\bD_S^{-1/2}\by\|_2^2\leq s_m^{-1}\|\by\|_2^2$ and
    $\|\bD_X^{1/2}\by\|_2^2\leq x_M\|\by\|_2^2$ for any $\by$,
    and \eqref{eq:deriv3} and \eqref{eq:deriv4}  follow from the definition of
    $\bQ$ and  $\dps$, respectively. Combining \eqref{eq:deriv4}  with
    \eqref{eq:limit}, it follows that $\dps=0$, proving the desired result.
  \end{proof}

\subsection*{Corollary \ref{cor:explicit}}
\begin{proof}
 If $\dps=0$, then the lower bound for $t^*$ follows  from the construction used in \eqref{eq:t0} and, more specifically, by (i) maximizing the right-hand side of \eqref{eq:t0} across all functions in $\calF_0$ and (ii) observing that the maximum value of $\epsilon$ such that $p_{Y|X}$ is non-negative is $\epsilon=1/2\|f\|_\infty$. If
    $\dps>0$, then $\calF_0$ is singular (i.e. $\calF_0=\{w_0\}$), and the lower bound \eqref{eq:tstar}
    reduces to the trivial bound $t^*\geq 0$.

    In order to prove that the lower bound is sharp, consider $S$ being an unbiased bit,
    drawn from $\{1,2\}$,
    and $X$ the result of sending $S$ through an erasure channel with erasure
    probability $1/2$ and $\calX=\{1,2,3\}$, with $3$ playing the role of the
    erasure symbol. Let 
    \begin{equation*}
        f(x)\defined
        \begin{cases}
            1, &x\in\{1,2\},\\
            -1 &x=3.
        \end{cases}
    \end{equation*}
    Then $f\in \calF_0$, $h_b \left(\frac{1}{2}+  \frac{f(x)}{2 \|f\|_\infty}
    \right)=0$ for $x\in\calX$ and $t^*=1$. But, from Lemma \ref{lem:region}, $t^*\leq
    H(X|S)=1$. The result follows.
\end{proof}

\end{appendices}

\newpage
\bibliography{references}  
\bibliographystyle{IEEEtran}

\end{document}